\newtheorem{theorem}{Theorem}
\newtheorem{lemma}{Lemma}
\newtheorem{assumption}{Assumption}
\newcommand{\T}{\top}
\newcommand{\dif}{\ensuremath{\mathrm{d}}}
\begin{document}

\title{Iterated Extended Kalman Smoother-Based Variable Splitting for $L_1$-Regularized State Estimation}

\author{Rui Gao,
        Filip Tronarp,
        and Simo~S\"arkk\"a,~\IEEEmembership{Senior Member,~IEEE}
\thanks{
R.~Gao, F. Tronarp and S.~S\"arkk\"a are with the Department of Electrical Engineering and Automation, Aalto University, Espoo, 02150 Finland. E-mail: \{rui.gao, filip.tronarp, simo.sarkka\}@aalto.fi).
}}

\markboth{Journal of \LaTeX\ Class Files,~Vol.~14, No.~8, August~2015}
{Shell \MakeLowercase{\textit{et al.}}: Bare Demo of IEEEtran.cls for IEEE Journals}

\maketitle

\begin{abstract}
In this paper, we propose a new framework for solving state estimation problems with an additional sparsity-promoting $L_1$-regularizer term. We first formulate such problems as minimization of the sum of linear or nonlinear quadratic error terms and an extra regularizer, and then present novel algorithms which solve the linear and nonlinear cases. The methods are based on a combination of the iterated extended Kalman smoother and variable splitting techniques such as alternating direction method of multipliers (ADMM). We present a general algorithmic framework for variable splitting methods, where the iterative steps involving minimization of the nonlinear quadratic terms can be computed efficiently by iterated smoothing. Due to the use of state estimation algorithms, the proposed framework has a low per-iteration time complexity, which makes it suitable for solving a large-scale or high-dimensional state estimation problem. We also provide convergence results for the proposed algorithms. The experiments show the promising performance and speed-ups provided by the methods. 
\end{abstract}

\begin{IEEEkeywords}
State estimation, sparsity, variable splitting, iterated extended Kalman smoother (IEKS), alternating direction method of multipliers (ADMM) 
\end{IEEEkeywords}

\IEEEpeerreviewmaketitle

\section{Introduction}
\label{sec:itroduction}
\IEEEPARstart{S}{TATE} estimation problems naturally arise in many signal processing applications including target tracking, smart grids, and robotics \cite{Bar-Shalom+Li+Kirubarajan:2001, simo2013Bayesian, Optimal2017Mallada}. In conventional Bayesian approaches, the estimation task is cast as a statistical inverse problem for restoring the original time series from imperfect measurements, based on a statistical model for the measurements given the signal together with a statistical model for the signal. In linear Gaussian models, this problem admits a closed-form solution, which can be efficiently implemented by the Kalman (or Rauch--Tung--Striebel) smoother (KS)~\cite{RTS1965,simo2013Bayesian}. For nonlinear Gaussian models we can use various linearization and sigma-point-based methods \cite{simo2013Bayesian} for approximate inference. In particular, here we use the so-called iterated extended Kalman smoother (IEKS) \cite{Bell1994smoother}, which is based on analytical linearisation of the nonlinear functions. Although the aforementioned smoothers are often used to estimate dynamic signals, they lack a mechanism to promote sparsity in the signals. 

One approach for promoting sparsity is to add an $L_1$-term to the cost function formulation of the state estimation problem. This approach imposes sparsity on the state estimate, which is either based on a \emph{synthesis sparse} or an \emph{analysis sparse} signal model. A synthesis sparse model assumes that the signal can be represented as a linear combination of basis vectors, where the coefficients are subject to, for example, an $L_1$-penalty, thus promoting sparsity. In the past decade, the use of synthesis sparsity for estimating dynamic signals has drawn a lot of attention \cite{Tropp2010sparse, Carmi2010pseudo, Vaswani2008compressed, Zachariah2012iterative, Farahmand2011sparsity, Aravkin2011Laplace, Aravkin2017Generalized,Simonetto2017Prediction,Charles2011Sparsity,Charles2016Dynamic}. For example,  a pseudo-measurement technique was used in the Kalman update equations for encouraging sparse solutions \cite{Carmi2010pseudo}.  A method based on sparsity was applied compressive sensing to update Kalman innovations or filtering errors \cite{Vaswani2008compressed}.  Based on synthesis sparsity, the estimation problem has been formulated as an $L_1$-regularized least square problem in \cite{Charles2011Sparsity}. Nevertheless, the previously mentioned methods only consider synthesis sparsity of the signal and assume a linear dynamic system. 

On the other hand, analysis sparsity, also called cosparsity, assumes that the signal is not sparse itself, but rather the outcome is sparse or compressible in some transform domain, which leads to the flexibility in the modeling of signals \cite{Elad2007analysis,Gao2017analysis,Yaghoobi2013cosparse, Rubinstein2013cosparse,Turek2014analysis}.  Analysis sparse models involving an analysis operator -- a popular choice being total variation (TV) -- have been very successful in image processing. 
For example,  several algorithms \cite{Yaghoobi2013cosparse,Rubinstein2013cosparse} have been developed to train an analysis operator and the trained operators have been used for image denoising. In \cite{Hu2012TV} the authors proposed to use the TV regularizer to improve the quality of image reconstruction.  However, these approaches are not ideally suited for reconstructing dynamic signals. In state estimation problems, the available methods for analysis sparse priors are still limited. The main goal of this paper is to introduce these kinds of methods for dynamic state estimation.

Formulating a state estimation problem using synthesis and analysis sparsity leads to a general class of optimization problems, which require minimization of composite functions such as an analysis-$L_1$-regularized least-square problems. The difficulties arise from the appearance of the nonsmooth regularizer. There are various batch optimization methods such as proximal gradient method \cite{Chalasani2014proximal}, Douglas-Rachford splitting (DRS) \cite{Douglas1992,Ryu2016}, Peaceman-Rachford splitting (PRS) \cite{Peaceman1955,He2014peaceman}, the split Bregman method (SBM)~\cite{Goldstein2009Split}, the alternating method of multipliers (ADMM)~\cite{Boyd2011admm,Glowinski2014ADMM}, and the first-order primal-dual (FOPD) method~\cite{Chambolle2011fopd} for addressing this problem. 
However, these general methods do not take the inherent temporal nature of the optimization problem into account, which leads to bad computational and memory scaling in large-scale or high-dimensional data. This often renders the existing methods intractable due to their extensive memory and computational requirements. 

As a consequence, we propose to combine a Kalman smoother with variable splitting optimization methods, which allows us to account for the temporal nature of the data in order to speed up the computations. In this paper, we derive novel methods for efficiently estimating dynamic signals with an extra (analysis) $L_1$-regularized term. The developed algorithms are based on using computationally efficient KS and IEKS for solving the subproblems arising within the steps of the optimization methods. Our experiments demonstrate promising performance of the methods in practical applications.  
The main contributions are as follows:

\begin{enumerate}
\item We formulate the state estimation problem as an optimization problem that is based upon a general sparse model containing analysis or synthesis prior. The formulation accommodates a large class of popular sparsifying regularizers (e.g., synthesis $L_1$-norm, analysis $L_1$-norm, TV norm) for state estimation. 

\item We present novel practical optimization methods, KS-ADMM and IEKS-ADMM, which are based on combining ADMM with KS and IEKS, respectively. 

\item We also prove the convergence of the KS-ADMM method as well as the local convergence of the IEKS-ADMM method. 

\item We generalize our smoother-based approaches to a general class of variable splitting techniques. 
\end{enumerate}

The advantage of the proposed approach is that the computational cost per iteration is much less than in the conventional batch solutions. Our approach is computationally superior to the state-of-the-art in a large-scale or high-dimensional state estimation applications. 

The rest of the paper is organized as follows. We conclude this section by reviewing variable splitting methods and IEKS. Section \ref{sec:KS-ADMM} first develops the batch optimization by a classical ADMM method, and then presents a new KS-ADMM method for solving a linear dynamic estimation problem. Furthermore, for the nonlinear case, we present an IEKS-ADMM method in Section \ref{sec:IEKS-ADMM} and establish its local convergence properties. Section \ref{sec:framework} introduces a more general smoother-based variable splitting algorithmic framework. In particular, a general IEKS-based optimization method is formulated. Various experimental results in Section \ref{sec:Experiments} demonstrate the effectiveness and accuracy in simulated linear and nonlinear state estimation problem. The performance of the algorithm is also illustrated in real-world tomographic reconstruction.

The notation of the paper is as follows. Vectors $\mathbf{x}$ and matrices $\mathbf{X}$ are indicated in boldface. $(\cdot)^{\T}$ stands for transposition, and $(\cdot)^{-1}$ is the matrix inversion. $\mathbf{x}_{1:T}$ stands for the time series from $\mathbf{x}_1$ to $\mathbf{x}_T$, and $\mathbf{x}^{(k)}$ denotes the value of $\mathbf{x}$ at $k$:th iteration. $\langle \mathbf{x}, \mathbf{y} \rangle$ represents the vector inner product $\mathbf{x}^\T\mathbf{y}$. We denote by $\mathbb{R}^{n}$ the usual $n$ dimensional Euclidean space. The vector norm $\left \|\cdot\right \|_p$ for $p \geq 1$ is the standard $\ell_p$-norm. The $\mathbf{R}$-weighted Euclidean norm of a vector $\mathbf{x}$ is denoted by $\|\mathbf{x}\|_{\mathbf{R}}= \sqrt{\mathbf{x}^\T \mathbf{R} \mathbf{x}}$. $\theta^*$ is the conjugate of a convex function $\theta$, defined as $\theta^*(\mathbf{p}) = \mathop{\sup}_{\mathbf{x}} \langle \mathbf{x}, \mathbf{p} \rangle - \theta (\mathbf{x})$. $\operatorname{sgn}$ represents the signum function. $ \operatorname{vec}(\cdot)$ is the vectorization operator, $\operatorname{blkdiag}(\cdot)$ is a block diagonal matrix operator with the elements in its argument on the diagonal, $\partial \phi(\mathbf{x})$ denotes a subgradient of $\phi$ at $\mathbf{x}$, $\mathbf{J}_\phi$ is the Jacobian of $\phi(\mathbf{x})$ and $\nabla \phi(\mathbf{x})$ and $\nabla^2 \phi(\mathbf{x})$ are the gradient and Hessian of the function $\phi(\mathbf{x})$.

\subsection{Problem Formulation}
\label{sec:problem_formulation}
Consider the dynamic state-space model \cite{Bar-Shalom+Li+Kirubarajan:2001, simo2013Bayesian}
\begin{equation}\label{eq:model}
\begin{split}
\begin{aligned}
\mathbf{x}_{t} &=\mathbf{a}_t(\mathbf{x}_{t-1})+ \mathbf{q}_t, \\
\mathbf{y}_{t} &=\mathbf{h}_t(\mathbf{x}_t)+ \mathbf{r}_t, 
\end{aligned}
\end{split}
\end{equation}

where $t=1,\ldots,T$, $\mathbf{x}_{t} = \begin{bmatrix} x_{1,t}& x_{2,t}& \ldots& x_{n_x,t} \end{bmatrix}^\T \in \mathbb{R}^{n_x}$ denotes an $n_x$-dimensional state of the system at the time step $t$, and $\mathbf{y}_{t} = \begin{bmatrix} y_{1,t}& y_{2,t}& \ldots& y_{n_y,t}\end{bmatrix}^\T \in \mathbb{R}^{n_y}$ is an $n_y$-dimensional noisy measurement signal, $\mathbf{h}_t: \mathbb{R}^{n_x} \to \mathbb{R}^{n_y}$ is a measurement function (typically with $n_y \leq n_x$), and $\mathbf{a}_t: \mathbb{R}^{n_x} \to \mathbb{R}^{n_x}$ is a state transition function at time step $t$. The initial state $\mathbf{x}_1$ is assumed to have mean $\mathbf{m}_1$ and covariance $\mathbf{P}_1$. The errors $\mathbf{q}_t$ and $\mathbf{r}_t$ are assumed to be mutually independent random variables with known positive definite covariance matrices $\mathbf{Q}_t$ and $\mathbf{R}_t$, respectively. The goal is to estimate the state sequence $\mathbf{x}_{1:{T}} = \left\{\mathbf{x}_1,\ldots,\mathbf{x}_{T}\right\}$ from the noisy measurement sequence $\mathbf{y}_{1:{T}} = \{\mathbf{y}_1, \ldots,\mathbf{y}_{T}\}$. In this paper, we focus on estimating $\mathbf{x}_{1:{T}} $ by minimizing the sum of quadratic error terms and an $L_1$ sparsity-promoting penalty. 

For the sparsity assumption, we add an extra $L_1$-penalty for the state $\mathbf{x}_t$, and then formulate the optimization problem as
\begin{equation}\label{eq:optimization}
\begin{split}
\begin{aligned}
&\mathbf{x}^\star_{1:{T}}  = \mathop{\arg\min}_{\mathbf{x}_{1:{T}}}
 \frac{1}{2} \sum_{t=1}^{T}  \| \mathbf{y}_t - \mathbf{h}_t(\mathbf{x}_t)  \|_{\mathbf{R}_t^{-1}}^2 + 
 \lambda \sum_{t=1}^{T} \left \| \mathbf{\Omega}_t\mathbf{x}_t \right \|_1 \\
 &\quad + \frac{1}{2}\sum_{t=2}^{{T}} \|\mathbf{x}_{t}-\mathbf{a}_t(\mathbf{x}_{t-1})\|_{ \mathbf{Q}_t^{-1} }^2  
+\frac{1}{2} \| \mathbf{x}_1  -   \mathbf{m}_1 \|_{ \mathbf{P}_1^{-1} }^2,
\end{aligned}
\end{split}
\end{equation}
where $\mathbf{x}^\star_{1:{T}} $ is the optimal state sequence, $\mathbf{\Omega}_t$ is a linear operator, and $\lambda$ is a penalty parameter, which describes a trade-off between the data fidelity term and the regularizing penalty term. The formulation \eqref{eq:optimization} encompasses two particular cases: by setting 
$\mathbf{\Omega}_t$ to a diagonal matrix (e.g., identity matrix $\mathbf{\Omega}_t = \mathbf{I}$), a synthesis sparse model is obtained, which assumes that $\mathbf{x}_{1:T}$ are sparse. Such a case arises frequently in state estimation applications \cite{Charles2016Dynamic, Farahmand2011sparsity, Aravkin2011Laplace, Ziniel2013Dynamic, Aravkin2017Generalized}. Correspondingly, an analysis sparse model is obtained when a more general $\mathbf{\Omega}_t$ is used. For example, the TV regularization, which is common in tomographic reconstruction, can be obtained by using a finite-difference matrix as $\mathbf{\Omega}_t$. 

More generally, $\mathbf{\Omega}_t$ can be a fixed matrix \cite{Shen2006tight,Turek2014analysis,Elad2007analysis} or a learned matrix \cite{Gao2017analysis,Yaghoobi2013cosparse, Rubinstein2013cosparse}. It should be noted that, if the $L_1$ term is not used (i.e., when $\lambda = 0$) in \eqref{eq:optimization}, the objective can be solved by using a linear or non-linear KS \cite{RTS1965,Bell1994smoother,simo2013Bayesian}. However, when $\lambda > 0$, the smoothing is no longer applicable, and the cost function is non-differentiable.

Since $\| \mathbf{\Omega}_t\mathbf{x}_t\|_1$ does not have a closed-form proximal operator in general, we employ variable splitting technique for solving the resulting optimization problem. As mentioned above, many variable splitting methods can be used to solve \eqref{eq:optimization}, such as PRS \cite{He2014peaceman}, SBM~\cite{Goldstein2009Split}, ADMM~\cite{Boyd2011admm},  DRS \cite{Douglas1992}, and FOPD~\cite{Chambolle2011fopd}. Especially, ADMM is a popular member of this class. Therefore, we start by presenting algorithms based on ADMM and then extend them to more general variable splitting methods. In the following, we review variable splitting and IEKS methods, before presenting our approach in detail.

\subsection{Variable Splitting}
\label{sec:various_methods}
The methods we develop in this paper are based on variable splitting \cite{Courant1943Variational,Wang2008vs}. Consider an unconstrained optimization problem in which the objective function is the sum of two functions 
\begin{equation}\label{eq:general_function}
\begin{split}
\begin{aligned}
\min_{\mathbf{x}} \theta_1(\mathbf{x}) + \theta_2(\mathbf{\Omega} \mathbf{x}), 
\end{aligned}
\end{split}
\end{equation}
where $\theta_2(\cdot) = \| \cdot \|_1$, and $\mathbf{\Omega}$ is a matrix. 
Variable splitting refers to the process of introducing an auxiliary constrained variable $\mathbf{w}$ to separate the components in the cost function. More specifically, 
we impose the constraint $\mathbf{w} = \mathbf{\Omega}\mathbf{x}$, which transforms the original minimization problem \eqref{eq:general_function} into an equivalent
constrained minimization problem, given by
\begin{equation}\label{eq:general_constrained_function}
\begin{split}
\begin{aligned}
\min_{\mathbf{x},\mathbf{w}} \theta_1(\mathbf{x}) + \theta_2(\mathbf{w}),  \quad \mathrm{s.t.}\quad \mathbf{w} = \mathbf{\Omega}\mathbf{x}.
\end{aligned}
\end{split}
\end{equation}
The minimization problem \eqref{eq:general_constrained_function}  can be solved efficiently by classical constrained optimization methods \cite{Wright2006Numerical}. The rationale of variable splitting is that it may be easier to solve the constrained problem \eqref{eq:general_constrained_function} than the unconstrained one \eqref{eq:general_function}. PRS, SBM, FOPD, ADMM, and their variants \cite{Ouyang2013} are a few well-known variable splitting methods -- see also \cite{Ryu2016monotone,Esser2009Applications} for a recent historical overview. 

ADMM \cite{Boyd2011admm} is one of the most popular algorithms for solving \eqref{eq:general_constrained_function}. ADMM defines an augmented Lagrangian function, and then alternates between the updates of the split variables. Given $\mathbf{x}^{(0)}$, $\mathbf{w}^{(0)}$, and $ \bm{\eta}^{(0)}$, its iterative steps are:
\begin{subequations}
\label{eq:general-admm}
\begin{align}  
\label{eq:x-primal-admm}
\mathbf{x}^{(k+1)} &= \mathop{\arg\min}_{\mathbf{x}}  \theta_1(\mathbf{x}) + (\bm{\eta}^{(k)})^\T ( \mathbf{w}^{(k)} - \mathbf{\Omega}\mathbf{x})  \notag \\  
& \quad \quad \quad + \frac{\rho}{2} \|\mathbf{w}^{(k)} - \mathbf{\Omega}\mathbf{x} \|^2, \\
\mathbf{w}^{(k+1)} &= \mathop{\arg\min}_{\mathbf{w}}  \theta_2(\mathbf{w}) + (\bm{\eta}^{(k)})^\T (\mathbf{w} - \mathbf{\Omega}\mathbf{x}^{(k+1)} ) \notag \\
& \quad \quad \quad + \frac{\rho}{2} \|  \mathbf{w} - \mathbf{\Omega}\mathbf{x}^{(k+1)} \|^2, \\
\bm{\eta}^{(k+1)} &= \bm{\eta}^{(k)} + \rho ( \mathbf{w}^{(k+1)} - \mathbf{\Omega}\mathbf{x}^{(k+1)}),
\end{align}
\end{subequations}
where $\bm{\eta}$ is a Lagrange multiplier and $\rho$ is a parameter. 

The PRS method \cite{Peaceman1955,He2014peaceman} is similar to ADMM except that it updates the Lagrange multiplier twice. The typical iterative steps for~\eqref{eq:general_function} are
\begin{subequations}
\begin{align}  
\label{eq:x-primal-PRS}
\mathbf{x}^{(k+1)} &= \mathop{\arg\min}_{\mathbf{x}}  \theta_1(\mathbf{x}) + (\bm{\eta}^{(k)})^\T ( \mathbf{w}^{(k)} - \mathbf{\Omega}\mathbf{x} )  \notag \\  
& \quad \quad \quad + \frac{\rho}{2} \| \mathbf{w}^{(k) }- \mathbf{\Omega}\mathbf{x} \|^2, \\
\bm{\eta}^{(k+\frac{1}{2})} &= \bm{\eta}^{(k)} + \alpha \rho (\mathbf{w}^{(k)} - \mathbf{\Omega}\mathbf{x}^{(k+1)}), \\
\mathbf{w}^{(k+1)} &=\mathop{\arg\min}_{\mathbf{w}}  \theta_2(\mathbf{w}) + (\bm{\eta}^{(k+\frac{1}{2})})^\T (  \mathbf{w} - \mathbf{\Omega}\mathbf{x}^{(k+1)}) \notag \\
& \quad \quad \quad + \frac{\rho}{2} \|  \mathbf{w} - \mathbf{\Omega}\mathbf{x}^{(k+1)} \|^2, \\
\bm{\eta}^{(k+1)} &= \bm{\eta}^{(k+\frac{1}{2})} + \alpha \rho (\mathbf{w}^{(k+1)} - \mathbf{\Omega}\mathbf{x}^{(k+1)}),
\end{align}
\end{subequations}
where $\alpha \in(0,1)$. 

In SBM~\cite{Goldstein2009Split}, we iterate the steps
\begin{subequations}
\begin{align}  
\label{eq:x-primal-SBM}
&\mathbf{x}^{(k+1)} = \mathop{\arg\min}_{\mathbf{x}}  \theta_1 (\mathbf{x})
+ \frac{\rho}{2} \|\mathbf{w}^{(k)} - \mathbf{\Omega} \mathbf{x}  +  \bm{\eta}^{(k)} \|_2^2, \\
& \mathbf{w}^{(k+1)}  = \mathop{\arg\min}_{\mathbf{w}}  \theta_2(\mathbf{w})  +  \frac{\rho}{2} \|\mathbf{w} - \mathbf{\Omega} \mathbf{x}^{(k+1)} +  \bm{\eta}^{(k)} \|_2^2,
\end{align}
\end{subequations}
$M$ times, and update the extra variable by
\begin{equation}\label{eq:sbm_eta}
\begin{split}
\begin{aligned}
\bm{\eta}^{(k+1)} = \bm{\eta}^{(k)}+ ( \mathbf{w}^{(k+1)} - \mathbf{\Omega}\mathbf{x}^{(k+1)}).
\end{aligned}
\end{split}
\end{equation}
When $M=1$, this is equivalent to ADMM.

There are also other variable splitting methods which alternate proximal steps for the primal and dual variables. One example is FOPD~\cite{Chambolle2011fopd}, where the $(k+1)$:th iteration consists of the following 
{\small
\begin{subequations}
\begin{align}  
& \mathbf{w}^{(k+1)} = \mathop{\arg\min}_{\mathbf{w}}  \theta_2^*(\mathbf{w})  +  \frac{1}{2 \gamma} \|\mathbf{w} - ( \mathbf{w}^{(k)} + \gamma \mathbf{\Omega} \hat{\mathbf{x}}^{(k)} ) \|^2, \\
\label{eq:x-primal-FOPD}
& \mathbf{x}^{(k+1)} =  \mathop{\arg\min}_{\mathbf{x}}  \theta_1(\mathbf{x}) +   \frac{1}{2\rho} \|\mathbf{x}   -  ( \mathbf{x}^{(k)} \!  -  \!   \rho \mathbf{\Omega}^\T  \mathbf{w}^{(k+1)} )\|^2, \\
&\hat{\mathbf{x}}^{(k+1)}  = \mathbf{x}^{(k+1)}  + \tau (\mathbf{x}^{(k+1)}  - {\mathbf{x}}^{(k)} ),
\end{align}
\end{subequations}
}
\noindent where $\tau$ and $\gamma$ are parameters. 

All these variable splitting algorithms provide simple ways to construct efficient iterative algorithms that offer simpler inner subproblems. However, the subproblems such as \eqref{eq:x-primal-admm}, \eqref{eq:x-primal-PRS}, \eqref{eq:x-primal-SBM} and \eqref{eq:x-primal-FOPD} remain computationally expensive, as they involve large matrix-vector products when the dimensionality of $\mathbf{x}$ is large. We circumvent this problem by combining variable splitting with KS and IEKS.

\subsection{The Iterated Extended Kalman Smoother}
\label{sec:ieks}

IEKS \cite{Bell1994smoother} is an approximative algorithm for solving non-linear optimal smoothing problems. However, it can also be seen as an efficient implementation of the Gauss--Newton algorithm for solving the problem
\begin{equation}\label{eq:ieks_optimization}
\begin{split}
\begin{aligned}
&\mathbf{x}^\star_{1:{T}}  = \mathop{\arg\min}_{\mathbf{x}_{1:{T}}}
 \frac{1}{2} \sum_{t=1}^{T}  \| \mathbf{y}_t - \mathbf{h}_t(\mathbf{x}_t)  \|_{\mathbf{R}_t^{-1}}^2  \\
&+ \frac{1}{2}\sum_{t=2}^{{T}} \|\mathbf{x}_{t}-\mathbf{a}_t(\mathbf{x}_{t-1})\|_{ \mathbf{Q}_t^{-1} }^2
+ \frac{1}{2} \| \mathbf{x}_1  -   \mathbf{m}_1 \|_{ \mathbf{P}_1^{-1} }^2.
\end{aligned}
\end{split}
\end{equation}
That is, it produces the maximum a posteriori (MAP) estimate of the trajectory. The IEKS method works by alternating between linearisation of $\mathbf{a}_t$ and $\mathbf{h}_t$ around a previous estimate $\mathbf{x}^{(i)}_{1:{T}}$, as follows:
\begin{subequations}
\label{eq:nonliear_appro}
\begin{align}
\mathbf{a}_t(\mathbf{x}_{t-1}) &\approx \mathbf{a}_t(\mathbf{x}_{t-1}^{(i)}) + \mathbf{J}_{a_t}(\mathbf{x}_{t-1}^{(i)}) (\mathbf{x}_{t-1} - \mathbf{x}_{t-1}^{(i)}), \\
\mathbf{h}_t(\mathbf{x}_t) &\approx \mathbf{h}_t(\mathbf{x}_t^{(i)}) + \mathbf{J}_{h_t}(\mathbf{x}_t^{(i)}) (\mathbf{x}_t - \mathbf{x}_t^{(i)}),
\end{align}
\end{subequations}
and solving the linearized problem

\begin{equation}
\label{eq:ieks_linerisation}
\begin{split}
\begin{aligned}
&\mathbf{x}^{(i+1)}_{1:{T}}  = \\
&\mathop{\arg\min}_{\mathbf{x}_{1:{T}}}
 \frac{1}{2} \sum_{t=1}^{T}  \| \mathbf{y}_t - \mathbf{h}_t(\mathbf{x}_t^{(i)}) - \mathbf{J}_{h_t}(\mathbf{x}_t^{(i)}) (\mathbf{x}_t - \mathbf{x}_t^{(i)})  \|_{\mathbf{R}_t^{-1}}^2  \\
&  + \frac{1}{2}\sum_{t=2}^{{T}} \|\mathbf{x}_{t}-\mathbf{a}_t(\mathbf{x}_{t-1}^{(i)}) - \mathbf{J}_{a_t}(\mathbf{x}_{t-1}^{(i)})  (\mathbf{x}_t - \mathbf{x}_t^{(i)}) \|_{ \mathbf{Q}_t^{-1} }^2  \\
&  +  \frac{1}{2} \| \mathbf{x}_1  -   \mathbf{m}_1 \|_{ \mathbf{P}_1^{-1} }^2.
\end{aligned}
\end{split}
\end{equation}
The solution of \eqref{eq:ieks_linerisation} can in turn be efficiently obtained by the Rauch--Tung--Striebel (RTS) smoother \cite{RTS1965}, which first computes the filtering mean and covariances $\mathbf{m}_{1:T}$ and $\mathbf{P}_{1:T}$, by alternating between prediction
\begin{subequations}
\label{eq:nonliear_pre}
\begin{align}
\mathbf{m}_t^- &= \mathbf{a}_t(\mathbf{x}_{t-1}^{(i)}) + \mathbf{J}_{a_t}(\mathbf{x}_{t-1}^{(i)}) (\mathbf{m}_{t-1} - \mathbf{x}_{t-1}^{(i)}), \\
\mathbf{P}_t^- &= \mathbf{J}_{a_t}(\mathbf{x}_{t-1}^{(i)}) \mathbf{P}_{t-1} [\mathbf{J}_{a_t}(\mathbf{x}_{t-1}^{(i)})]^\T + \mathbf{Q}_t,
\end{align}
\end{subequations} 
and update
\begin{subequations}
\label{eq:nonliear_pre_next}
\begin{align}
&\mathbf{S}_t =   \mathbf{J}_{h_t}(\mathbf{x}_t^{(i)}) \mathbf{P}_t^- [\mathbf{J}_{h_t}(\mathbf{x}_t^{(i)})]^\T  + \mathbf{R}_t, \\
&\mathbf{K}_t = \mathbf{P}_t^- [\mathbf{J}_{h_t}(\mathbf{x}_t^{(i)})]^\T [\mathbf{S}_t]^{-1},\\
&\mathbf{m}_t = \mathbf{m}_t^-  + \mathbf{K}_t \Big(\mathbf{y}_t - \mathbf{h}_t(\mathbf{x}_t^{(i)}) - \mathbf{J}_{h_t}(\mathbf{x}_t^{(i)}) (\mathbf{m}_t^- - \mathbf{x}_t^{(i)}) \Big), \\
&\mathbf{P}_t = \mathbf{P}_t^- - \mathbf{K}_t \mathbf{S}_t  [\mathbf{K}_t ]^\T,
\end{align}
\end{subequations} 
where $\mathbf{S}_t$ and $\mathbf{K}_t$ are the innovation covariance matrix and the Kalman gain at the time step $t$, respectively. The filtering means $\mathbf{m}_t$ and covariances $\mathbf{P}_t$ are then corrected in a backwards (smoothing) pass
\begin{subequations}
\label{eq:nonliear_smoother}
\begin{align}
\mathbf{G}_t &= \mathbf{P}_t [\mathbf{J}_{a_t}(\mathbf{x}_{t-1}^{(i)})]^\T [\mathbf{P}_{t+1}^-]^{-1}, \\
\mathbf{m}_t &= \mathbf{m}_t  + \mathbf{G}_t \Big(\mathbf{m}_{t+1}^s - \mathbf{m}_{t+1}^- \Big), \\
\mathbf{P}_t^s &= \mathbf{P}_t + \mathbf{G}_t \Big( \mathbf{P}_{t+1}^s - \mathbf{P}_{t+1}^- \Big)[\mathbf{G}_t ]^\T. 
\end{align}
\end{subequations}
Now setting $\mathbf{x}^{(i+1)}_t = \mathbf{m}_t^s$ gives the solution to \eqref{eq:ieks_linerisation}. When the functions $\mathbf{a}_t$ and $\mathbf{h}_t$ are linear, the above iteration converges in a single step. This algorithm is the classical RTS smoother or more briefly KS \cite{RTS1965}.

In this paper, we use the KS and IEKS algorithms as efficient methods for solving generalized versions of the optimization problems given in \eqref{eq:ieks_optimization}, which arise within the steps of variable splitting.

\vspace{-5pt}
\section{Linear State Estimation by KS-ADMM}
\label{sec:KS-ADMM}
In this section, we present the KS-ADMM algorithm which is a novel algorithm for solving $L_1$-regularized linear Gaussian state estimation problems. In particular, Section~\ref{sec:Optimization_linear} describes the batch solution by ADMM. Then, by defining an artificial measurement noise and a pseudo-measurement, we formulate the KS-ADMM algorithm to solve the primal variable update in Section \ref{sec:x_linear}. 

\subsection{Batch Optimization}
\label{sec:Optimization_linear}
Let us assume that the state transition function $\mathbf{a}_t$ and the measurement function $\mathbf{h}_t$ are linear, denoted by
\begin{equation}\label{eq:x}
\begin{split}
\begin{aligned} 
\mathbf{a}_t(\mathbf{x}_{t-1}) &= \mathbf{A}_t\mathbf{x}_{t-1}, \\
\mathbf{h}_t(\mathbf{x}_{t}) &= \mathbf{H}_t\mathbf{x}_{t}, 
	\end{aligned}
	\end{split}
\end{equation}
where $\mathbf{A}_t$ and $\mathbf{H}_t$ are the transition matrix and the measurement matrix, respectively. 
In order to reduce this problem to \eqref{eq:general_function}, we stack the entire state sequence into a vector, which transforms the objective into a batch optimization problem. Thus, we define the following variables
\begin{subequations}
	\label{eq:linear_vector_sets}
	\begin{align}
\mathbf{x} &= \operatorname{vec}(\mathbf{x}_1, \mathbf{x}_2,\ldots,\mathbf{x}_{T}),\\
\mathbf{y} &= \operatorname{vec}(\mathbf{y}_1, \mathbf{y}_2, \ldots,\mathbf{y}_{T}),\\
\mathbf{m} &=  \operatorname{vec}(\mathbf{m}_1, \mathbf{0},\ldots,\mathbf{0}),\\  
\mathbf{H}  &= \operatorname{blkdiag}(\mathbf{H}_1,\mathbf{H}_2,\ldots,\mathbf{H}_{T}), \\
\mathbf{Q}  &= \operatorname{blkdiag}(\mathbf{P}_1,\mathbf{Q}_2,\ldots,\mathbf{Q}_{T}), \\
\mathbf{R}  &= \operatorname{blkdiag}(\mathbf{R}_1,\mathbf{R}_2,\ldots,\mathbf{R}_{T}),\\
\mathbf{\Omega} &= \operatorname{blkdiag}(\mathbf{\Omega}_1,\mathbf{\Omega}_2,\ldots,\mathbf{\Omega}_{T}),  \\
  \mathbf{\Psi} &= \begin{pmatrix}
      \mathbf{I}   & \mathbf{0} & & \\
    -\mathbf{A}_2  &       \mathbf{I}  &  \ddots & \\
   & \ddots & \ddots & \mathbf{0} \\
   & &  -\mathbf{A}_T &  \mathbf{I} \\
  \end{pmatrix}.
\end{align}
\end{subequations}

The optimization problem introduced in Section~\ref{sec:problem_formulation} can now be reformulated as the following batch optimization problem
\begin{equation}\label{eq:linear_whole_optimization} 
\begin{split}
\begin{aligned}
\mathbf{x}^\star  &= \mathop{\arg\min}_{\mathbf{x}} 
\frac{1}{2} \left \| \mathbf{y} - \mathbf{H}\mathbf{x}\right \|_{\mathbf{R}^{-1}}^2 
+ \frac{1}{2} \left \| \mathbf{\Psi}\mathbf{x}  - \mathbf{m} \right \|_{\mathbf{Q}^{-1}}^2  \\
&\quad + \lambda \left \|  \mathbf{\Omega} \mathbf{x} \right \|_1,
\end{aligned}
\end{split}
\end{equation}
which in turn can be seen to be a special case of \eqref{eq:general_function}. Here, our algorithm for solving \eqref{eq:linear_whole_optimization} builds upon the batch ADMM~\cite{Boyd2011admm}. 

To derive an ADMM algorithm for \eqref{eq:linear_whole_optimization}, we introduce an auxiliary variable $\mathbf{w} = \operatorname{vec}(\mathbf{w}_1, \ldots,\mathbf{w}_{T})$ and a linear equality constraint $\mathbf{w} = \mathbf{\Omega}\mathbf{x}$. The resulting equality-constrained problem is formulated mathematically as 
\begin{equation}\label{eq:admm_linear} 
\begin{split}
\begin{aligned}
\min_{\mathbf{x}}\ &\frac{1}{2}\| \mathbf{y} - \mathbf{H}\mathbf{x} \|_{\mathbf{R}^{-1}}^2
 + \frac{1}{2}  \|\mathbf{\Psi}\mathbf{x}  - \mathbf{m} \|_{ \mathbf{Q}^{-1}}^2  
 +\lambda  \| \mathbf{w}  \|_1 \\
{\mathrm{s.t.}}\ &\mathbf{w} = \mathbf{\Omega}\mathbf{x}.
\end{aligned}
\end{split}
\end{equation}
The main objective here is to find a stationary point $(\mathbf{x}^\star, \mathbf{w}^\star, \bm{\eta}^\star)$ of the augmented Lagrangian function associated with \eqref{eq:admm_linear} as the function
\begin{equation}\label{eq:Lagrangian_func}
\begin{split}
\begin{aligned}
&\mathcal{L}(\mathbf{x},\mathbf{w};\bm{\eta}) \triangleq  
\frac{1}{2}  \| \mathbf{y} - \mathbf{H}\mathbf{x} \|_{\mathbf{R}^{-1}}^2 
+ \lambda \| \mathbf{w}  \|_1 \\ 
& + \frac{1}{2} \|\mathbf{\Psi}\mathbf{x}  - \mathbf{m}\|_{ \mathbf{Q}^{-1} }^2  
 + \bm{\eta}^\T(\mathbf{w}- \mathbf{\Omega}\mathbf{x})
+ \frac{\rho}{2} \| \mathbf{w}- \mathbf{\Omega}\mathbf{x}\|^2,
\end{aligned}
\end{split}
\end{equation}
where $\bm{\eta} \in \mathbb{R}^{T P}$ is the dual variable and $\rho$ is a penalty parameter. As described in Section \ref{sec:various_methods}, at each iteration of ADMM we perform the updates 
 \begin{subequations}
\label{eq:admm_step}
\begin{align} \label{eq:admm_x_linear}
       \mathbf{x}^{(k+1)} &= \mathop{\arg\min}_\mathbf{x}\mathcal{L}(\mathbf{x},\mathbf{w}^{(k)};\bm{\eta}^{(k)}), \\ 
       \label{eq:admm_iteration_w}
    \mathbf{w}^{(k+1)} &= \mathop{\arg\min}_\mathbf{w}\mathcal{L}(\mathbf{x}^{(k+1)},\mathbf{w};\bm{\eta}^{(k)}), \\ 
    \label{eq:admm_iteration_eta}
    \bm{\eta}^{(k+1)}&= \bm{\eta}^{(k)}+ \rho( \mathbf{w}^{(k+1)} - \mathbf{\Omega}\mathbf{x}^{(k+1)}).
\end{align}
\end{subequations}
The update for the primal sequence $\mathbf{x}$ is equivalent to the quadratic optimization problem given by
\begin{equation}\label{eq:x_linear}
\begin{split}
\mathbf{x}^{(k+1)}  &= \mathop{\arg\min}_{\mathbf{x}}  \frac{1}{2}   \| \mathbf{y} -  \mathbf{H}\mathbf{x} \|_{\mathbf{R}^{-1}}^2 
+ \frac{1}{2} \|\mathbf{\Psi}\mathbf{x}  - \mathbf{m}\|_{ \mathbf{Q}^{-1} }^2  \\ 
& \qquad + \frac{\rho}{2} \| \mathbf{w}- \mathbf{\Omega}\mathbf{x} + \bm{\eta}/ \rho\|^2,
\end{split}
\end{equation}
which has the closed-form solution
\begin{equation}
\label{eq:x_linear_solution}
\begin{split}
&  \mathbf{x}^{(k+1)}  = \left[ \mathbf{H}^\T \mathbf{R}^{-1} \mathbf{H} 
   +  \mathbf{\Psi}^\T \mathbf{Q}^{-1} \mathbf{\Psi}
   + \rho  \mathbf{\Omega}^\T\mathbf{\Omega} \right]^{-1} \\
   & \times \left[ \mathbf{H}^\T \mathbf{R}^{-1} \mathbf{y} + 
   \mathbf{\Psi}^\T \mathbf{Q}^{-1} \mathbf{m} + \rho \mathbf{\Omega}^\T ( \mathbf{w}^{(k)} + \bm{\eta}^{(k)}/\rho )  \right].
\end{split}
\end{equation}

For the dual sequence $\mathbf{w}$, the iteration in \eqref{eq:admm_iteration_w} can be solved by \cite{Parikh2013Proxi}
\begin{equation}
\begin{split}
\begin{aligned}
\label{eq:w_compute_2}
&\mathbf{w}^{(k+1)} =
  \operatorname{max} (| \mathbf{e}^{(k)}|- \lambda / \rho,0)  \operatorname{sgn}(\mathbf{e}^{(k)} ) ,
\end{aligned}
\end{split}
\end{equation}
where $\mathbf{e}^{(k)}=\mathbf{\Omega}\mathbf{x}^{(k+1)} + \bm{\eta}^{(k)}/\rho$.

While the optimization problem \eqref{eq:x_linear} can be solved in closed-form, direct solution is computationally demanding, especially when the number of time points or the dimensionality of the state is large. However, the problem can be recognized to be a special case of optimization problems where the iterations can be solved by KS (see Section~\ref{sec:ieks}) provided that we add pseudo-measurements to the problem. In the following, we present the resulting algorithm. 

\subsection{The KS-ADMM Solver}
\label{sec:x_linear}
The proposed KS-ADMM solver is described in Algorithm~\ref{alg:KS-ADMM}. To extend the batch ADMM to KS-ADMM, we first define an artificial measurement noise $\mathbf{\Sigma}_t = \mathbf{I}/\rho$ and a pseudo-measurement $\mathbf{z}_t = \mathbf{w}_t + \bm{\eta}_t/\rho$, and then rewrite \eqref{eq:x_linear} as
\begin{equation}\label{eq:x_kf_linear}
\begin{split}
&\min_{\mathbf{x}_{1:T}}  \frac{1}{2}  \sum_{t=1}^{T}  \| \mathbf{y}_t -\mathbf{H}_t\mathbf{x}_{t} \|_{\mathbf{R}_t^{-1}}^2 
+  \frac{1}{2} \sum_{t=1}^{T}  \left\|\mathbf{z}_t -  \right. 
\left.  \mathbf{\Omega}_t\mathbf{x}_t \right\|_{ \mathbf{\Sigma}_t^{-1} }^2 \\
& \qquad +  \frac{1}{2}\sum_{t=2}^{{T}} \|\mathbf{x}_{t}-\mathbf{A}_t\mathbf{x}_{t-1} \|_{ \mathbf{Q}_t^{-1} }^2  
         +   \frac{1}{2}\| \mathbf{x}_1 -  \mathbf{m}_1 \|_{ \mathbf{P}_1^{-1} }^2.
\end{split}
\end{equation}
The solution to \eqref{eq:x_kf_linear} can then be computed by running KS on the state estimation problem
\begin{subequations} \label{eq:linsubprob}
\begin{align}
  \mathbf{x}_t &= \mathbf{A}_t\mathbf{x}_{t-1} + \mathbf{q}_t, \\
  \mathbf{y}_t &= \mathbf{H}_t \mathbf{x}_t + \mathbf{r}_t, \\
  \mathbf{z}_t &= \mathbf{\Omega}_t \mathbf{x}_t + \bm{\sigma}_t.
\end{align}
\end{subequations}
Here, $\bm{\sigma}_t$ is an independent random variable with covariance $\bm{\Sigma}_t$. The KS-based solution can be described as a four stage recursive process: prediction, $\mathbf{y}_t$-update, $\mathbf{z}_t$-update, and a RTS smoother which should be performed for $t=1,\ldots,T$. First, the prediction step is given by 
\begin{subequations}\label{eq:linear_pre}
\begin{align}
\label{eq:m}
  \mathbf{m}_t^- &=  \mathbf{A}_t  \mathbf{m}_{t-1},  \\
  \label{eq:p}
  \mathbf{P}_t^- &= \mathbf{A}_t \,\mathbf{P}_{t-1}\,\mathbf{A}_t^\T + \mathbf{Q}_t, 
\end{align}
\end{subequations}
where $\mathbf{m}_t^-$ and $\mathbf{P}_t^-$ are the predicted mean and covariance at the time $t$. Secondly, the update steps for $\mathbf{y}_t$ are given by
\begin{subequations} \label{eq:linear_update_y}
\begin{align}
\label{eq:y_s}
 & \mathbf{S}^y_t =  \mathbf{H}_t \, \mathbf{P}_t^-  \, \mathbf{H}_t^\T  + \mathbf{R}_t, \\
 \label{eq:y_k}
 & \mathbf{K}^y_t = \mathbf{P}_t^- \,  \mathbf{H}_t^\T  \, [\mathbf{S}^y_t]^{-1},\\
 \label{eq:y_m}
 & \mathbf{m}^y_t = \mathbf{m}_t^-  + \mathbf{K}^y_t [\mathbf{y}_t - \mathbf{H}_t \, \mathbf{m}_t^-], \\
  \label{eq:y_p}
 & \mathbf{P}^y_t = \mathbf{P}_t^- - \mathbf{K}^y_t \, \mathbf{S}^y_t \,[\mathbf{K}^y_t]^\T.
\end{align}
\end{subequations}
Thirdly, the update steps for $\mathbf{z}_t$ are
\begin{subequations}  \label{eq:linear_update_z}
\begin{align}
\label{eq:z_s}
  \mathbf{S}^z_t &= \mathbf{\Omega}_t\,\mathbf{P}^y_t \,\mathbf{\Omega}_t^\T + \mathbf{\Sigma_t}, \\
  \label{eq:z_k}
  \mathbf{K}^z_t &= \mathbf{P}^y_t \,\mathbf{\Omega}_t^\T  [\mathbf{S}^z_t]^{-1}, \\
  \label{eq:z_m}
  \mathbf{m}_t &= \mathbf{m}^y_t + \mathbf{K}^z_t[\mathbf{z}_t - \mathbf{\Omega_t}\mathbf{m}^y_t], \\
  \label{eq:z_p}
  \mathbf{P}_t &= \mathbf{P}^y_t - \mathbf{K}^z_t \, \mathbf{S}^z_t \, [\mathbf{K}^z_t]^\T .
\end{align}
\end{subequations}
Here, $\mathbf{S}^y_t$ and $\mathbf{S}^z_t$, $\mathbf{K}^y_t$ and $\mathbf{K}^z_t$, $\mathbf{m}^y_t$ and $\mathbf{m}_t$, $\mathbf{P}^y_t$ and $\mathbf{P}_t$ are the innovation covariances, gain matrices, means, and covariances for the variables $\mathbf{y}_t$ and $\mathbf{z}_t$ at the time step $t$, respectively. Finally, we run a RTS smoother \cite{RTS1965} for $t$ = $T-1, \ldots,1$, which has the steps
\begin{subequations} \label{eq:linear_smoother}
\begin{align}
\label{eq:s_g}
  \mathbf{G}_t   &=   \mathbf{P}_t \, \mathbf{A}_{t+1}^\T  \, [\mathbf{P}^-_{t+1}]^{-1}, \\
  \label{eq:s_m}
  \mathbf{m}_t^s &=   \mathbf{P}_t + \mathbf{G}_t  \, [\mathbf{m}^s_{t+1} - \mathbf{m}^-_{t+1}] ,\\
  \label{eq:s_p}
  \mathbf{P}_t^s &=   \mathbf{P}_t + \mathbf{G}_t \, [\mathbf{P}^s_{t+1} - \mathbf{P}^-_{t+1}] \, \mathbf{G}_t^\T,
\end{align}
\end{subequations}
where $\mathbf{m}_T^s = \mathbf{m}_T$ and $\mathbf{P}_T^s = \mathbf{P}_T$  (see \cite{simo2013Bayesian} for more details). This gives the update for $\mathbf{x}_{1:T}$ as:  
\begin{equation}\label{eq:x_linear_solu}
\begin{split}
\begin{aligned}
\mathbf{x}_{1:T}^{(k+1)} = \mathbf{m}^s_{1:{T}}. 
\end{aligned}
\end{split}
\end{equation}
The remaining updates for $t=1,\ldots,T$ are given as
\begin{equation}
\begin{split}
\begin{aligned}
\label{eq:wt_compute}
&\mathbf{w}_{t}^{(k+1)} = 
\operatorname{max} (| \mathbf{e}_t^{(k)}|- \lambda / \rho,0) \, \operatorname{sgn}(\mathbf{e}_t^{(k)} ) ,
\end{aligned}
\end{split}
\end{equation}
where $\mathbf{e}_t^{(k)}=\mathbf{\Omega}_t \mathbf{x}_t^{(k+1)} + \bm{\eta}_t^{(k)}/\rho$, and
\begin{equation}
\begin{split}
    \bm{\eta}_t^{(k+1)}&= \bm{\eta}_t^{(k)}+ \rho\,(\mathbf{w}_t^{(k+1)} - \mathbf{\Omega}_t\mathbf{x}_t^{(k+1)} ).
\end{split}
\label{eq:et_compute}
\end{equation}
 \begin{algorithm} 
     \caption{KS-ADMM }    \label{alg:KS-ADMM}
      \KwIn{$\mathbf{y}_t$, $\mathbf{H}_t$, $\mathbf{A}_t$, $\mathbf{Q}_t$, $\mathbf{R}_t$, $\mathbf{\Omega}_t$, $t=1,\ldots,T$; parameters $\lambda$ and $\rho$; $\mathbf{m}_1$ and $\mathbf{P}_1$.} 
       \KwOut{$\mathbf{x}_{1:{T}}.$} 
       \While{not convergent} 
        { 
         run the Kalman filter using \eqref{eq:linear_pre}, \eqref{eq:linear_update_y}, and \eqref{eq:linear_update_z}\;
         run the RTS smoother by using \eqref{eq:linear_smoother}\;
         compute $\mathbf{x}_{1:T}$ by \eqref{eq:x_linear_solu}\;  
         compute $\mathbf{w}_{1:T}$ by \eqref{eq:wt_compute}\;
         compute $\bm{\eta}_{1:T}$ by \eqref{eq:et_compute}\;
         }
    \end{algorithm}

It is useful to note that in Algorithm~\ref{alg:KS-ADMM}, the covariances and gains are independent of the iteration number and thus can be pre-computed outside the ADMM iterations. Furthermore, when the model is time-independent, we can often use stationary Kalman filters and smoothers instead of their general counterparts which can further be used to speed up the computations.

\subsection{Convergence of KS-ADMM}
\label{sec:convergence_ksadmm}
In this section, we discuss the convergence of KS-ADMM. If the system \eqref{eq:linsubprob} is detectable \cite{Anderson:1981}, then the objective function~\eqref{eq:Lagrangian_func} is convex. The traditional convergence results for ADMM such as in \cite{Boyd2011admm,He2000parameter} then ensure that the objective globally converges to the stationary (optimal) point $(\mathbf{x}_{1:T}^\star, \mathbf{w}_{1:T}^\star, \bm{\eta}_{1:T}^\star)$. The result is given in the following. 

\begin{theorem}[Convergence of KS-ADMM]
\label{theorem_admm}
Assume that the system \eqref{eq:linsubprob} is detectable \cite{Anderson:1981}. Then, 
for a constant $\rho$, the sequence $\{\mathbf{x}_{1:T}^{(k)}, \mathbf{w}_{1:T}^{(k)}, \bm{\eta}_{1:T}^{(k)}\}$ generated by Algorithm \ref{alg:KS-ADMM} from any starting point $\{\mathbf{x}_{1:T}^{(0)}, \mathbf{w}_{1:T}^{(0)}, \bm{\eta}_{1:T}^{(0)}\}$ converges to the stationary point $\{\mathbf{x}_{1:T}^\star, \mathbf{w}_{1:T}^\star, \bm{\eta}_{1:T}^\star\}$ of \eqref{eq:Lagrangian_func}.
\end{theorem}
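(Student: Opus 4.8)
The plan is to reduce the statement to the classical convergence theory for two-block ADMM. The crucial point --- and the only genuinely new ingredient --- is that Algorithm~\ref{alg:KS-ADMM} is not a new optimization scheme but merely an efficient implementation of the batch ADMM iteration \eqref{eq:admm_step} applied to \eqref{eq:admm_linear}: it produces exactly the same sequence $\{\mathbf{x}^{(k)}_{1:T},\mathbf{w}^{(k)}_{1:T},\bm{\eta}^{(k)}_{1:T}\}$. Once that equivalence is in place, convergence follows from a standard ADMM theorem whose hypotheses must be verified. So the proof has three steps: (i) establish the equivalence KS-ADMM $\equiv$ batch ADMM; (ii) check that $\theta_1$, $\theta_2$ and the coupling constraint satisfy the assumptions of the ADMM convergence theorem; (iii) invoke the theorem and strengthen its conclusion to iterate convergence of the full triple.

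For step (i), the $\mathbf{w}$-update \eqref{eq:admm_iteration_w} and the dual update \eqref{eq:admm_iteration_eta} are componentwise identical to \eqref{eq:wt_compute} and \eqref{eq:et_compute}, so only the primal update needs attention. After substituting the artificial noise $\mathbf{\Sigma}_t=\mathbf{I}/\rho$ and the pseudo-measurement $\mathbf{z}_t=\mathbf{w}_t+\bm{\eta}_t/\rho$ and unstacking the block-diagonal $\mathbf{R}$, $\mathbf{Q}$, $\mathbf{\Omega}$ and the unit lower-bidiagonal $\mathbf{\Psi}$, the batch subproblem \eqref{eq:x_linear} becomes exactly \eqref{eq:x_kf_linear}, which is (up to an additive constant) the negative log-posterior of the linear-Gaussian model \eqref{eq:linsubprob} with the extra pseudo-measurement channel $\mathbf{z}_t=\mathbf{\Omega}_t\mathbf{x}_t+\bm{\sigma}_t$. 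Because that model is linear and Gaussian, the RTS smoother in Section~\ref{sec:ieks}, run with the two successive update stages \eqref{eq:y_s}--\eqref{eq:z_p}, returns the exact posterior mean $\mathbf{m}^s_{1:T}$, which coincides with the unique closed-form minimizer \eqref{eq:x_linear_solution}. The recursion is well defined throughout, since $\mathbf{S}^y_t=\mathbf{H}_t\mathbf{P}_t^-\mathbf{H}_t^\T+\mathbf{R}_t\succ\mathbf{0}$ and $\mathbf{S}^z_t=\mathbf{\Omega}_t\mathbf{P}^y_t\mathbf{\Omega}_t^\T+\mathbf{I}/\rho\succ\mathbf{0}$ are sums with a positive definite matrix, and the $t=1$ initialisation and the terminal smoothing step are the standard ones. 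Hence the two iterations are identical, and it suffices to analyse batch ADMM.

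For steps (ii)--(iii), write \eqref{eq:admm_linear} in the canonical form $\min_{\mathbf{x},\mathbf{w}}\theta_1(\mathbf{x})+\theta_2(\mathbf{w})$ subject to $\mathbf{\Omega}\mathbf{x}-\mathbf{w}=\mathbf{0}$, with $\theta_1(\mathbf{x})=\tfrac12\|\mathbf{y}-\mathbf{H}\mathbf{x}\|_{\mathbf{R}^{-1}}^2+\tfrac12\|\mathbf{\Psi}\mathbf{x}-\mathbf{m}\|_{\mathbf{Q}^{-1}}^2$ and $\theta_2(\mathbf{w})=\lambda\|\mathbf{w}\|_1$. Both are closed, proper and convex; moreover $\mathbf{\Psi}$ is invertible (unit lower-bidiagonal) and $\mathbf{Q}\succ\mathbf{0}$, so $\nabla^2\theta_1=\mathbf{H}^\T\mathbf{R}^{-1}\mathbf{H}+\mathbf{\Psi}^\T\mathbf{Q}^{-1}\mathbf{\Psi}\succeq\mathbf{\Psi}^\T\mathbf{Q}^{-1}\mathbf{\Psi}\succ\mathbf{0}$, i.e. $\theta_1$ is strongly convex. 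Detectability of \eqref{eq:linsubprob} \cite{Anderson:1981} guarantees that the associated information matrix is well-behaved and hence that the convex objective \eqref{eq:Lagrangian_func} attains its (unique) minimum; since the coupling constraint is a linear equality with an obvious feasible pair, strong duality holds and the ordinary Lagrangian of \eqref{eq:admm_linear} possesses a saddle point $(\mathbf{x}^\star_{1:T},\mathbf{w}^\star_{1:T},\bm{\eta}^\star_{1:T})$, which is precisely a stationary point of the augmented Lagrangian \eqref{eq:Lagrangian_func} (at such a point $\mathbf{w}^\star=\mathbf{\Omega}\mathbf{x}^\star$, so the quadratic penalty vanishes). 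The classical two-block ADMM convergence theorem \cite{Boyd2011admm,He2000parameter} then applies with fixed $\rho>0$ and yields residual convergence $\mathbf{w}^{(k)}-\mathbf{\Omega}\mathbf{x}^{(k)}\to\mathbf{0}$, objective convergence to the optimal value, and dual convergence $\bm{\eta}^{(k)}\to\bm{\eta}^\star$. Finally, the strong convexity of $\theta_1$ upgrades objective convergence to $\mathbf{x}^{(k)}\to\mathbf{x}^\star$, and then $\mathbf{w}^{(k)}=\mathbf{\Omega}\mathbf{x}^{(k)}+(\mathbf{w}^{(k)}-\mathbf{\Omega}\mathbf{x}^{(k)})\to\mathbf{\Omega}\mathbf{x}^\star=\mathbf{w}^\star$, giving convergence of the whole triple to the stationary point of \eqref{eq:Lagrangian_func}.

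I expect the main obstacle to be step (i): verifying rigorously that the two-stage ($\mathbf{y}_t$ then $\mathbf{z}_t$) Kalman update composed with the RTS backward pass reproduces exactly the batch minimizer \eqref{eq:x_linear_solution}, including the boundary cases, and that the per-timestep objective \eqref{eq:x_kf_linear} and the stacked objective \eqref{eq:x_linear} agree term by term. A secondary subtlety is pinning down precisely where detectability enters: in the standard model with $\mathbf{P}_1,\mathbf{Q}_t,\mathbf{R}_t\succ\mathbf{0}$ the strong-convexity argument already supplies a unique minimizer and a saddle point, so I would phrase the proof to make the roles of detectability and of the unit-bidiagonal structure of $\mathbf{\Psi}$ --- the latter being what actually delivers strong convexity of $\theta_1$, hence iterate convergence of $\mathbf{x}^{(k)}$ rather than mere objective convergence --- explicit and minimal.
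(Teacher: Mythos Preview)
Your proposal is correct and follows the same route as the paper: reduce KS-ADMM to the batch ADMM iteration \eqref{eq:admm_step} and then invoke the classical two-block ADMM convergence results \cite{Boyd2011admm,He2000parameter}. The paper's own proof is only two sentences (``detectability $\Rightarrow$ convexity $\Rightarrow$ classical ADMM convergence''), so your write-up is considerably more detailed --- in particular, your explicit verification that the two-stage Kalman update plus RTS pass reproduces \eqref{eq:x_linear_solution}, and your observation that the unit-bidiagonal $\mathbf{\Psi}$ with $\mathbf{Q}\succ\mathbf{0}$ already yields strong convexity of $\theta_1$ (hence iterate rather than merely objective convergence), fill in steps the paper leaves implicit.
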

\begin{proof}
Due to the detectability assumption, the objective function is convex, and thus the result follows from the classical ADMM convergence proof \cite{Boyd2011admm,He2000parameter}. 
\end{proof}

\section{Nonlinear State Estimation by IEKS-ADMM }
\label{sec:IEKS-ADMM}
When $\mathbf{a}_t$ and $\mathbf{h}_t$ are nonlinear, the $\mathbf{x}$ subproblem arising in the ADMM iteration cannot be solved in closed form. In the following, we first present a batch solution of the nonlinear case based on a Gauss--Newton (GN) iteration and then show how it can be efficiently implemented by IEKS.  

\subsection{Batch Optimization}
\label{sec:Optimization_nonlinear}
Let us now consider the case where the state transition function $\mathbf{a}_t$ and the measurement function $\mathbf{h}_t$ in \eqref{eq:model} are nonlinear. We now proceed to rewrite the optimization~\eqref{eq:optimization} in batch form by defining the following variables
\begin{subequations}
\label{eq:nonlinear_vector_sets}
\begin{align}
&\mathbf{a}(\mathbf{x}) =  \operatorname{vec}(\mathbf{x}_1,\mathbf{x}_2- \mathbf{a}_2(\mathbf{x}_1),
\ldots,\mathbf{x}_T - \mathbf{a}_{T}(\mathbf{x}_{T-1})),\\
&\mathbf{h}(\mathbf{x}) = \operatorname{vec}(\mathbf{h}_1(\mathbf{x}_1),\mathbf{h}_2(\mathbf{x}_2),\ldots,\mathbf{h}_{T}(\mathbf{x}_{T})).
\end{align}
\end{subequations}
Note that the variables $\mathbf{x}$, $\mathbf{y}$, $\mathbf{m}$, $\mathbf{Q}$, $\mathbf{R}$ and $\mathbf{\Omega}$ have the same definitions as \eqref{eq:linear_vector_sets}. Using these variables, the $\mathbf{x}$ subproblem can be naturally transformed into
\begin{equation}\label{eq:nonlinear_whole_optimization} 
\begin{split}
\begin{aligned}
\mathbf{x}^\star  &= \mathop{\arg\min}_{\mathbf{x}} 
\frac{1}{2} \left \| \mathbf{y} - \mathbf{h}(\mathbf{x})\right \|_{\mathbf{R}^{-1}}^2 
+ \frac{1}{2} \left \| \mathbf{m} - \mathbf{a}(\mathbf{x}) \right \|_{\mathbf{Q}^{-1}}^2  \\
& \quad +  \lambda \left \|  \mathbf{\Omega} \mathbf{x} \right \|_1,
\end{aligned}
\end{split}
\end{equation}
which is also a special case of \eqref{eq:general_function}, similarly to the linear case. 

Following the ADMM, we define the augmented Lagrangian function associated with \eqref{eq:nonlinear_whole_optimization} as:
\begin{equation}\label{eq:Lagrangian_func_nonlinear}
\begin{split}
\begin{aligned}
&\mathcal{L}(\mathbf{x},\mathbf{w};\bm{\eta}) \triangleq  
\frac{1}{2} \left \| \mathbf{y} - \mathbf{h}(\mathbf{x})\right \|_{\mathbf{R}^{-1}}^2 
+ \lambda \| \mathbf{w}  \|_1 + \\ 
&  \, \frac{1}{2} \left \| \mathbf{m} - \mathbf{a}(\mathbf{x}) \right \|_{\mathbf{Q}^{-1}}^2  
 + \bm{\eta}^\T(\mathbf{w}- \mathbf{\Omega}\mathbf{x})
+ \frac{\rho}{2} \| \mathbf{w}- \mathbf{\Omega}\mathbf{x}\|^2. 
\end{aligned}
\end{split}
\end{equation}
Since the nonlinear batch solution is based on ADMM, the iteration steps of $\mathbf{w}$ and $\bm{\eta}$ are the same with the linear case (see Equations \eqref{eq:w_compute_2} and \eqref{eq:admm_iteration_eta}). Here, we focus on introducing the solution of the primal variable $\mathbf{x}$. 

When updating $\mathbf{x}$, the objective is no longer a quadratic function. However, the optimization problem can be solved with GN~\cite{Wright1999Optimization}. Here, the $\mathbf{x}$ subproblem is rewritten as 
\begin{equation}
\label{eq:x}
\begin{split}
\begin{aligned}
\underset{\mathbf{x}}{\mathrm{min}}\, f(\mathbf{x}),
\end{aligned}
\end{split}
\end{equation}
where
\begin{equation}
\begin{split}
f(\mathbf{x}) &=
  \frac{1}{2} \| \mathbf{R}^{-\frac{1}{2}}
 (\mathbf{y} - \mathbf{h}(\mathbf{x})) \| ^2 \\
 & \quad  +\frac{1}{2} \| \mathbf{Q}^{-\frac{1}{2}}
  \, (\mathbf{m} - \mathbf{a}(\mathbf{x}))\| ^2
 + \frac{\rho}{2} \| \mathbf{w} - \mathbf{\Omega} \mathbf{x} + \bm{\eta}/\rho\| ^2. \nonumber
\end{split}
\end{equation}
Then, the gradient of $f(\mathbf{x})$ is given by
\begin{equation} \label{eq:gradient_f}
\begin{split}
\begin{aligned}
\nabla  f(\mathbf{x}) 
&= \begin{bmatrix} \mathbf{R}^{-\frac{1}{2}}  \mathbf{J}_{h}(\mathbf{x} ) \\
\mathbf{Q}^{-\frac{1}{2}}  \mathbf{J}_{a}(\mathbf{x} ) \\
\rho^{\frac{1}{2}}  \mathbf{\Omega}
\end{bmatrix} ^\T
\begin{bmatrix}
   \mathbf{R}^{-\frac{1}{2}} (\mathbf{h}(\mathbf{x})- \mathbf{y} ) \\
  \mathbf{Q}^{-\frac{1}{2}}(\mathbf{a}(\mathbf{x}) - \mathbf{m}) \\
  \rho^{\frac{1}{2}} (\mathbf{\Omega} \mathbf{x} - \mathbf{w} - \bm{\eta}/\rho )
\end{bmatrix},
\end{aligned}
\end{split}
\end{equation}
where 
\begin{subequations}
\begin{align}
\mathbf{J}_h(\mathbf{x} ) &= \operatorname{blkdiag}(\mathbf{J}_{h_1},\mathbf{J}_{h_2},\ldots,\mathbf{J}_{h_T}),  \nonumber \\ 
\mathbf{J}_a (\mathbf{x} )&= \begin{pmatrix}
    \mathbf{I}   & \mathbf{0} & & \\
    - \mathbf{J}_{a_2} &  \mathbf{I} &  \ddots & \\
   & \ddots & \ddots & \mathbf{0} \\
   & &  - \mathbf{J}_{a_T} &  \mathbf{I} \\
  \end{pmatrix},  \nonumber
\end{align}
\end{subequations}
and the Hessian is $\nabla^2  f(\mathbf{x}) = \mathbf{J}^\top  \mathbf{J}(\mathbf{x}) + \mathbf{H}(\mathbf{x})$, 
where 
\begin{equation}
\label{eq:gradient}
\begin{split}
\begin{aligned}
\mathbf{J}^\top  \mathbf{J}(\mathbf{x}) &= \mathbf{J}_{h}^\T \mathbf{R}^{-1} \mathbf{J}_{h}(\mathbf{x} )+ 
     \mathbf{J}_{a}^\T \mathbf{Q}^{-1} \mathbf{J}_{a}(\mathbf{x} ) +  \rho \mathbf{\Omega}^\top \mathbf{\Omega}, \\
[\mathbf{H}(\mathbf{x})]_{ij} &= 
  \frac{1}{2} (\mathbf{h}(\mathbf{x}) - \mathbf{y})^\top \mathbf{R}^{-1} 
 \frac{\partial^2 \mathbf{h}(\mathbf{x})}{\partial \mathbf{x}_i \, \partial \mathbf{x}_j}  \nonumber \\
& \quad + \frac{1}{2} (\mathbf{a}(\mathbf{x}) - \mathbf{m})^\top \mathbf{Q}^{-1}
  \, \frac{\partial^2 \mathbf{a}(\mathbf{x})}{\partial \mathbf{x}_i \, \partial \mathbf{x}_j}.\nonumber
\end{aligned}
\end{split}
\end{equation}

In GN, avoiding the trouble of computing the residual $[\mathbf{H}(\mathbf{x})]_{ij}$, we use the approximation $\nabla^2  f(\mathbf{x}) \approx \mathbf{J}^\top  \mathbf{J}(\mathbf{x})$ to replace $\nabla^2 f(\mathbf{x}) $, which means $[\mathbf{H}(\mathbf{x})]_{ij}$ is assumed to be small enough.  
Thus, the primal variable in $\mathbf{x}$ iteration is updated by:
\begin{equation}
\label{eq:x_nonlinear_iteration}
\begin{split}
\mathbf{x}^\star &= \left[ \mathbf{J}_h^\T \mathbf{R}^{-1} \mathbf{J}_h(\mathbf{x}) 
   + \mathbf{J}_a^\T \mathbf{Q}^{-1}\mathbf{J}_a(\mathbf{x})
   + \rho \, \mathbf{\Omega}^\T\mathbf{\Omega} \right]^{-1} \\
   & \quad  \times [ \mathbf{J}_h^\T \mathbf{R}^{-1} ( \mathbf{y}  - \mathbf{h}(\mathbf{x})) + 
  \mathbf{J}_a^\T \mathbf{Q}^{-1} ( \mathbf{m} - \mathbf{a}(\mathbf{x})) \\
   & \quad + \rho \, \mathbf{\Omega}^\T ( \mathbf{w}^{(k)} + \bm{\eta}^{(k)}/\rho )  ].
\end{split}
\end{equation}
The iterations can stop after a maximum number of iterations $i_{\max}$ or if the condition $\|\mathbf{x}^{(i+1)} - \mathbf{x}^{(i)} \|_2 \leq \varepsilon$
is satisfied, where $\varepsilon$ is an error tolerance. If $\varepsilon$ is small enough, then it means that the above algorithm has (almost) converged.  The rest of the ADMM updates can be implemented similarly to the linear Gaussian case.

\subsection{The IEKS-ADMM Solver}
\label{section:x_nonlinear}
We now move on to consider the IEKS-ADMM solver. As discussed in Section~\ref{sec:ieks}, IEKS can be seen as an efficient implementation of the GN method, which inspires us to derive an efficient implementation of the batch ADMM.

Now, we rewrite the $\mathbf{x}$ subproblem \eqref{eq:x} as 
\begin{equation}
\label{eq:x_nonlinear_iekf}
\begin{split}
&\min_{\mathbf{x}_{1:T}} 
\frac{1}{2}  \sum_{t=1}^{T}  \| \mathbf{y}_t -\mathbf{h}_t(\mathbf{x}_t)\|_{\mathbf{R}_t^{-1}}^2 
+  \frac{1}{2} \sum_{t=1}^{T}  \left\|\mathbf{z}_t - \mathbf{\Omega}_t\mathbf{x}_t \right\|_{ \mathbf{\Sigma}_t^{-1} }^2 \\
& + \frac{1}{2}\sum_{t=2}^{{T}} \|\mathbf{x}_{t}-\mathbf{a}_t(\mathbf{x}_{t-1})\|_{ \mathbf{Q}_t^{-1} }^2   
+   \frac{1}{2}\| \mathbf{x}_1 -  \mathbf{m}_1 \|_{ \mathbf{P}_1^{-1} }^2.  
\end{split}
\end{equation}
In a modest scale (e.g., $T \approx 10^3$), $\mathbf{x}_{1:T}$ can be directly computed by \eqref{eq:x_nonlinear_iteration} although its computations scale as $\mathcal{O}(n_x^3 \times T^3)$. When $T$ is large, the batch ADMM will have high memory and computational requirements. In this case, the use of IEKS becomes beneficial due to its linear computational scaling. In this paper, the proposed method incorporates IEKS into ADMM to design the \mbox{IEKS-ADMM} algorithm for solving the nonlinear case. 

In the IEKS algorithm, the Gaussian smoother is run several times with $\mathbf{a}_t$ and $\mathbf{h}_t$ and their Jacobians are evaluated at the previous (inner loop) iteration. The detailed iteration steps of \mbox{IEKS-ADMM} are described in Algorithm~\ref{alg:IEKS-ADMM}.  In particular, following the prediction steps \eqref{eq:nonliear_pre} in Section \ref{sec:ieks}, the update steps for $\mathbf{y}_t$ are given by
\begin{subequations} \label{eq:nonliear_update_y}
\begin{align}
  &\mathbf{S}^y_t = \mathbf{J}_{h_t}(\mathbf{x}_t^{(i)}) \, \mathbf{P}_t^-  \,[\mathbf{J}_{h_t}(\mathbf{x}_t^{(i)})]^\T + \mathbf{R}_t, \\
  &\mathbf{K}^y_t = \mathbf{P}_t^- \,[\mathbf{J}_{h_t}(\mathbf{x}_t^{(i)})]^\T  \, [\mathbf{S}^y_t]^{-1}, \\
 & \mathbf{m}^y_t = \mathbf{m}_t^-  + \mathbf{K}^y_t [\mathbf{y}_t - \mathbf{h}_t(\mathbf{x}_t^{(i)}) - \mathbf{J}_{h_t}(\mathbf{x}_t^{(i)}) (\mathbf{m}_t^- - \mathbf{x}_t^{(i)})], \\
& \mathbf{P}^y_t = \mathbf{P}_t^- - \mathbf{K}^y_t \, \mathbf{S}^y_t \,[\mathbf{K}^y_t]^\T, 
\end{align}
\end{subequations}
and for the pseudo-measurement $\mathbf{z}_t$, the update steps are the same as in the linear case. They are given in~\eqref{eq:linear_update_z}.

Additionally, the RTS smoother steps are also described in Section \ref{sec:ieks}. We can then obtain the solution as $\mathbf{x}_{1:T}^{(k+1)} = \mathbf{m}^s_{1:{T}}$.
Note that the updates on $\mathbf{w}_{1:T}$ and  $\bm{\eta}_{1:T}$ can be implemented by \eqref{eq:wt_compute} and \eqref{eq:et_compute}, respectively. 
 
 \begin{algorithm} 
     \caption{IEKS-ADMM }    \label{alg:IEKS-ADMM}
      \KwIn{$\mathbf{y}_t$, $\mathbf{h}_t$, $\mathbf{a}_t$, $\mathbf{Q}_t$, $\mathbf{R}_t$, $\mathbf{\Omega}_t$, $t=1,\ldots,T$; parameters $\lambda$ and $\rho$;
      $\mathbf{m}_1$ and $\mathbf{P}_1$. } 
       \KwOut{$\mathbf{x}_{1:{T}}$} 
       \While{not convergent} 
        { 
         compute $\mathbf{x}_{1:T}$ by using the IEKS\;
         compute $\mathbf{w}_{1:T}$ by \eqref{eq:wt_compute}\;
         compute $\bm{\eta}_{1:T}$ by \eqref{eq:et_compute}\;
         }
    \end{algorithm}

\subsection{Convergence of IEKS-ADMM} 
\label{sec:ieks_convergence}
In this section, our aim is to prove the convergence of the IEKS-ADMM algorithm. Although we can
rely much on existing convergence results, unfortunately, when $\mathbf{a}(\mathbf{x})$ and $\mathbf{h}(\mathbf{x})$ are nonlinear, the traditional convergence analysis \cite{Boyd2011admm,Hong2016Convergence,Deng2016convergence,admm2019convergence} does not work as such. In particular, Jacobian matrices $\mathbf{J}_{a}$, $\mathbf{J}_{h}$ and linear operator $\mathbf{\Omega}$ in this paper are possibly rank-deficient, which is not covered by the existing convergence results. In the following, we will establish the convergence analysis which also covers this case. 

For notational convenience, we define 
$\theta_1(\mathbf{x}) =
\frac{1}{2}\| \mathbf{y} -\mathbf{h}(\mathbf{x})  \|_{\mathbf{R}^{-1}}^2
+ \frac{1}{2} \left \| \mathbf{m} - \mathbf{a}(\mathbf{x}) \right \|_{\mathbf{Q}^{-1}}^2 $ and 
$\theta_2(\mathbf{w}) = \lambda  \| \mathbf{w}  \|_1$. 
The variables $\mathbf{x} $ and $\mathbf{w}$ are two sets of time series, and $\theta_1(\mathbf{x})$ is a non-quadratic, possibly nonconvex function. The corresponding augmented Lagrangian function can be rewritten as

\begin{equation}\label{eq:Lagrangian_proof}
\begin{split}
\begin{aligned}
\mathcal{L}(\mathbf{x},\mathbf{w};\bm{\eta})  & = 
\theta_1(\mathbf{x}) + \theta_2(\mathbf{w})  \\
& \quad + \bm{\eta}^\T(\mathbf{w}- \mathbf{\Omega}\mathbf{x})
+ \frac{\rho}{2} \| \mathbf{w}- \mathbf{\Omega}\mathbf{x}\|^2,
\end{aligned}
\end{split}
\end{equation}
where $\mathbf{\Omega}$ can be full-row rank or full-column rank. Thus, the convergence is analyzed in two different cases. We make the following assumptions.
\begin{assumption}
	\label{assump:basic_Lipschitz}
The gradient $\nabla \theta_1(\mathbf{x})$ is Lipschitz continuous with constant $L_{\theta_1}$, that is, 
\begin{equation}
\begin{split}
 \| \nabla \theta_1(\mathbf{x}_1)  - \nabla \theta_1(\mathbf{x}_2) \|  \leq {L_{\theta_1} }  \| \mathbf{x}_1 - \mathbf{x}_2 \|,  \forall \mathbf{x}_1, \mathbf{x}_2 \in \text{dom}( \theta_1).
 \nonumber
\end{split}
\end{equation}
\end{assumption}
\begin{assumption}
	\label{assump:basic_coercive}
Function $\theta_1(\mathbf{x}) + \theta_2(\mathbf{w}) $ is lower bounded and coercive over the feasible set 
$\{ (\mathbf{x},\mathbf{w}) : \mathbf{w} = \mathbf{\Omega} \mathbf{x}\}$.
\end{assumption}
First, we prove that the sequence $\mathcal{L}(\mathbf{x}^{(k)},\mathbf{w}^{(k)};\bm{\eta}^{(k)}) $ is monotonically non-increasing in the following lemma.

\begin{lemma}[Nonincreasing sequence]
\label{lemma:nonincreasing}

Let Assumptions \ref{assump:basic_Lipschitz} and \ref{assump:basic_coercive} be satisfied and $\{\mathbf{x}^{(k)},\mathbf{w}^{(k)}, \bm{\eta}^{(k)}\}$ be the iterative sequence generated by ADMM. Assume that one of two cases is satisfied: 

\textbf{Case (a):}  There exists $\rho_0$ such that when $\rho > \rho_0$, $\mathbf{x}  \mapsto  \mathcal{L}(\mathbf{x},\mathbf{w};\bm{\eta})$  is $\mu_x$-strongly convex, that is, $\mathcal{L}(\mathbf{x},\mathbf{w};\bm{\eta})$ satisfies
\begin{equation}\label{eq:Lagrangian_func_short}
\begin{split}
&  \mathcal{L}(\mathbf{x}^{(k)},\mathbf{w}^{(k)};\bm{\eta}^{(k)})  -   \mathcal{L}(\mathbf{x}^{(k+1)},\mathbf{w}^{(k)};\bm{\eta}^{(k)}) \\
& \geq 
\langle \nabla \mathcal{L}(\mathbf{x}^{(k)},\mathbf{w}^{(k)};\bm{\eta}^{(k)}), \mathbf{x}^{(k)} -\mathbf{x}^{(k+1)} \rangle  \\
& \quad+ \frac{\mu_x}{2} \| \mathbf{x}^{(k)} -\mathbf{x}^{(k+1)}  \|^2.
\end{split}
\end{equation}
Furthermore, assume that $\rho > \max\left( \frac{2 L_{\theta_1}^2}{\kappa_a^2 \mu_x},\rho_0\right)$ and that $\mathbf{\Omega}$ has full row rank with 
\begin{equation}
\begin{split}
\mathbf{\Omega}\,\mathbf{\Omega}^\T \succeq \kappa_a^2 \mathbf{I},  \quad  \kappa_a> 0. 
\end{split}
\end{equation}

\textbf{Case (b):} $\rho > \frac{L_{\theta_1} }{  \kappa_b^2 }$, and $\mathbf{\Omega}$ has full-column rank with 
\begin{equation}
\begin{split}
\mathbf{\Omega}^\T \mathbf{\Omega} \succeq \kappa_b^2 \mathbf{I}, \quad \kappa_b > 0. 
\end{split}
\end{equation}
Then, sequence $\mathcal{L}(\mathbf{x}^{(k)},\mathbf{w}^{(k)};\bm{\eta}^{(k)}) $ is nonincreasing in $k$. 
\end{lemma}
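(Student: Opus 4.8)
\emph{Proof plan.} The plan is a ``sufficient decrease of the augmented Lagrangian'' argument, adapted to the two features peculiar to this setting: the smooth block $\theta_1$ may be nonconvex, and only a \emph{one-sided} rank condition on $\mathbf{\Omega}$ is available. I would treat the IEKS inner loop as returning the exact minimizer of the $\mathbf{x}$-subproblem, read one ADMM sweep in the order $\mathbf{w}$-update, $\mathbf{x}$-update, $\bm{\eta}$-update (a harmless re-indexing of the iterates for $k\ge 1$), and, writing $\mathcal{L}^{(k)}=\mathcal{L}(\mathbf{x}^{(k)},\mathbf{w}^{(k)};\bm{\eta}^{(k)})$, telescope the change of $\mathcal{L}$ in \eqref{eq:Lagrangian_proof} as
\begin{equation*}
\mathcal{L}^{(k)}-\mathcal{L}^{(k+1)}=\Delta_{\mathbf{w}}^{(k)}+\Delta_{\mathbf{x}}^{(k)}+\Delta_{\bm{\eta}}^{(k)},
\end{equation*}
where $\Delta_{\mathbf{w}}^{(k)},\Delta_{\mathbf{x}}^{(k)},\Delta_{\bm{\eta}}^{(k)}$ are the decreases produced by the three block updates. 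The target is $\Delta_{\mathbf{w}}^{(k)}+\Delta_{\mathbf{x}}^{(k)}+\Delta_{\bm{\eta}}^{(k)}\ge 0$.

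Two of the three pieces are cheap. Since $\mathbf{w}\mapsto\mathcal{L}$ is $\rho$-strongly convex (the quadratic penalty adds $\rho\mathbf{I}$ and $\theta_2=\lambda\|\cdot\|_1$ is convex), $\Delta_{\mathbf{w}}^{(k)}\ge\frac{\rho}{2}\|\mathbf{w}^{(k+1)}-\mathbf{w}^{(k)}\|^2\ge 0$. For the $\mathbf{x}$-update, $\mathbf{x}\mapsto\mathcal{L}$ is $\mu_x$-strongly convex --- by hypothesis \eqref{eq:Lagrangian_func_short} in \textbf{Case (a)} (once $\rho>\rho_0$), and automatically with $\mu_x=\rho\kappa_b^2-L_{\theta_1}>0$ in \textbf{Case (b)}, since Assumption~\ref{assump:basic_Lipschitz} makes $\theta_1(\cdot)+\frac{L_{\theta_1}}{2}\|\cdot\|^2$ convex while the penalty injects curvature $\rho\mathbf{\Omega}^\top\mathbf{\Omega}\succeq\rho\kappa_b^2\mathbf{I}$ in $\mathbf{x}$ --- so by optimality of $\mathbf{x}^{(k+1)}$ one gets $\Delta_{\mathbf{x}}^{(k)}\ge\frac{\mu_x}{2}\|\mathbf{x}^{(k+1)}-\mathbf{x}^{(k)}\|^2$. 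Finally $\mathcal{L}$ is affine in $\bm{\eta}$ and $\bm{\eta}^{(k+1)}-\bm{\eta}^{(k)}=\rho(\mathbf{w}^{(k+1)}-\mathbf{\Omega}\mathbf{x}^{(k+1)})$, so $\Delta_{\bm{\eta}}^{(k)}=-\frac{1}{\rho}\|\bm{\eta}^{(k+1)}-\bm{\eta}^{(k)}\|^2$ exactly.

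The heart of the proof is to dominate $\frac{1}{\rho}\|\bm{\eta}^{(k+1)}-\bm{\eta}^{(k)}\|^2$ by $\Delta_{\mathbf{x}}^{(k)}$ (and, if needed, $\Delta_{\mathbf{w}}^{(k)}$). Because $\mathbf{x}$ is the last primal block before the dual step, its stationarity condition collapses, after substituting the dual update, to $\nabla\theta_1(\mathbf{x}^{(k+1)})=\mathbf{\Omega}^\top\bm{\eta}^{(k+1)}$ for every $k$; subtracting the same identity at iteration $k-1$ and invoking Assumption~\ref{assump:basic_Lipschitz} gives $\|\mathbf{\Omega}^\top(\bm{\eta}^{(k+1)}-\bm{\eta}^{(k)})\|\le L_{\theta_1}\|\mathbf{x}^{(k+1)}-\mathbf{x}^{(k)}\|$. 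In \textbf{Case (a)} the full-row-rank bound $\mathbf{\Omega}\mathbf{\Omega}^\top\succeq\kappa_a^2\mathbf{I}$ (which gives $\|\mathbf{v}\|^2\le\kappa_a^{-2}\|\mathbf{\Omega}^\top\mathbf{v}\|^2$) turns this into $\|\bm{\eta}^{(k+1)}-\bm{\eta}^{(k)}\|^2\le\frac{L_{\theta_1}^2}{\kappa_a^2}\|\mathbf{x}^{(k+1)}-\mathbf{x}^{(k)}\|^2$, so $\Delta_{\mathbf{x}}^{(k)}+\Delta_{\bm{\eta}}^{(k)}\ge\big(\frac{\mu_x}{2}-\frac{L_{\theta_1}^2}{\kappa_a^2\rho}\big)\|\mathbf{x}^{(k+1)}-\mathbf{x}^{(k)}\|^2\ge 0$ precisely when $\rho>\max\!\big(2L_{\theta_1}^2/(\kappa_a^2\mu_x),\rho_0\big)$, and together with $\Delta_{\mathbf{w}}^{(k)}\ge 0$ this proves the claim. \textbf{Case (b)} is argued analogously, now passing from $\|\mathbf{x}^{(k+1)}-\mathbf{x}^{(k)}\|$ to $\|\mathbf{\Omega}(\mathbf{x}^{(k+1)}-\mathbf{x}^{(k)})\|$ via $\mathbf{\Omega}^\top\mathbf{\Omega}\succeq\kappa_b^2\mathbf{I}$ and feeding in the same stationarity identity, with $\rho>L_{\theta_1}/\kappa_b^2$ securing both $\mu_x>0$ and the domination.

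The main obstacle is exactly this dual-ascent estimate. The classical shortcut --- optimality of the last primal block yielding $-\bm{\eta}^{(k+1)}=\nabla\theta_2(\mathbf{w}^{(k+1)})$ and hence $\|\bm{\eta}^{(k+1)}-\bm{\eta}^{(k)}\|\le L_{\theta_2}\|\mathbf{w}^{(k+1)}-\mathbf{w}^{(k)}\|$ --- is unavailable because $\theta_2=\lambda\|\cdot\|_1$ is nonsmooth, so one is forced through the $\mathbf{x}$-block's stationarity, which works only with a rank assumption on $\mathbf{\Omega}$ that lets one recover $\|\bm{\eta}^{(k+1)}-\bm{\eta}^{(k)}\|$ from $\|\mathbf{\Omega}^\top(\bm{\eta}^{(k+1)}-\bm{\eta}^{(k)})\|$; since the Jacobians inside $\theta_1$ and $\mathbf{\Omega}$ may both be rank-deficient there is no two-sided bound, and this is precisely why the statement must split into the full-row-rank regime (Case (a), where one also needs enough penalty curvature, $\rho>\rho_0$, to convexify the $\mathbf{x}$-subproblem) and the full-column-rank regime (Case (b)). The rest is bookkeeping: choosing the Young-inequality weights so that the coefficient of $\|\mathbf{x}^{(k+1)}-\mathbf{x}^{(k)}\|^2$ reduces to exactly the thresholds stated. (Assumption~\ref{assump:basic_coercive} plays no role in this lemma; it is needed only afterwards, to turn the monotone, bounded sequence $\mathcal{L}^{(k)}$ into convergence of the iterates.)
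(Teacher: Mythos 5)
Your Case (a) is essentially the paper's own argument: the same three-term telescoping of $\mathcal{L}$ over the $\mathbf{x}$-, $\mathbf{w}$-, and $\bm{\eta}$-updates, the same stationarity identity $\nabla\theta_1(\mathbf{x}^{(k+1)})=\mathbf{\Omega}^\T\bm{\eta}^{(k+1)}$ combined with Assumption~\ref{assump:basic_Lipschitz} and $\mathbf{\Omega}\mathbf{\Omega}^\T\succeq\kappa_a^2\mathbf{I}$ to get $\|\bm{\eta}^{(k+1)}-\bm{\eta}^{(k)}\|^2\le (L_{\theta_1}^2/\kappa_a^2)\|\mathbf{x}^{(k+1)}-\mathbf{x}^{(k)}\|^2$, the same $\tfrac{\mu_x}{2}$-decrease from strong convexity, the same nonincrease of the $\mathbf{w}$-block by convexity, and the same threshold $\rho>2L_{\theta_1}^2/(\kappa_a^2\mu_x)$. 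One caveat: your reordering of the sweep to $(\mathbf{w},\mathbf{x},\bm{\eta})$ is not a harmless re-indexing of the iteration in \eqref{eq:admm_step} --- there the dual step \eqref{eq:admm_iteration_eta} uses $\mathbf{w}^{(k+1)}$ while the $\mathbf{x}$-stationarity involves $\mathbf{w}^{(k)}$, so for the paper's ordering the identity $\nabla\theta_1(\mathbf{x}^{(k+1)})=\mathbf{\Omega}^\T\bm{\eta}^{(k+1)}$ carries an extra $\rho\,\mathbf{\Omega}^\T(\mathbf{w}^{(k)}-\mathbf{w}^{(k+1)})$ term that the paper silently drops. Your ordering is the standard one that makes the identity exact, but it analyzes a slightly different iteration from the one the lemma is stated for.

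The genuine gap is Case (b). You claim the dual-ascent term $\tfrac{1}{\rho}\|\bm{\eta}^{(k+1)}-\bm{\eta}^{(k)}\|^2$ is dominated by ``feeding in the same stationarity identity,'' but that identity only controls $\|\mathbf{\Omega}^\T(\bm{\eta}^{(k+1)}-\bm{\eta}^{(k)})\|$, and when $\mathbf{\Omega}$ is merely full column rank (tall), $\mathbf{\Omega}^\T$ has a nontrivial kernel, so no bound of the form $\|\bm{\eta}^{(k+1)}-\bm{\eta}^{(k)}\|\le C\,\|\mathbf{\Omega}^\T(\bm{\eta}^{(k+1)}-\bm{\eta}^{(k)})\|$ exists --- this is exactly the obstruction you yourself name one paragraph earlier as the reason the statement splits into two regimes, so it cannot then be waved away in the full-column-rank regime. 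The paper's Case (b) proceeds differently on the primal side: it expands $\mathcal{L}(\mathbf{x}^{(k)},\cdot)-\mathcal{L}(\mathbf{x}^{(k+1)},\cdot)$ directly and uses the descent lemma together with $\mathbf{\Omega}^\T\mathbf{\Omega}\succeq\kappa_b^2\mathbf{I}$ to obtain a decrease of $\tfrac{\rho\kappa_b^2-L_{\theta_1}}{2}\|\mathbf{x}^{(k+1)}-\mathbf{x}^{(k)}\|^2$ (equivalent in substance to your ``automatic strong convexity with $\mu_x=\rho\kappa_b^2-L_{\theta_1}$'' observation), but its final display in \eqref{eq:q_case_b} likewise leaves the $+\tfrac{1}{\rho}\|\bm{\eta}^{(k+1)}-\bm{\eta}^{(k)}\|^2$ term uncancelled before asserting the sign. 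So your Case (b) does not close as sketched; an additional ingredient (e.g., showing the dual differences stay in the range of $\mathbf{\Omega}$, or bounding them through the $\mathbf{w}$-block) would be required, and the paper's own proof of this case shares the deficiency.
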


\begin{proof}
See Appendix \ref{pf:lemma:nonincreasing}. 
\end{proof}
Next we prove the convergence of Algorithm \ref{alg:IEKS-ADMM}. For that we need a couple of lemmas which are presented in the following.

\begin{lemma} [Convergence of GN]
\label{lemma:gn}
Let $ \nabla f(\mathbf{x})$ be Lipschitz continuous with constant $L_{f}$, $\mathbf{H}(\mathbf{x})$ be bounded by a constant, that is, $\|\mathbf{H}(\mathbf{x}) \| \le \epsilon_h $. If $\mathbf{J}^\T \mathbf{J}(\mathbf{x}^{(i)})  \succeq \mu^2 \mathbf{I}$ where $\mu > 0$ is a constant, and $ \epsilon_h  < \mu^2$, 
then the sequence $\mathbf{x}^{(i)}$ converges to a local minimum $\mathbf{x}^\star$. In particular, the convergence is quadratic when $\epsilon_h \to 0$, and (at least) linear convergence is obtained when $\epsilon_h < \mu^2$.  
\end{lemma}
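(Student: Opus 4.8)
The plan is to read \eqref{eq:x_nonlinear_iteration} as the quasi-Newton recursion $\mathbf{x}^{(i+1)} = \mathbf{x}^{(i)} - \mathbf{B}_i^{-1}\nabla f(\mathbf{x}^{(i)})$ with $\mathbf{B}_i := \mathbf{J}^\T \mathbf{J}(\mathbf{x}^{(i)})$ the curvature matrix from \eqref{eq:gradient}, and then to establish a contractive recursion for the error $\mathbf{e}^{(i)} := \mathbf{x}^{(i)} - \mathbf{x}^\star$ about a stationary point $\mathbf{x}^\star$ of $f$. First I would record the elementary consequences of the hypotheses: since $\mathbf{B}_i \succeq \mu^2 \mathbf{I}$ the step is well defined with $\|\mathbf{B}_i^{-1}\| \le \mu^{-2}$, and since $\nabla^2 f(\mathbf{x}) = \mathbf{J}^\T\mathbf{J}(\mathbf{x}) + \mathbf{H}(\mathbf{x})$ with $\|\mathbf{H}(\mathbf{x})\| \le \epsilon_h < \mu^2$, one gets $\nabla^2 f(\mathbf{x}) \succeq (\mu^2 - \epsilon_h)\mathbf{I} \succ 0$ wherever the curvature bound holds; in particular, a limit point with $\nabla f = \mathbf{0}$ is automatically a strict local minimizer, which is how $\mathbf{x}^\star$ gets identified as a local minimum.

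Next I would derive the error recursion. Using $\nabla f(\mathbf{x}^\star) = \mathbf{0}$ and the fundamental theorem of calculus,
\[
\nabla f(\mathbf{x}^{(i)}) = \Bigl( \int_0^1 \nabla^2 f\bigl(\mathbf{x}^\star + s\,\mathbf{e}^{(i)}\bigr)\, \dif s \Bigr)\, \mathbf{e}^{(i)},
\]
so that
\[
\mathbf{e}^{(i+1)} = \mathbf{B}_i^{-1} \Bigl[\, \mathbf{B}_i - \int_0^1 \nabla^2 f\bigl(\mathbf{x}^\star + s\,\mathbf{e}^{(i)}\bigr)\, \dif s \,\Bigr]\, \mathbf{e}^{(i)}.
\]
Splitting $\nabla^2 f = \mathbf{J}^\T \mathbf{J} + \mathbf{H}$ inside the bracket, the $\mathbf{H}$-part is bounded in norm by $\epsilon_h$, while the $\mathbf{J}^\T\mathbf{J}$-part, $\mathbf{B}_i - \int_0^1 \mathbf{J}^\T\mathbf{J}(\mathbf{x}^\star + s\,\mathbf{e}^{(i)})\, \dif s$, is $O(\|\mathbf{e}^{(i)}\|)$ near $\mathbf{x}^\star$ by (local) Lipschitz continuity of $\mathbf{x} \mapsto \mathbf{J}^\T \mathbf{J}(\mathbf{x})$, say with constant $c$. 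Combined with $\|\mathbf{B}_i^{-1}\| \le \mu^{-2}$ this gives
\[
\|\mathbf{e}^{(i+1)}\| \le \frac{1}{\mu^2}\bigl( c\,\|\mathbf{e}^{(i)}\| + \epsilon_h \bigr)\,\|\mathbf{e}^{(i)}\|.
\]

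The conclusion then follows from a standard ball-invariance argument: since $\beta := \epsilon_h/\mu^2 < 1$, choose $\delta > 0$ so small that $c\delta/\mu^2 + \beta < 1$ (and so that the curvature bound and the Lipschitz estimate hold on the ball of radius $\delta$ about $\mathbf{x}^\star$); if $\|\mathbf{e}^{(0)}\| \le \delta$, the recursion keeps every $\mathbf{e}^{(i)}$ in that ball and forces $\|\mathbf{e}^{(i)}\|$ to decay geometrically, so $\mathbf{x}^{(i)} \to \mathbf{x}^\star$ with asymptotic linear rate at most $\epsilon_h/\mu^2$. When $\epsilon_h \to 0$ the linear term disappears and the recursion collapses to $\|\mathbf{e}^{(i+1)}\| \le (c/\mu^2)\,\|\mathbf{e}^{(i)}\|^2$, i.e.\ quadratic convergence, consistent with the fact that Gauss--Newton becomes an exact-Hessian (Newton) iteration in the zero-residual limit $\mathbf{H} \equiv \mathbf{0}$.

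I expect the main obstacle to be the bootstrapping/ball-invariance step rather than the algebra: one must show the iterates never leave the neighborhood on which $\mathbf{J}^\T \mathbf{J} \succeq \mu^2 \mathbf{I}$ and the Lipschitz bound are available, which forces the statement to be genuinely local (an initialization close enough to $\mathbf{x}^\star$). A secondary subtlety is that the $\mathbf{J}^\T\mathbf{J}$-variation term is controlled by local Lipschitzness of the Hessian, slightly stronger than the stated $L_f$-Lipschitzness of $\nabla f$; this is harmless because $\mathbf{a}_t, \mathbf{h}_t$ are assumed smooth, but if one prefers to use only the stated hypotheses the bound can be relaxed to $o(\|\mathbf{e}^{(i)}\|)$, still yielding linear convergence with rate $\epsilon_h/\mu^2$ and superlinear convergence as $\epsilon_h \to 0$. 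Confirming that $\mathbf{x}^\star$ is a local minimum is the easy part, being immediate from $\nabla^2 f(\mathbf{x}^\star) \succeq (\mu^2 - \epsilon_h)\mathbf{I} \succ 0$.
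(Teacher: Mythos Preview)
Your proposal is correct and follows essentially the same approach as the paper: both derive the error recursion $\|\mathbf{e}^{(i+1)}\| \le \mu^{-2}(c\,\|\mathbf{e}^{(i)}\| + \epsilon_h)\,\|\mathbf{e}^{(i)}\|$ by writing $\mathbf{x}^{(i+1)}-\mathbf{x}^\star = \mathbf{B}_i^{-1}[\mathbf{B}_i(\mathbf{x}^{(i)}-\mathbf{x}^\star) - (\nabla f(\mathbf{x}^{(i)})-\nabla f(\mathbf{x}^\star))]$, expressing the gradient difference via the integral of $\nabla^2 f$, and splitting $\nabla^2 f = \mathbf{J}^\T\mathbf{J} + \mathbf{H}$. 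Your version is in fact more complete than the paper's, as you supply the ball-invariance/initialization argument and correctly flag that bounding the $\mathbf{J}^\T\mathbf{J}$-variation term needs local Lipschitzness of the Hessian rather than of the gradient---points the paper leaves implicit.
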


\begin{proof}
See Appendix \ref{pf:lemma_gn}.	  
\end{proof}

Based on Lemmas \ref{lemma:nonincreasing} and \ref{lemma:gn}, we can now establish the convergence by GN-ADMM in the following lemma. 

\begin{lemma}[Convergence of GN-ADMM]
\label{lemma:GN-ADMM} 
Let assumptions of Lemmas~\ref{lemma:nonincreasing} and \ref{lemma:gn} be satisfied.
Then, the sequence $\{\mathbf{x}^{(k)}, \mathbf{w}^{(k)}, \bm{\eta}^{(k)}\}$ generated by GN-ADMM algorithm converges to a local minimum $(\mathbf{x}^\star, \mathbf{w}^\star, \bm{\eta}^\star).$ 
\end{lemma}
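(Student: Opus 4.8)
The plan is to follow the now-standard template for ADMM on a nonconvex two-block splitting problem \cite{Hong2016Convergence,Deng2016convergence,admm2019convergence}, specialised to the structure here: a smooth (possibly nonconvex) block $\theta_1$ whose $\mathbf{x}$-subproblem is solved by the inner GN loop of Lemma~\ref{lemma:gn}, and a nonsmooth but globally Lipschitz block $\theta_2(\mathbf{w}) = \lambda\|\mathbf{w}\|_1$. First I would use the $\mathbf{w}$-optimality condition: stationarity of \eqref{eq:admm_iteration_w} gives $0 \in \partial\theta_2(\mathbf{w}^{(k+1)}) + \bm{\eta}^{(k)} + \rho(\mathbf{w}^{(k+1)} - \mathbf{\Omega}\mathbf{x}^{(k+1)})$, which by the dual update \eqref{eq:admm_iteration_eta} is exactly $-\bm{\eta}^{(k+1)} \in \partial\theta_2(\mathbf{w}^{(k+1)})$. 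Since $\partial(\lambda\|\cdot\|_1)$ is uniformly bounded (by $\lambda$ coordinatewise), the whole dual sequence $\{\bm{\eta}^{(k)}\}$ is bounded. Completing the square in \eqref{eq:Lagrangian_proof} yields $\mathcal{L}(\mathbf{x}^{(k)},\mathbf{w}^{(k)};\bm{\eta}^{(k)}) \ge \theta_1(\mathbf{x}^{(k)}) + \theta_2(\mathbf{w}^{(k)}) - \tfrac{1}{2\rho}\|\bm{\eta}^{(k)}\|^2$, so by Assumption~\ref{assump:basic_coercive} the augmented Lagrangian is bounded below along the iterates; together with Lemma~\ref{lemma:nonincreasing} the scalar sequence $\mathcal{L}(\mathbf{x}^{(k)},\mathbf{w}^{(k)};\bm{\eta}^{(k)})$ converges.

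Next I would extract from the proof of Lemma~\ref{lemma:nonincreasing} the sufficient-decrease estimate $\mathcal{L}^{(k)} - \mathcal{L}^{(k+1)} \ge c\big(\|\mathbf{x}^{(k+1)} - \mathbf{x}^{(k)}\|^2 + \|\mathbf{w}^{(k+1)} - \mathbf{w}^{(k)}\|^2 + \|\bm{\eta}^{(k+1)} - \bm{\eta}^{(k)}\|^2\big)$ for a constant $c>0$ that is positive precisely under the lower bounds on $\rho$ assumed there (Case (a) or Case (b)). Telescoping and using convergence of $\mathcal{L}^{(k)}$ shows these successive-difference norms are square-summable, hence vanish; in particular $\|\mathbf{w}^{(k+1)} - \mathbf{\Omega}\mathbf{x}^{(k+1)}\| = \rho^{-1}\|\bm{\eta}^{(k+1)} - \bm{\eta}^{(k)}\| \to 0$, i.e. asymptotic primal feasibility. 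Boundedness of $\{\bm{\eta}^{(k)}\}$, the lower-bounded decreasing $\mathcal{L}^{(k)}$, and the coercivity of $\theta_1 + \theta_2$ on the feasible set force $\{(\mathbf{x}^{(k)},\mathbf{w}^{(k)})\}$ to be bounded, so the iterate sequence has at least one accumulation point $(\mathbf{x}^\star,\mathbf{w}^\star,\bm{\eta}^\star)$.

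I would then pass to the limit along a convergent subsequence. Asymptotic feasibility gives $\mathbf{w}^\star = \mathbf{\Omega}\mathbf{x}^\star$; outer semicontinuity of $\partial\theta_2$ together with $-\bm{\eta}^{(k+1)} \in \partial\theta_2(\mathbf{w}^{(k+1)})$ gives $-\bm{\eta}^\star \in \partial\theta_2(\mathbf{w}^\star)$; and the $\mathbf{x}$-stationarity $\nabla_{\mathbf{x}}\mathcal{L}(\mathbf{x}^{(k+1)},\mathbf{w}^{(k)};\bm{\eta}^{(k)}) \to 0$ — which holds because by Lemma~\ref{lemma:gn} the inner GN loop drives $\mathbf{x}^{(k+1)}$ to satisfy $\nabla_{\mathbf{x}}\mathcal{L}=0$ (up to the tolerance $\varepsilon$), while $\|\mathbf{w}^{(k+1)}-\mathbf{w}^{(k)}\|$ and $\|\bm{\eta}^{(k+1)}-\bm{\eta}^{(k)}\|$ vanish — yields $\nabla\theta_1(\mathbf{x}^\star) - \mathbf{\Omega}^\T\bm{\eta}^\star = 0$ (using Assumption~\ref{assump:basic_Lipschitz} for continuity of $\nabla\theta_1$). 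These three relations are the KKT conditions of \eqref{eq:general_constrained_function}, so $(\mathbf{x}^\star,\mathbf{w}^\star,\bm{\eta}^\star)$ is a stationary point. To upgrade \emph{stationary} to \emph{local minimum} I would invoke the local curvature supplied by the hypotheses: in Case (a), $\mathbf{x}\mapsto\mathcal{L}$ is $\mu_x$-strongly convex; in all cases Lemma~\ref{lemma:gn} assumes $\mathbf{J}^\T\mathbf{J}(\mathbf{x}^\star)\succeq\mu^2\mathbf{I}$ with $\|\mathbf{H}(\mathbf{x}^\star)\|\le\epsilon_h<\mu^2$, so $\nabla^2 f(\mathbf{x}^\star) = \mathbf{J}^\T\mathbf{J}(\mathbf{x}^\star)+\mathbf{H}(\mathbf{x}^\star)\succ 0$ (note in Case (b) this is automatic since $\rho\,\mathbf{\Omega}^\T\mathbf{\Omega}\succeq\rho\kappa_b^2\mathbf{I}$ already appears in $\mathbf{J}^\T\mathbf{J}$). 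Combined with convexity of $\theta_2$ and the affine constraint, $(\mathbf{x}^\star,\mathbf{w}^\star)$ is a strict local minimum of $\theta_1(\mathbf{x})+\theta_2(\mathbf{w})$ on $\{\mathbf{w}=\mathbf{\Omega}\mathbf{x}\}$, equivalently of \eqref{eq:nonlinear_whole_optimization}. Finally, full-sequence convergence follows from the sufficient-decrease inequality together with a Kurdyka–Łojasiewicz argument for $\mathcal{L}$ near $(\mathbf{x}^\star,\mathbf{w}^\star,\bm{\eta}^\star)$ — or, more directly, from local strong convexity around the isolated minimiser $\mathbf{x}^\star$, which makes the limit point unique once the iterates enter its basin.

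The main obstacle I anticipate is exactly this last step: showing an accumulation point is a genuine \emph{local minimum} (not merely a KKT point) and that the \emph{entire} primal–dual sequence converges to it. This is where the rank-deficiency of $\mathbf{J}_a$, $\mathbf{J}_h$, $\mathbf{\Omega}$ flagged before Assumption~\ref{assump:basic_Lipschitz} bites: absent the curvature constants $\mu_x$ (Lemma~\ref{lemma:nonincreasing}) and $\mu$ (Lemma~\ref{lemma:gn}) the $\mathbf{x}$-block need not be strongly convex and the KKT set need not be isolated, so pinning down a unique limit requires the KŁ inequality for this specific $\mathcal{L}$ (or exploiting those strong-convexity constants), and checking that the descent constant $c$ stays strictly positive uniformly under both Case (a) and Case (b). A secondary technical point is the inexactness of the GN inner loop: one should either assume exact subproblem solves or carry the tolerance $\varepsilon$ through the limiting argument and let $\varepsilon\to 0$.
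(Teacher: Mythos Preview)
Your approach is sound and considerably more thorough than the paper's own proof, which is a four-line sketch: the paper simply invokes Lemma~\ref{lemma:nonincreasing} for monotone descent, uses Assumption~\ref{assump:basic_coercive} together with the (not-quite-justified) bound $\mathcal{L}\ge\theta_1+\theta_2$ to claim boundedness, notes that Lemma~\ref{lemma:gn} makes each inner GN loop converge to a local minimiser of the $\mathbf{x}$-subproblem and that the $\mathbf{w}$-subproblem is convex with a unique minimiser, and then asserts convergence of the full ADMM sequence without any subsequential, KKT, or limit-uniqueness argument. Your route---bounding the duals via $-\bm{\eta}^{(k+1)}\in\partial\theta_2(\mathbf{w}^{(k+1)})$ and the global bound on $\partial(\lambda\|\cdot\|_1)$, completing the square for a genuine lower bound on $\mathcal{L}$, telescoping a three-block sufficient-decrease estimate, passing to the KKT limit, and upgrading to a strict local minimum via the curvature hypothesis $\mathbf{J}^\T\mathbf{J}\succeq\mu^2\mathbf{I}$ with $\epsilon_h<\mu^2$---is the standard nonconvex-ADMM template and is the more defensible argument; what it buys is an actual proof of full-sequence convergence rather than a heuristic. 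Two cautions when extracting ingredients from the paper: the proof of Lemma~\ref{lemma:nonincreasing} as written contains no $\|\mathbf{w}^{(k+1)}-\mathbf{w}^{(k)}\|^2$ term (you must add it yourself via $\rho$-strong convexity of the $\mathbf{w}$-subproblem), and in Case~(b) the positive residual $\tfrac{1}{\rho}\|\bm{\eta}^{(k+1)}-\bm{\eta}^{(k)}\|^2$ is left unabsorbed there, so your three-term descent constant $c>0$ will require additional work in that case. The obstacles you flag at the end---rank deficiency undermining isolation of the minimiser, and inexactness of the inner GN solve---are real and are not addressed by the paper's proof either.
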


\begin{proof}
By Lemma~\ref{lemma:nonincreasing}, the sequence $\mathcal{L}(\mathbf{x}^{(k)}, \mathbf{w}^{(k)}; \bm{\eta}^{(k)})$ is nonincreasing in $k$. By Assumption \ref{assump:basic_coercive}, the sequence $\{\mathbf{x}^{(k)}, \mathbf{w}^{(k)}, \bm{\eta}^{(k)}\}$ is bounded, because $\mathcal{L}(\mathbf{x}^{(k)}, \mathbf{w}^{(k)}; \bm{\eta}^{(k)})$ is upper bounded by $\mathcal{L}(\mathbf{x}^{(0)}, \mathbf{w}^{(0)}; \bm{\eta}^{(0)})$ and nonincreasing. It is also lower bounded by 
\begin{equation}
\begin{split}
\mathcal{L}(\mathbf{x}^{(k)}, \mathbf{w}^{(k)}; \bm{\eta}^{(k)}) \geq \theta_1(\mathbf{x}^{(k)}) +\theta_2(\mathbf{w}^{(k)}). 
\end{split}
\end{equation}
By Lemma \ref{lemma:gn}, there exists a local minimum $\mathbf{x}^{\star}$ such that the sequence ${\mathbf{x}^{(i)}}$ converges to $\mathbf{x}^{\star}$, which is a local minimum of $\mathbf{x}$ subproblem. The $\mathbf{w}$ subproblem is convex \cite{boyd2004Convex} and thus there exists a unique minimum $\mathbf{w}^{\star}$. We then deduce that the iterative sequence $\{\mathbf{x}^{(k)}, \mathbf{w}^{(k)}, \bm{\eta}^{(k)}\}$ generated by GN-ADMM converges to $(\mathbf{x}^\star, \mathbf{w}^\star, \bm{\eta}^\star)$. 
\end{proof}

The equivalence of GN and IEKS can now be used to show that IEKS-ADMM converges to a local minimum $(\mathbf{x}^\star_{1:T}, \mathbf{w}^\star_{1:T},\bm{\eta}^\star_{1:T})$.

\begin{theorem}[Convergence of IEKS-ADMM]
\label{theorem1}
If the sequence $\{\mathbf{x}^{(k)}, \mathbf{w}^{(k)}, \bm{\eta}^{(k)}\}$ generated by GN-ADMM algorithm converges to a local minimum $(\mathbf{x}^\star, \mathbf{w}^\star, \bm{\eta}^\star)$, then the sequence $\{\mathbf{x}^{(k)}_{1:T}, \mathbf{w}^{(k)}_{1:T}, \bm{\eta}^{(k)}_{1:T}\}$ generated by IEKS-ADMM algorithm converges to the local minimum $(\mathbf{x}^\star_{1:T}, \mathbf{w}^\star_{1:T}, \bm{\eta}^\star_{1:T}).$ 
\end{theorem}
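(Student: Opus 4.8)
\emph{Proof proposal.} The plan is to show that, under a consistent initialization, the IEKS-ADMM algorithm (Algorithm~\ref{alg:IEKS-ADMM}) and the batch GN-ADMM algorithm of Section~\ref{sec:Optimization_nonlinear} generate \emph{the same} iterate sequence, modulo the fixed bijection between the time-series representation $\{\cdot_{1:T}\}$ and the stacked representation $\operatorname{vec}(\cdot)$. Convergence of the latter to the local minimum $(\mathbf{x}^\star, \mathbf{w}^\star, \bm{\eta}^\star)$ then immediately gives convergence of the former to the same point. Thus the whole argument reduces to making the GN\,$\equiv$\,IEKS equivalence of Section~\ref{sec:ieks} precise in the presence of the pseudo-measurement, and then running an induction over the outer ADMM index $k$.

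First I would verify the equivalence at one outer step. Fix $k$ and assume the current iterates of the two algorithms agree, i.e.\ $\operatorname{vec}(\mathbf{x}^{(k)}_{1:T}) = \mathbf{x}^{(k)}$, and likewise for $\mathbf{w}^{(k)}$, $\bm{\eta}^{(k)}$. With the artificial noise $\mathbf{\Sigma}_t = \mathbf{I}/\rho$ and pseudo-measurement $\mathbf{z}_t = \mathbf{w}^{(k)}_t + \bm{\eta}^{(k)}_t/\rho$, the term $\tfrac{1}{2}\sum_t \|\mathbf{z}_t - \mathbf{\Omega}_t\mathbf{x}_t\|^2_{\mathbf{\Sigma}_t^{-1}}$ is exactly $\tfrac{\rho}{2}\|\mathbf{w}^{(k)} - \mathbf{\Omega}\mathbf{x} + \bm{\eta}^{(k)}/\rho\|^2$, so the separable problem \eqref{eq:x_nonlinear_iekf} is identical to the batch $\mathbf{x}$-subproblem $\min_{\mathbf{x}} f(\mathbf{x})$. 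Because $\mathbf{\Omega}_t$ is linear, the $\mathbf{z}_t$ term requires no linearization, and the IEKS inner loop — linearize $\mathbf{a}_t, \mathbf{h}_t$ about $\mathbf{x}^{(i)}_{1:T}$ via \eqref{eq:nonliear_appro}, run the Kalman filter \eqref{eq:nonliear_pre}, \eqref{eq:nonliear_update_y}, \eqref{eq:linear_update_z} and the RTS smoother \eqref{eq:nonliear_smoother} — produces, by the discussion in Section~\ref{sec:ieks}, the exact minimizer of the linearized problem \eqref{eq:ieks_linerisation} augmented with the already-quadratic pseudo-measurement term, which is precisely one batch GN step \eqref{eq:x_nonlinear_iteration}. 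Hence, started from the common point $\mathbf{x}^{(k)}_{1:T}$, the IEKS inner iterates and the GN inner iterates \eqref{eq:x_nonlinear_iteration} coincide for every $i$; since the batch GN inner loop converges (a prerequisite for GN-ADMM to be well-defined, and in any case guaranteed by Lemma~\ref{lemma:gn}), the IEKS inner loop converges to the same local minimum $\mathbf{x}^{(k+1)}$ of the $\mathbf{x}$-subproblem, so $\mathbf{x}^{(k+1)}_{1:T} = \mathbf{m}^s_{1:T}$ equals the unstacking of $\mathbf{x}^{(k+1)}$. Finally, the IEKS-ADMM updates \eqref{eq:wt_compute} and \eqref{eq:et_compute} are the time-componentwise form of the batch soft-threshold step \eqref{eq:w_compute_2} and the multiplier update \eqref{eq:admm_iteration_eta}, so $\mathbf{w}^{(k+1)}_{1:T}$ and $\bm{\eta}^{(k+1)}_{1:T}$ match their batch counterparts as well. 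Induction on $k$ then yields that the two sequences are identical for all $k$.

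With the sequences identified, the theorem is immediate: by hypothesis $\{\mathbf{x}^{(k)}, \mathbf{w}^{(k)}, \bm{\eta}^{(k)}\}$ converges to $(\mathbf{x}^\star, \mathbf{w}^\star, \bm{\eta}^\star)$, hence the identical IEKS-ADMM sequence converges to the same limit, namely $(\mathbf{x}^\star_{1:T}, \mathbf{w}^\star_{1:T}, \bm{\eta}^\star_{1:T})$. I expect the first step to be the only real obstacle: one must carefully check that the Kalman-filter/RTS-smoother recursions solve \emph{exactly} the same linear least-squares problem as the batch GN step at every inner iteration — in particular that the pseudo-measurement bookkeeping, the prior $(\mathbf{m}_1, \mathbf{P}_1)$, and the bidiagonal structure of $\mathbf{\Psi}$ / $\mathbf{J}_a$ are all accounted for — since all of the content of the theorem is carried by that equivalence, while the ADMM-level induction and the passage to the limit are routine.
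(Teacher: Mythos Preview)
Your proposal is correct and follows essentially the same route as the paper: the paper's proof simply invokes the known equivalence between GN and IEKS iterates (citing Bell's result) to identify the two sequences, and then appeals to the convergence of GN-ADMM (Lemma~\ref{lemma:GN-ADMM}) to conclude. Your argument is a more explicit unpacking of exactly this equivalence---spelling out the pseudo-measurement bookkeeping and the induction on $k$---but the underlying idea is identical.
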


\begin{proof}
According to\cite{Bell1993ekf}, the sequence $\{\mathbf{x}^{(k)}, \mathbf{w}^{(k)}, \bm{\eta}^{(k)}\}$ generated by the GN method and the sequence $\{\mathbf{x}^{(k)}_{1:T}, \mathbf{w}^{(k)}_{1:T}, \bm{\eta}^{(k)}_{1:T}\}$ generated by IEKS are identical. Based on Lemma \ref{lemma:GN-ADMM}, we deduce the iterative sequence $\{\mathbf{x}^{(k)}_{1:T}, \mathbf{w}^{(k)}_{1:T}, \bm{\eta}^{(k)}_{1:T}\}$ generated by IEKS-ADMM is locally convergent to $(\mathbf{x}^\star_{1:T}, \mathbf{w}^\star_{1:T}, \bm{\eta}^\star_{1:T})$.
\end{proof} 

\section{Extension to general algorithmic Framework}
\label{sec:framework}

\subsection{The Proposed Framework}
In this subsection, we present a general algorithmic framework based on the combination of the extended Kalman smoother and variable splitting. As the smoother solution only applies to the $\mathbf{x}_{1:T}$-subproblem, here we only formulate the corresponding $\mathbf{x}_{1:T}$-subproblem which can be solved with IEKS. The different variants in the proposed framework are distinguished by three different choices: the pseudo-measurement, the pseudo-measurement covariance, and the pseudo-measurement model matrix. 

When $\mathbf{a}_t$ and $\mathbf{h}_t$ are linear functions, we have the following general objective function for the $\mathbf{x}_{1:T}$-subproblem:
\begin{equation}\label{eq:x_kf_linear_general}
\begin{split}
 &\min_{\mathbf{x}_{1:T}}  \frac{1}{2}  \sum_{t=1}^{T}  \| \mathbf{y}_t -\mathbf{H}_t\mathbf{x}_{t} \|_{\mathbf{R}_t^{-1}}^2
+ 
\frac{1}{2}\sum_{t=2}^{{T}} \|\mathbf{x}_{t}-\mathbf{A}_t\mathbf{x}_{t-1} \|_{ \mathbf{Q}_t^{-1} }^2  
\\
& \quad +   \frac{1}{2}\| \mathbf{x}_1 -  \mathbf{m}_1 \|_{ \mathbf{P}_1^{-1} }^2 
+ \frac{1}{2}  \sum_{t=1}^{T}  \left\|\mathbf{\Delta}_{t} - \mathbf{\Theta}_{t}\mathbf{x}_t \right\|_{ \mathbf{\Sigma}_{t}^{-1} }^2, 
\end{split}
\end{equation}
and when $\mathbf{a}_t$ and $\mathbf{h}_t$ are nonlinear functions, we have
\begin{equation}
\label{eq:x_nonlinear_iekf_general}
\begin{split}
&\min_{\mathbf{x}_{1:T}} 
\frac{1}{2}  \sum_{t=1}^{T}  \| \mathbf{y}_t -\mathbf{h}_t(\mathbf{x}_t)\|_{\mathbf{R}_t^{-1}}^2 +
 \frac{1}{2}\sum_{t=2}^{{T}} \|\mathbf{x}_{t}-\mathbf{a}_t(\mathbf{x}_{t-1})\|_{ \mathbf{Q}_t^{-1} }^2 
 \\
& \quad+  \frac{1}{2}\| \mathbf{x}_1 -  \mathbf{m}_1 \|_{ \mathbf{P}_1^{-1} }^2
+ \frac{1}{2}  \sum_{t=1}^{T}  \left\|\mathbf{\Delta}_{t} - \mathbf{\Theta}_{t}\mathbf{x}_t \right\|_{ \mathbf{\Sigma}_{t}^{-1} }^2.
\end{split}
\end{equation}
In the above objective functions, $\mathbf{\Delta}_{t}$ is the pseudo-measurement, $\mathbf{\Sigma}_{t}$ is the pseudo-measurement covariance, and $\mathbf{\Theta}_{t}$ is the pseudo-measurement model matrix. 

\begin{table*}[h]
\caption{Different choices of IEKS-based variable splitting algorithms }                                                                                                                                                                                                                                                                                                                                                                                                                                       
\begin{center}
 \centering
\begin{tabular}{|c|c|c|c|c|}
\hline
 {Method}  & {Related Quadratic Term} & {$\mathbf{\Theta}_{t}$} & {$\mathbf{\Delta}_{t}$} & {$\mathbf{\Sigma}_{t}$}  \\
\hline
PRS  & 
$\frac{\rho}{2} \| \mathbf{w}_t - \mathbf{\Omega}_t\mathbf{x}_t + \bm{\eta}_t/ \rho\|^2$ 
& $\mathbf{\Omega}_t$
& $ \mathbf{w}_t + \bm{\eta}_t/\rho$ & $ \mathbf{I}/{\rho}$ 
\\ 
\hline
SBM  &
$\frac{\rho}{2}\left\| \mathbf{w}_t   - \mathbf{\Omega}_t\mathbf{x}_t  +  \bm{\eta}_t \right\|^2$
& $\mathbf{\Omega}_t$
&$ \bm{\eta}_t + \mathbf{w}_t$ & $ \rho \mathbf{I}$ 
\\ 
\hline
FOPD &
$\frac{1}{2\rho}\|\mathbf{x}_t- ( \mathbf{x}_t^{(k)} - \mathbf{\Omega}_t^\T  \mathbf{w}_t )\|^2$
& $\mathbf{I}$ &
$ \mathbf{x}_t^{(k)} - \mathbf{\Omega}_t^\T  \mathbf{w}_t$ & $ \rho \mathbf{I}$ \\
\hline
ADMM & 
$\frac{\rho}{2} \| \mathbf{w}_t - \mathbf{\Omega}_t\mathbf{x}_t + \bm{\eta}_t/ \rho\|^2$
& $\mathbf{\Omega}_t$
& $ \mathbf{w}_t + \bm{\eta}_t/\rho$ & ${\mathbf{I}}/{\rho}$ 
\\  
\hline 
\end{tabular}
\label{tab1}
\end{center}
\end{table*}

As mentioned in Section \ref{sec:various_methods}, various variable splitting such as PRS, SBM, and FOPD can be used to solve the problems \eqref{eq:x_linear} and \eqref{eq:x}. Their KS / IEKS-based counterparts can be obtained by selecting the aforementioned pseudo-measurement model parameters as shown in Table~\ref{tab1}. The algorithms for solving the optimization problems are then the same as Algorithms~\ref{alg:KS-ADMM} and \ref{alg:IEKS-ADMM} except that the pseudo-measurement updates in~\eqref{eq:linear_update_z} are replaced with
       \begin{subequations}  \label{eq:nonlinear_update_general}
       \begin{align}
       \mathbf{S}^{\delta}_t &= \mathbf{\Theta}_{t} \, \mathbf{P}^y \, \mathbf{\Theta}_{t}^\T + \mathbf{\Sigma}_{t}, \\
       \mathbf{K}^{\delta}_t &= \mathbf{P}_t^- \, \mathbf{\Theta}_{t}^\T \, [\mathbf{S}^{\delta}_t]^{-1}, \\
       \mathbf{m}_t &= \mathbf{m}^y_t + \mathbf{K}^{\delta}_{t} \,
                       [\mathbf{\Delta}_{t} - \mathbf{\Theta}_{t}\,\mathbf{m}^y_t], \\
       \mathbf{P}_t &= \mathbf{P}^y_t - \mathbf{K}^{\delta}_t \, \mathbf{S}^{\delta}_t \, [\mathbf{K}^{\delta}_t]^\T,
        \end{align}
       \end{subequations}
and the updates of the other variables are performed using the appropriate algorithm (see Section~\ref{sec:various_methods}).

\subsection{Computational Complexity}
\label{sec:computation}
This section investigates the computational complexity of the KS / IEKS based variable splitting methods. The proposed methods are iterative, in that we use several numbers of iterations to compute the minimal points also for the primal variable $\mathbf{x}^\star_{1:T}$ in \eqref{eq:nonlinear_whole_optimization}. However, we can always use a bounded number of iterations and thus we only need to determine the complexity of a single iteration to determine the complexity of the whole method. In our case, the computational burden of the auxiliary variable and the dual variable is low compared with the matrix inversions in the primal variable update. Asymptotically, the computational complexities of  \eqref{eq:wt_compute} and \eqref{eq:et_compute} are both $\mathcal{O}(n_x^2 \, T)$. 

In our method, we compute the primal variable update using KS and IEKS, instead of computing matrix inversions explicitly. The time complexity of (iteration of) KS and IEKS is $\mathcal{O}(n_x^3 \, T)$ \cite{Bell1994smoother,simo2013Bayesian,simo2008RTS} (assuming $n_y \le n_x$), while the dominating computation in batch variable splitting methods is the matrix inversion with $\mathcal{O}(n_x^3 \, T^3)$ complexity \cite{Boyd2011admm}. 
Because of the total $\mathcal{O}(n_x^3 \, T)$ computational complexity, the proposed method is especially applicable to large-scale dataset.

\section{Numerical Experiments}
\label{sec:Experiments}
In the following, we demonstrate the KS and IEKS based variable splitting methods in numerical experiments. We first provide several simulated results to study the performance with varying regularization parameter. Then, we turn our attention to the behavior of the proposed methods with respect to the convergence curve and the computational efficiency. Finally, we report the results for large-scale signal estimation and demonstrate the effectiveness of the methodology in a tomographic reconstruction task.

\subsection{Linear Gaussian Simulation Experiment}
\label{results:linear}
Consider a four-dimensional linear tracking model (see, e.g., \cite{simo2013Bayesian}) where the state contains rectangular coordinates $x_1$ and $x_2$, and velocity variables $x_3$ and $x_4$. The state of the system at time step $t$ is $\mathbf{x}_t = \begin{bmatrix} x_{1,t} & x_{2,t} & x_{3,t} & x_{4,t} \end{bmatrix}^\T$. The transition and measurement model matrices are 
$$
\mathbf{A}_t = 
\begin{bmatrix}
1 & 0 & \triangle t &0 \\
0 & 1 & 0 &\triangle t \\
0 & 0 & 1 & 0 \\
0 & 0 & 0 & 1 
\end{bmatrix},
\mathbf{H}_t = 
\begin{bmatrix}
1 & 0 & 0 &0 \\
0 & 1 & 0 & 0
\end{bmatrix}. 
$$
The matrix $\mathbf{\Omega}_t$ and the covariance for the transition are
$$
\mathbf{\Omega}_t = 
\begin{bmatrix}
0 & 0 & 1 &0 \\
0 & 0 & 0 & 1
\end{bmatrix},
\mathbf{Q}_t = q_c \,
\begin{bmatrix} 
\frac{\Delta t^3 }{3} & 0 & \frac{\Delta t^2 }{2} & 0 \\
0 & \frac{\Delta t^3 }{3}  & 0 & \frac{\Delta t^2 }{2}  \\
 \frac{\Delta t^2 }{2}  & 0 &  {\Delta t }  & 0 \\
0 &  \frac{\Delta t^2 }{2}  & 0 & {\Delta t } 
\end{bmatrix}.
$$
with $q_c = 1/2$, $\Delta t = 0.1$, the measurement noise covariance $\mathbf{R}_t = \sigma^2 \mathbf{I}$ with $\sigma = 0.2$, and $T = 100$ (small scale). The relative error is calculated by 
\begin{equation} \label{rho:parameter_function}
\begin{split}
\begin{aligned}
\frac{\sum_{t=1}^{T} \| \mathbf{x}_t^{(k)}-  \mathbf{x}_t^{\text{true}}  \|_2}{\sum_{t=1}^{T} \|\mathbf{x}_t^{\text{true}}\|_2},
\end{aligned}
\end{split}
\end{equation} 
where $\mathbf{x}_t^{\text{true}}$ is the ground truth at time step $t$ and $\mathbf{x}_t^{(k)}$ is the $k$:th iterate at time step $t$. The goal here is to estimate dynamic signals from the noisy measurements $\mathbf{y}_{1:T}$. 

In this experiment, we first illustrate the computations for $1000$ values of the regularizing penalty parameter $\lambda$ in the interval $[0.01, \, 10]$. We remark that other parameters, for example, the parameter $\rho$ in ADMM and $\mbox{KS-ADMM}$, are chosen according to the existing guidelines \cite{Boyd2011admm}, with no aim at further optimizing the convergence performance. 
The CPU times of various KS-based variable splitting methods are listed in Fig.~\ref{fig:time_lam}. The plotted result is an average over $30$ experiments. For $1000$ values of parameter $\lambda$, the total number of ADMM iterations required is less than $20$, which takes around $0.02$ seconds in total. Thus, in the small-scale dataset, the parameter $\lambda$ has a less effect on the computational complexity when $\lambda$ is varying. Fig.~\ref{fig:1d_err_lam} shows the relative error as a function of regularization parameter $\lambda$, using \mbox{KS-PRS}, \mbox{KS-SBM}, \mbox{KS-FOPD}, and \mbox{KS-ADMM}.  As expected, we observe that the relative error is dependent on a proper choice of $\lambda$. We test different methods for the parameter $\lambda$, and find empirically that the lowest relative errors are achieved with $\lambda =1 $. 

\begin{figure}[h]
	\begin{minipage}[t]{0.49\linewidth}
		\centering
		\includegraphics[width=0.98\textwidth, height=0.7\textwidth]{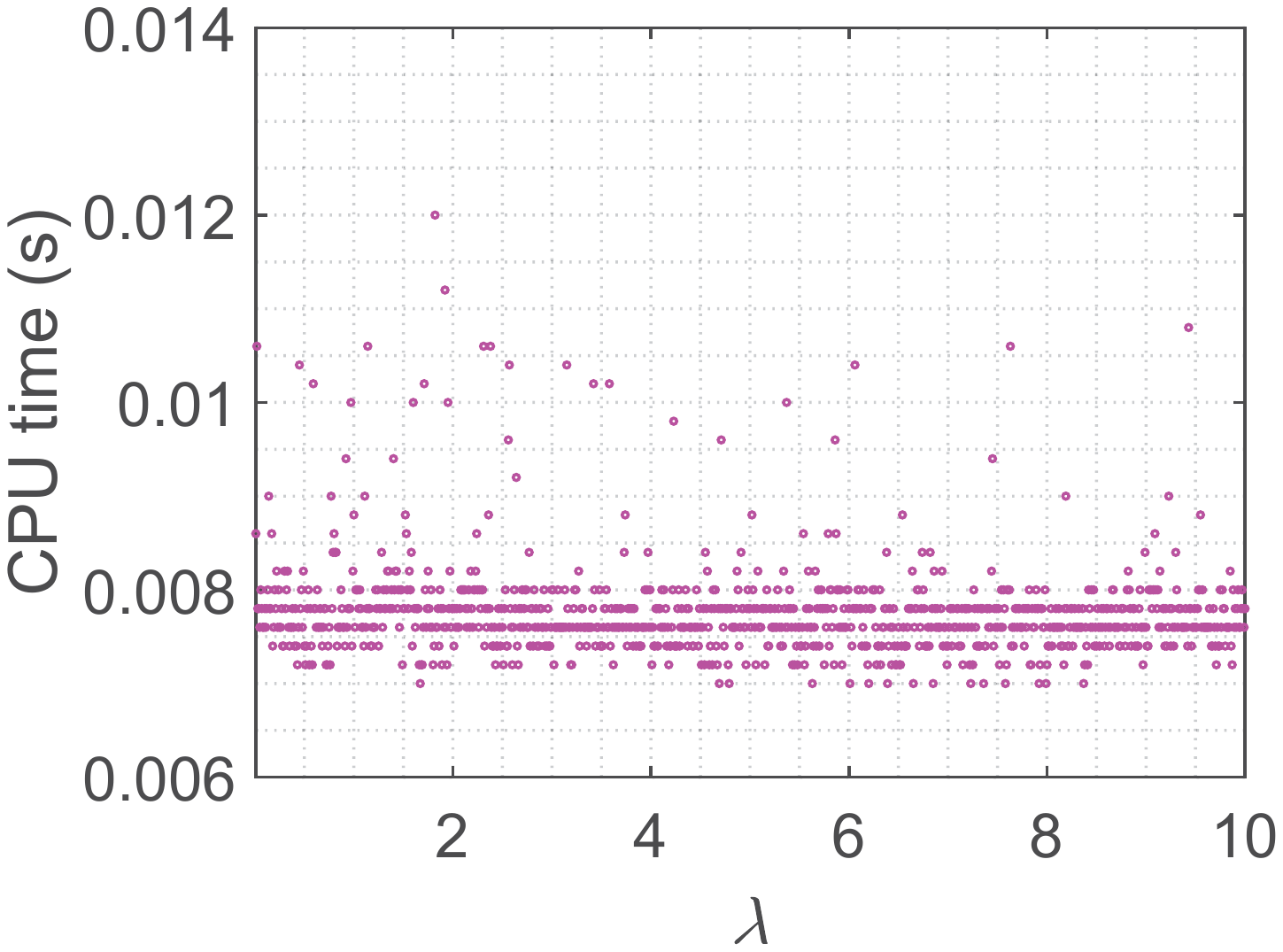}
	\end{minipage}
	\begin{minipage}[t]{0.49\linewidth}
	\centering
	\includegraphics[width=0.98\textwidth, height=0.7\textwidth]{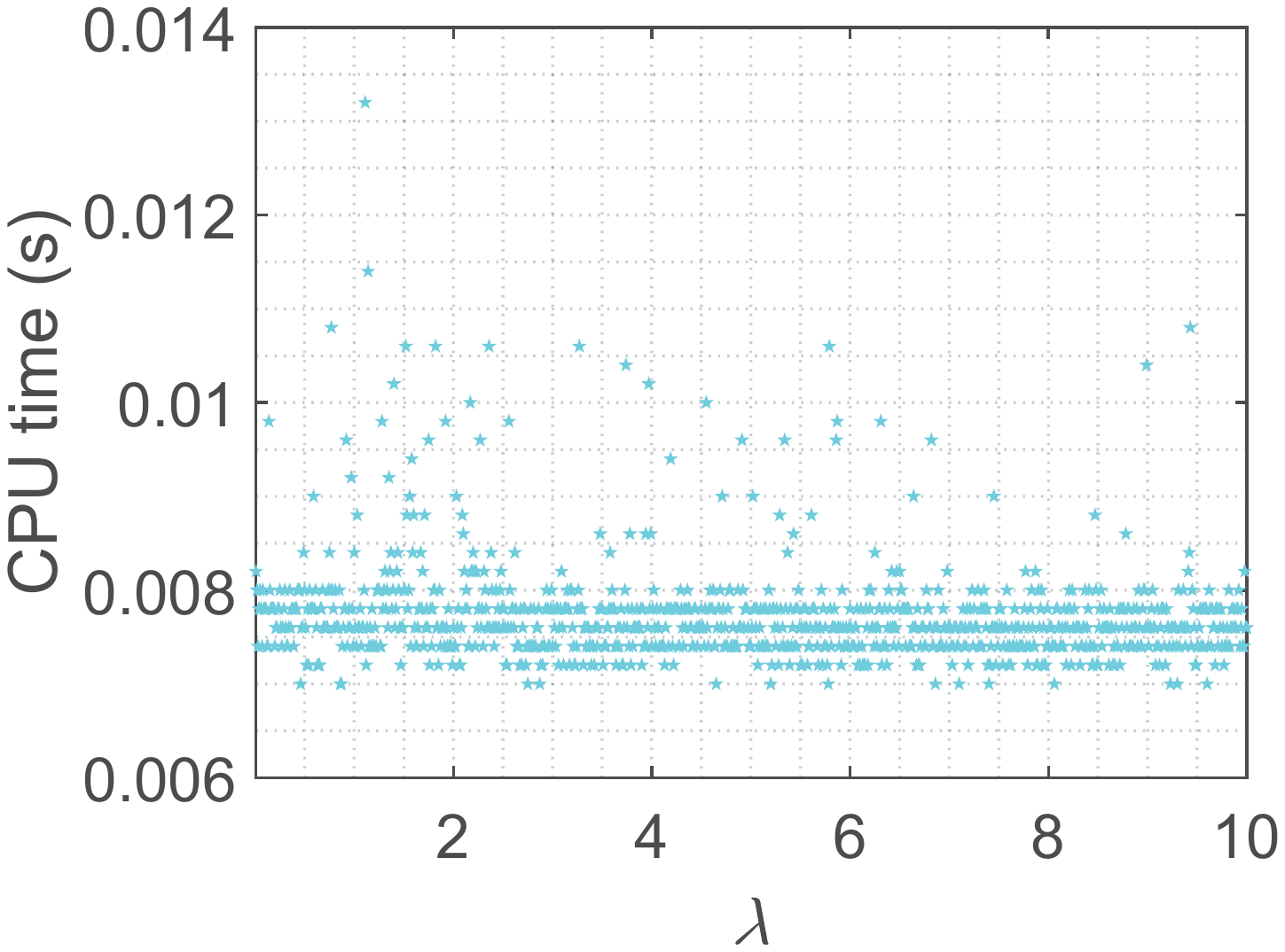}
\end{minipage}
	\\
	\begin{minipage}[t]{0.49\linewidth}
		\centering
		\includegraphics[width=0.98\textwidth, height=0.7\textwidth]{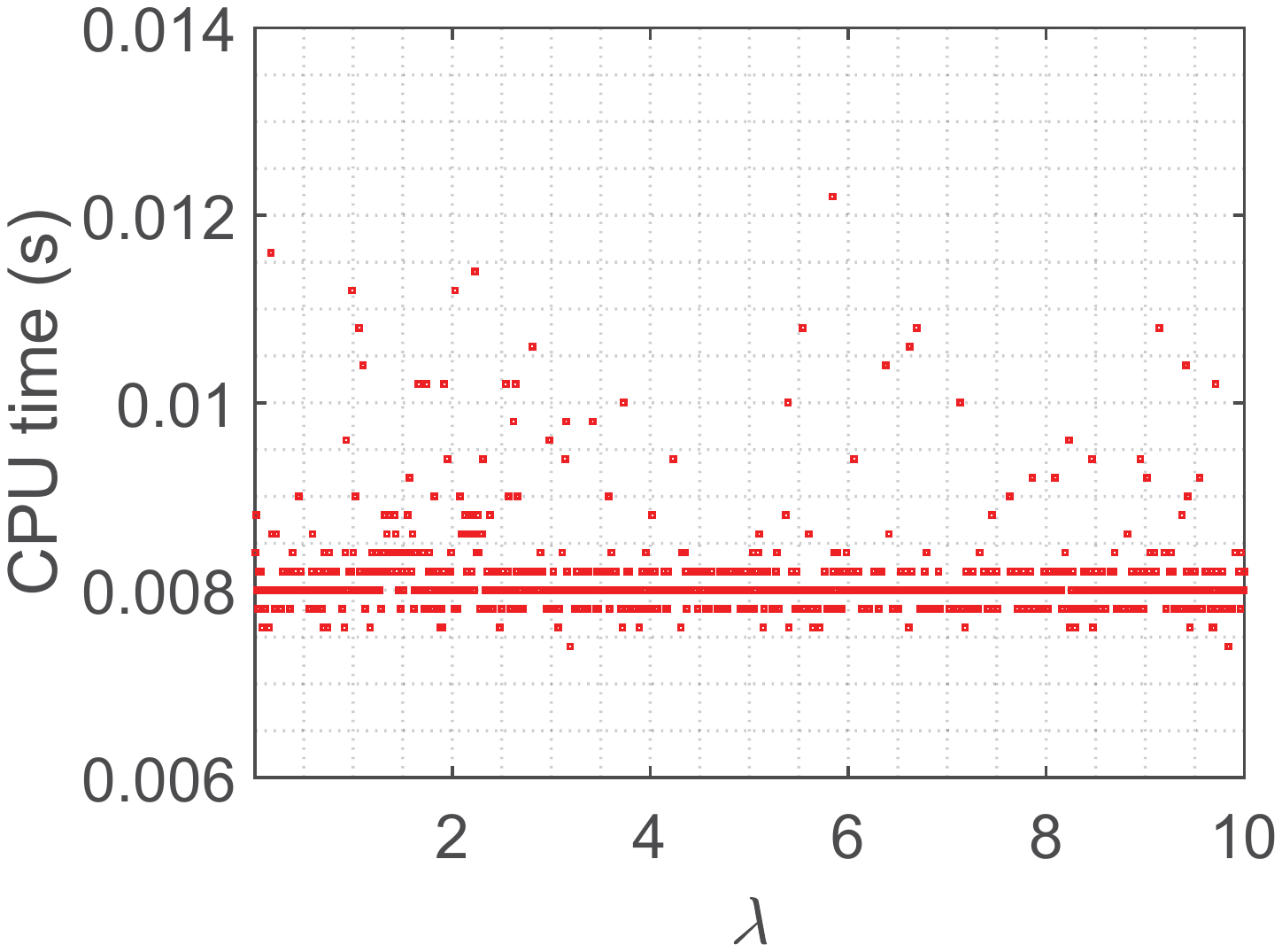}
	\end{minipage}
	\begin{minipage}[t]{0.49\linewidth}
	\centering
	\includegraphics[width=0.98\textwidth, height=0.7\textwidth]{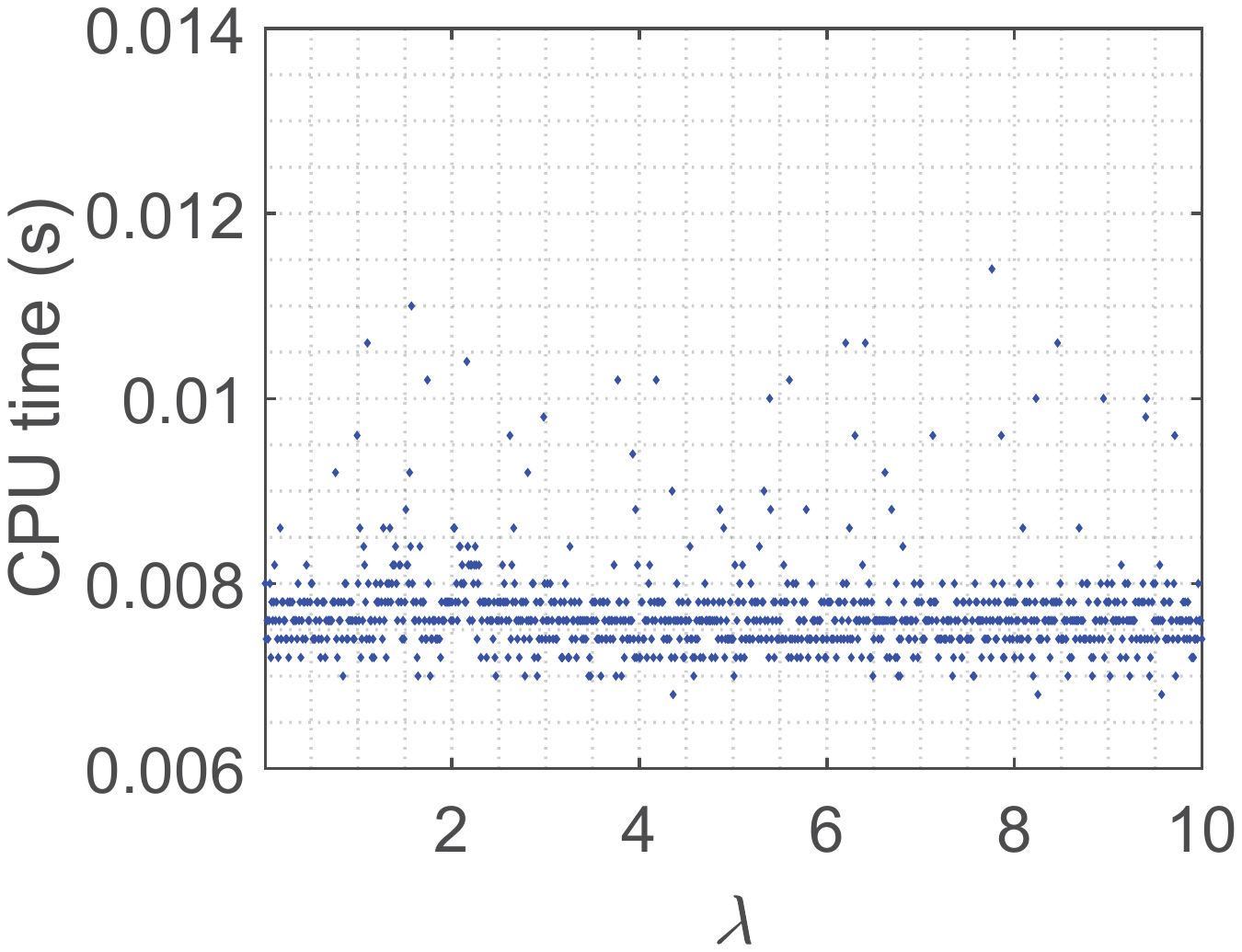}
\end{minipage}
\caption{Average CPU time when increasing the value of $\lambda$ for KS-PRS, KS-SBM, KS-FOPD, and KS-ADMM (from left-top to right-bottom, respectively).}
	\label{fig:time_lam}
\end{figure}
\begin{figure}[!htb]  
	\centerline{\includegraphics[width=0.48\textwidth, height=0.32\textwidth]
		{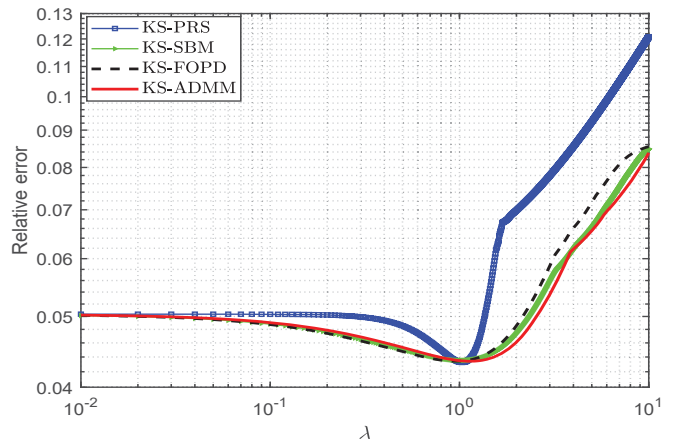}}
	\caption{Relative error as function of the parameter $\lambda$ in the four-dimensional linear tracking model.}
	\label{fig:1d_err_lam}
\end{figure}

Additionally, we compare the convergence speed (relative error versus iteration number) and the running time (average CPU time versus iteration number) generated by batch versions of PRS \cite{He2014peaceman}, SBM~\cite{Goldstein2009Split}, FOPD \cite{Chambolle2011fopd}, ADMM~\cite{Boyd2011admm}, to the proposed KS-based variable splitting methods. We also evaluate the performance without adding an extra analysis-$L_1$-regularization term (i.e., $\lambda = 0$) in which case the optimization problem \eqref{eq:x_kf_linear} can be solved by KS. Fig.~\ref{fig:1d_iteration} (a) shows the number of iterations required to solve the estimation problem. In tens of iteration numbers, all the methods have roughly the same relative errors. Fig.~\ref{fig:1d_iteration} (b) shows the average CPU time as function of number of iterations. Note that in order to speed up the KS-based variable splitting methods, we compute the gains $\mathbf{K}_t^y$,  $\mathbf{K}_t^z$ and  $\mathbf{G}_t$ only at the first iteration, and use the pre-computated matrices in the following iterations. Although PRS and \mbox{KS-PRS}, SBM and \mbox{KS-SBM}, FOPD and \mbox{KS-FOPD}, ADMM and \mbox{KS-ADMM} have the same convergence speed, not surprisingly, \mbox{KS-SBM}, \mbox{KS-FOPD} and \mbox{KS-ADMM} have a lower CPU time. When $T = 100 $,  KS and the KS-based variable splitting methods have similar CPU time, but KS has a worse relative error. As shown in Fig.~\ref{fig:1d_iteration} (b), running the PRS, SBM, FOPD and ADMM solvers is time-consuming. The KF-based variable splitting methods take about 0.03 seconds to reach 10 iterations, while the classical optimization approaches such as FOPD and ADMM take $7$ times longer. 
\begin{figure}[ht]
\begin{minipage}[ht]{1\linewidth}
\centering
\includegraphics[width=0.95\textwidth, height=0.65\textwidth]{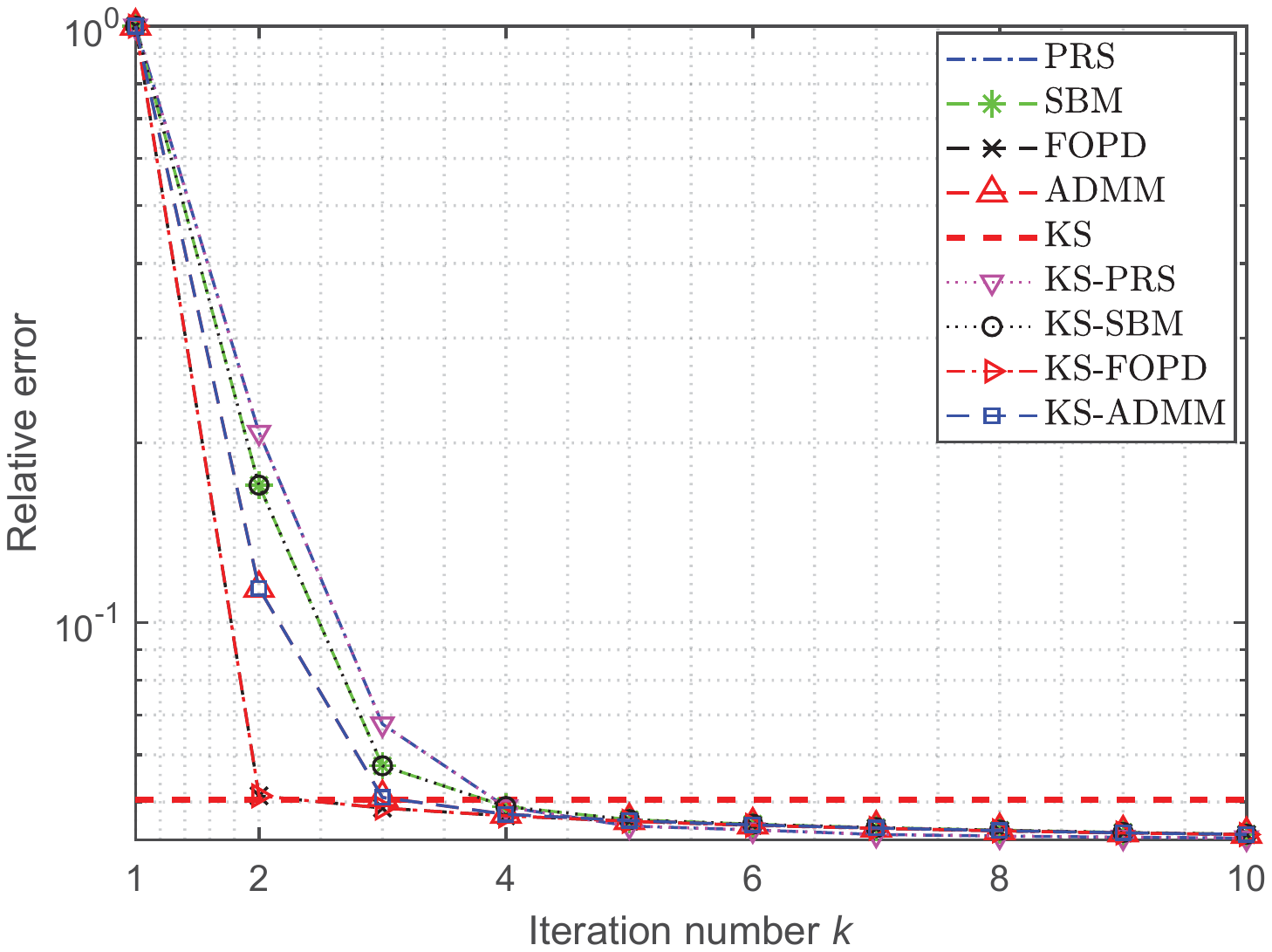}
\label{fig:1d_cov_300}
\centerline{(a)}
\end{minipage}
\\
\begin{minipage}[ht]{1\linewidth}
\centering
\includegraphics[width=0.95\textwidth, height=0.65\textwidth]{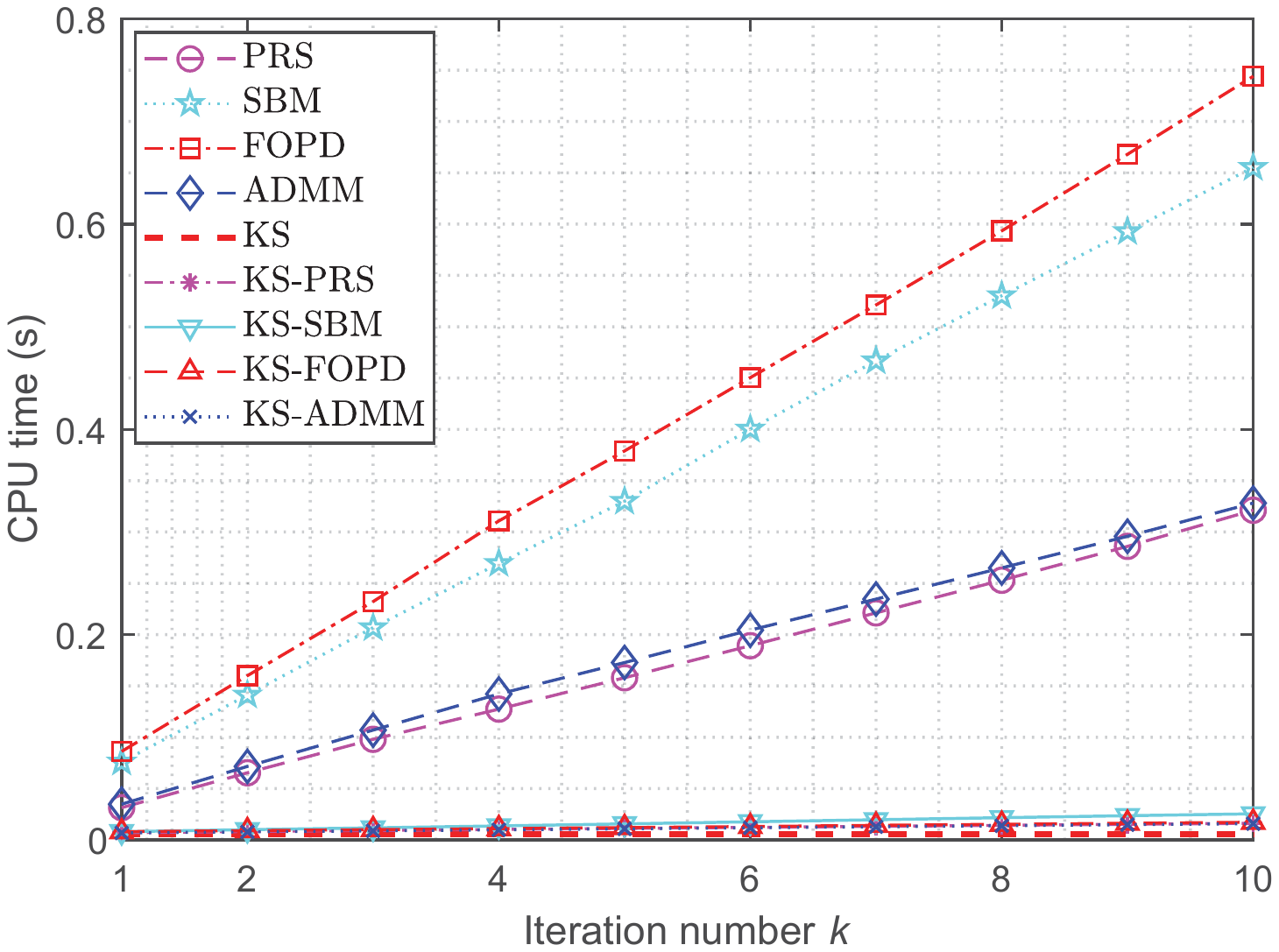}
\label{fig:1d_time}
\centerline{(b)}
\end{minipage}
\caption{Performance as function of iteration number $k$ with $T = 100$ in the linear experiment: (a) relative error versus iteration number, with the $y$-axis in log-scale; (b) average CPU time versus iteration number.}
\label{fig:1d_iteration}
\end{figure}

The benefit of our approach is highlighted by the fact that the methods can efficiently solve a large-scale dynamic signal estimation problem with an extra $L_1$-regularized term. Next, we enlarge the time step count from $10^3$ to $10^8$. All the results reported in Fig.~\ref{fig:1d_varying} are obtained with $\lambda = 1$, which gives the smallest relative error for all the methods. We use $10$ iterations for each method, which in practice is enough for convergence. It can be seen that the proposed method significantly outperforms other batch solutions with respect to the consumed CPU time. In particular, the PRS, SBM, FOPD and ADMM solvers with $10^4$ time steps take more time than the proposed methods with $T = 10^8$. When $T = 10^5, 10^6, 10^7, 10^8$, the computing operations on PRS, SBM, FOPD, and ADMM run out of memory, and the related results cannot be reported. This is mainly because the KS-based variable splitting methods deal with the objective using recursive computations, which significantly reduces the computational and memory burden, while the batch optimization methods explicitly deal with large vectors and matrices.

\begin{figure}[ht]  
\centerline{\includegraphics[width=0.48\textwidth, height=0.32\textwidth]
	{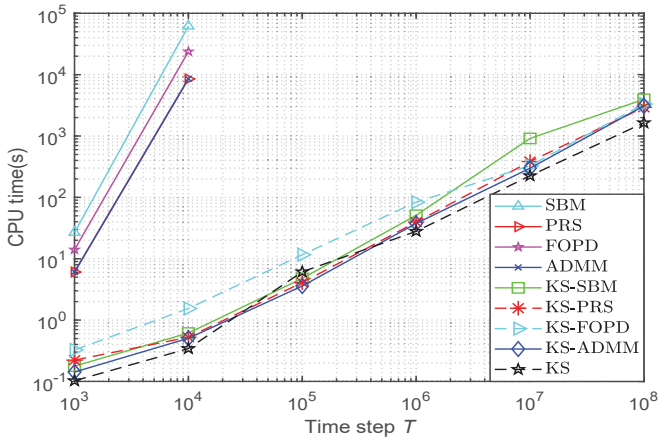}}
\caption{The average CPU time (seconds) versus time steps $T = 10^3$, $10^4$, $10^5$, $10^6$, $10^7$ and $10^8$. The axes are in log-scale.}
\label{fig:1d_varying}
\end{figure}

Table \ref{table:tab2} summarizes the average CPU time with different regularization parameters $\lambda$ when the number of time steps $T$ is varying from $10^2$ to $10^8$. The table reports the average CPU time in seconds required by each solver with $10$ iterations. Not surprisingly, the KF-based variable splitting methods (i.e., KF-PRS, KF-SBM, KF-FOPD, and KF-ADMM) are much faster than the batch variable splitting methods when $T$ grows.

\begin{table*}[ht]
\caption{Comparison of average CPU time (seconds) with different $\lambda$ in the linear case.}
	\begin{center}
		 \centering
	\begin{tabular}{|c|c|c|c|c|c|c|c|c|c|}
				\hline
	    {${\lambda}$} & {${T}$} & PRS & SBM & FOPD  & ADMM   & KS-PRS & KS-SBM & KS-FOPD & KS-ADMM \\
	    \hline
\multirow{6}*{0.1} &{${10^3}$} &  $\bm{6.05}$ & 25.42 & 14.06 & 6.12 & 0.23 & 0.16 &  0.31 &  0.16 \\
\cline{2-10}
~&{${10^4}$} & 841 & 6210 & 2379  & 851   & 0.54 & 0.60 &  1.55  &  0.53   \\
\cline{2-10}
~&	{${10^5}$} & - &-  & -  & -   & 28.1 & 30.1  & 11.2 & 9.7    \\
\cline{2-10}
~&{${10^6}$}& - &-  & -  & -  & 39.4  & 52.1  & 87.8  & 38.5  \\
\cline{2-10}
~&	{${10^7}$} &- &-  & -  & -   & 402 & 922 & 341 & 337   \\
\cline{2-10}
~&{${10^8}$} &- &-  & -  & -   & 3330 & 4121 & 3510 &  3378  \\
			  \hline
            \multirow{6}*{0.5} &{${10^3}$} &  6.07 & 24.61 & 14.04 & 6.06 & 0.22 & 0.17 &  0.32 &  0.14 \\
\cline{2-10}
~&{${10^4}$} & 832 & 6178 & 2366 & 837   &0.52 & 0.60 &  $\bm{1.53}$  &  0.50   \\
\cline{2-10}
~&	{${10^5}$} & - &-  & -  & -   & 27.9 & 30.0  & 10.9 & 9.4    \\
\cline{2-10}
~&{${10^6}$}& - &-  & -  & -  & 38.9  &51.3  & 87.2  & 38.1  \\
\cline{2-10}
~&	{${10^7}$} &- &-  & -  & -   & 391 & 912 & 337 & 312   \\
\cline{2-10}
~&{${10^8}$} &- &-  & -  & -  & 3121 & 4003 & 3421 &  3115  \\
			  \hline
\multirow{6}*{1} &{${10^3}$} &  6.06 & $\bm{23.65}$ &  $\bm{14.04}$ & $\bm{6.04}$ & ${0.22}$ & $\bm{0.15}$ &  $\bm{0.31}$ &   $\bm{0.14}$ \\
\cline{2-10}
~&{${10^4}$} & $\bm{814}$ & $\bm{5943}$ & $\bm{2189}$ & $\bm{824}$   & $\bm{0.52}$ & $\bm{0.59}$ &  1.56  & $\bm{ 0.47}$   \\
\cline{2-10}
~&	{${10^5}$} & - &-  & -  & - & $\bm{27.4}$ & $\bm{29.8}$  & $\bm{10.7}$ & $\bm{9.2}$     \\
\cline{2-10}
~&{${10^6}$}& - &-  & -  & -   & $\bm{37.4}$  & $\bm{ 48.8}$  & $\bm{83.1}$  & $\bm{37.3}$   \\
\cline{2-10}
~&{${10^7}$} &- &-  & -  & -   &  $\bm{383}$ & $\bm{889}$ & $\bm{326}$  & $\bm{298}$   \\
\cline{2-10}
~&{${10^8}$} &- &-  & -  & -   & $\bm{2936}$ & $\bm{3961}$ & $\bm{3328}$ &  $\bm{3096}$  \\
          \hline
        \multirow{6}*{2} &{${10^3}$} &  6.12 & 28.64 & 16.02 & 6.16 & $\bm{0.21}$ & 0.18 &  0.32 &  0.14 \\
\cline{2-10}
~&{${10^4}$} & 874 & 6219 & 2415 & 868   &0.59 &0.62 &  1.54  &  0.51   \\
\cline{2-10}
~&	{${10^5}$} & - &-  & -  & -   & 30.0 & 31.9  & 12.1 & 9.2    \\
\cline{2-10}
~&{${10^6}$}& - &-  & -  & -  & 40  & 53  & 86  & 38  \\
\cline{2-10}
~&	{${10^7}$} &- &-  & -  & -  & 417 & 1002 & 345 & 316   \\
\cline{2-10}
~&{${10^8}$} &- &-  & -  & -   & 3148 & 4021 & 3510 &  3278  \\
     \hline
	\end{tabular}
\end{center} \label{table:tab2}
\end{table*}

\subsection{Nonlinear Simulation Experiment}
\label{results:nonlinear}
We consider a five-dimensional nonlinear coordinated turn model~\cite{Bar-Shalom+Li+Kirubarajan:2001}. We set the measurement noise covariance $\mathbf{R}_t = \sigma^2 \mathbf{I}$ with $\sigma = 0.3$,  $q_c = 0.01$, $\Delta t = 0.2$, and run $k_{\max}=20$ iterations of all the optimization methods. Before moving on, we provide some empirical evidence to support the choice of the regularization parameter $\lambda$ in the IEKS-based variable splitting methods. Similarly to the linear case in Section \ref{results:linear}, we plot the relative errors obtained by varying $\lambda$ in Fig.~\ref{fig:ct_err_lam}. It can be seen that the IEKS-based variable splitting methods have similar relative errors with $\lambda$ varying in the range $[0.1, \, 1]$. Next, we select $\lambda = 0.1$ for the following experiments.
\begin{figure}[!htb]  
	\centerline{\includegraphics[width=0.46\textwidth, height=0.32\textwidth]
		{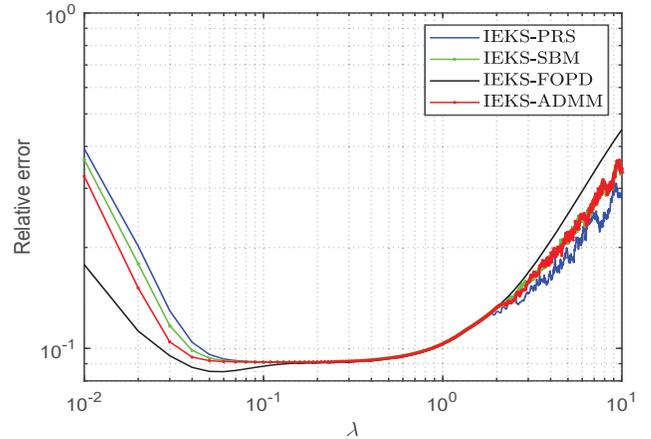}}
	\caption{Relative error as function of regularization parameter $\lambda$ in the five-dimensional nonlinear coordinated turn model.}
	\label{fig:ct_err_lam}
\end{figure}

Then, we compare IEKS\cite{Bell1994smoother}, \mbox{GN-PRS} and \mbox{IEKS-PRS}, \mbox{GN-SBM} and \mbox{IEKS-SBM}, \mbox{GN-FOPD}, and \mbox{IEKS-FOPD},  \mbox{GN-ADMM}, and \mbox{IEKS-ADMM} by plotting the relative error and the CPU time as functions of the iteration number. 
Fig.~\ref{fig:ct_1d_iteration} demonstrates the efficiency of our IEKS-based variable splitting methods against \mbox{GN-PRS}, \mbox{GN-SBM}, \mbox{GN-FOPD} and \mbox{GN-ADMM} in the same experiment. The horizontal axis in Fig.~\ref{fig:ct_1d_iteration} (a) describes the total iteration number, and the vertical axis gives the relative errors.  As can be seen, all the methods give the fast convergence in around $5$ iterations.

We are also interested in the performance without adding an extra analysis-$L_1$-regularized term (i.e., $\lambda = 0$). Since there is no outer iteration in IEKS, we plot the relative error of IEKS after the inner iteration  (dashed black line in Fig.~\ref{fig:ct_1d_iteration} (b)). In contrast with the estimation results, we observe a performance gap between the variable splitting methods and \mbox{IEKS}. This gap reveals the benefit of the extra regularization term that is used in these methods.
In the average CPU time,  \mbox{IEKS-PRS}, \mbox{IEKS-SBM}, \mbox{IEKS-FOPD}, and \mbox{IEKS-ADMM} are clearly superior to batch variable splitting (see Fig.~\ref{fig:ct_1d_iteration} (b)). \mbox{IEKS-FOPD} is the fastest convergent method, needing only $3$ iterations.  Although the state estimation problem is relatively small scale, \mbox{GN-PRS}, \mbox{GN-SBM}, \mbox{GN-FOPD} and \mbox{GN-ADMM} are still time-consuming.
\begin{figure}
\begin{minipage}[t]{1\linewidth}
\centering
\includegraphics[width=0.95\textwidth, height=0.65\textwidth]{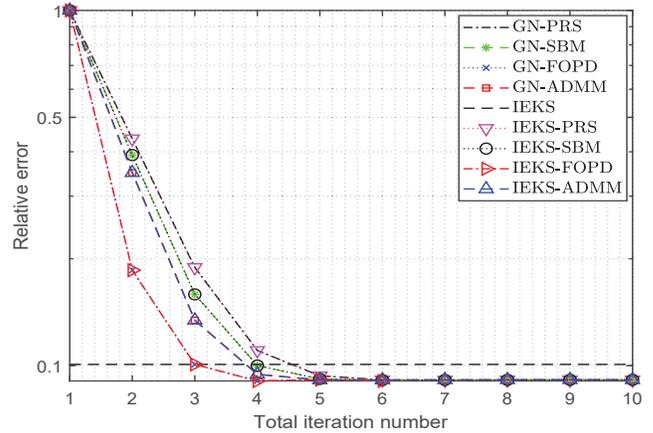}
  \centerline{(a)}
\end{minipage}
\\
\begin{minipage}[t]{1\linewidth}
\centering
\includegraphics[width=0.95\textwidth, height=0.65\textwidth]{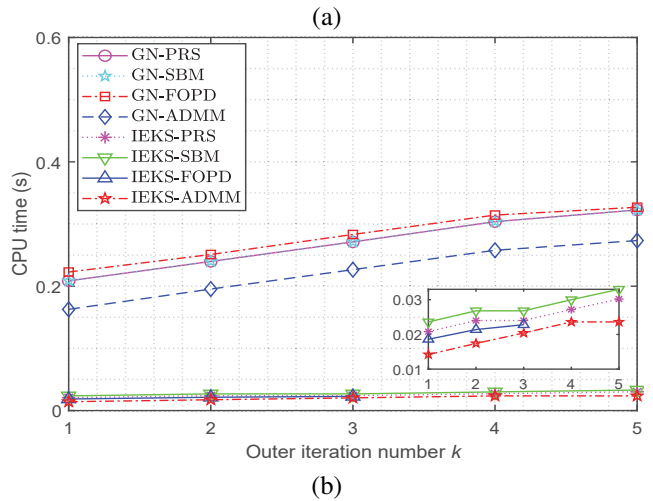}
  \centerline{(b)}
\end{minipage}
\caption{Performance in the coordinated turn model: (a) relative error versus total iteration number, with the $y$-axis in log-scale; (b) average CPU time (seconds) versus outer iteration number $k$.}
\label{fig:ct_1d_iteration}
\end{figure}

Like with linear tracking model, also in this simulation, we enlarge the time step count $T$ from $10^2$ to $10^7$, and plot the results in Fig.~\ref{fig:ct_varying_cpu}. When increasing the time step count, our proposed methods significantly outperform the batch methods with respect to CPU time. In particular, the PRS, SBM, FOPD and ADMM solvers with $10^4$ time steps take more time than our proposed methods with $10^7$ time steps, when $\lambda \in[0.1, 0.5, 1, 2] $. In Table \ref{table:tab3}, we further list the CPU times with different $\lambda$ when $T$ is varying.  The performance benefit of the proposed methods is evident.
\begin{figure}[h!]  
\centerline{\includegraphics[width=0.48\textwidth, height=0.32\textwidth] {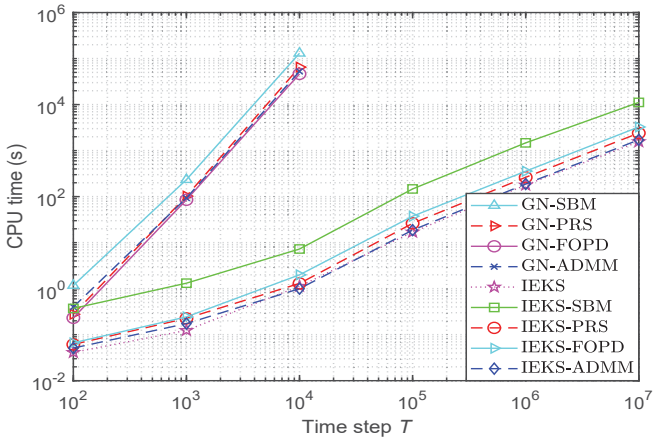}}
\caption{Average CPU time (seconds) versus time step $T$ is $10^2$, $10^3$, $10^4$, $10^5$, $10^6$, $10^7$ in the nonlinear simulated trajectory. The axes are in log-scale.}
\label{fig:ct_varying_cpu}
\end{figure}

\begin{table*}[h!]
	\caption{Comparison of average CPU time (seconds) with different $\lambda$ in the nonlinear case.}
	\begin{center}
		\centering
		\begin{tabular}{|c|c|c|c|c|c|c|c|c|c|}
			\hline
			{${\lambda}$} & {${T}$} & GN-PRS & GN-SBM&GN-FOPD&GN-ADMM &IEKS-PRS&IEKS-SBM &IEKS-FOPD  &IEKS-ADMM \\
			\hline
			\multirow{6}*{0.01} &{${10^2}$} & $\bm{0.26}$ & $\bm{1.15}$  & 0.24  &  $\bm{0.38}$  & 0.07 & 0.39  &0.08 &0.05 \\
			\cline{2-10}
			~&{${10^3}$} & $\bm{99}$  & $\bm{231}$  & 86  &  94  & 0.22  & 1.32 & 0.25 & 0.19 \\
			\cline{2-10}
			~&	{${10^4}$} & 1504  & 5624 &  4951  & 1396  & 1.31  & 7.27 & 1.16  & 2.01\\
			\cline{2-10}
			~&{${10^5}$}& - & - & - & - & 26.43 &  146.94 & 38.12 & 19.01\\
			\cline{2-10}
			~&	{${10^6}$} & - & - & - & - & 275 & 1687 & 389 & 201\\
			\cline{2-10}
			~&{${10^7}$} & - & - & - & - & 2621 & 11542 & 3460 & 1963 \\
			\hline
			\multirow{6}*{0.1} &{${10^2}$} &0.27 & 1.17 & $\bm{0.23}$  &  0.39  & $\bm{0.06}$ & $\bm{0.37}$  & $\bm{0.07}$ & $\bm{0.05}$  \\
			\cline{2-10}
			~&{${10^3}$} & 102  & 233  & $\bm{85}$  & $\bm{93}$   & $\bm{ 0.22}$  & $\bm{1.31}$  & $\bm{0.24}$  & $\bm{0.17}$  \\
			\cline{2-10}
			~&	{${10^4}$} & $\bm{1303}$  & $\bm{5212.7}$ & $\bm{4640}$  & $\bm{1252.6}$ & $\bm{1.29}$  & $\bm{7.26}$ & $\bm{1.16}$  &  $\bm{1.99}$    \\
			\cline{2-10}
			~&{${10^5}$}& - &-  & -  & -   & $\bm{26.21}$ & $\bm{146.81}$ &  $\bm{37.45}$ & $\bm{18.72}$ \\
			\cline{2-10}
			~&	{${10^6}$} &- &-  & -  & -    & $\bm{263}$ & $\bm{1473}$ & $\bm{356}$ &  $\bm{187}$ \\
			\cline{2-10}
			~&{${10^7}$} &- &-  & -  & -   &   $\bm{2402}$  &  $\bm{11155}$ & $\bm{3260}$ &  $\bm{1716}$   \\
			\hline
			\multirow{6}*{1} &{${10^2}$} & 0.32 & 1.28 & 0.26 & 0.43 & 0.06 & 0.39 & 0.07 & 0.05 \\
			\cline{2-10}
			~&{${10^3}$} & 105 & 241 & 89 & 102 & 0.23  & 1.38 & 0.31 & 0.26 \\
			\cline{2-10}
			~&	{${10^4}$} & 1597 & 5711 & 5001 & 1450 & 1.32 & 7.31 & 1.19  & 2.03 \\
			\cline{2-10}
			~&{${10^5}$}& - & - & - & - & 27.21 & 152.1  & 39.45 & 20.12 \\
			\cline{2-10}
			~&	{${10^6}$} & - & - & - & - & 291  & 1492 & 361 & 202  \\
			\cline{2-10}
			~&{${10^7}$} & - & - & - & - & 2645 & 11784 & 3519 & 2001 \\
			\hline
		   \multirow{6}*{2} &{${10^2}$} & 0.34 & 1.28 & 0.27 & 0.43 & 0.06 & 0.43 & 0.07 & 0.07 \\
			\cline{2-10}
			~&{${10^3}$} & 111 & 249 & 95 & 121 & 0.26  & 1.41 & 0.34 & 0.27 \\
			\cline{2-10}
			~&	{${10^4}$} & 1612 & 5821 & 5294 & 1510 & 1.48 & 7.48 & 1.23  & 2.45 \\
			\cline{2-10}
			~&{${10^5}$}& - & - & - & - & 28.45 & 167 & 41.24 & 22.12   \\
			\cline{2-10}
			~&	{${10^6}$} & - & - & - & - & 312 &1625 & 389 & 241 \\
			\cline{2-10}
			~&{${10^7}$} & - & - & - & - & 2741 & 12016 & 3645 & 2268 \\
			\hline
		\end{tabular}
	\end{center} \label{table:tab3}
\end{table*}

\subsection{Tomographic Reconstruction}
\label{sec:tomographic}
In this section, we consider the application of the methodology to X-ray computed tomography (CT) imaging \cite{Pfister2014ct,Bubba2017ct}. First, we evaluate the performance of the proposed methods on real tomographic X-ray data of an emoji phantom measured at the University of Helsinki \cite{Meaney2018emoji}. The dataset consists of 33-point time series of the X-ray sinogram of an emoji made of small squared ceramic stones. In the sequence, the emoji transforms from a face with closed eyes and a straight mouth to a face with smiling eyes and mouth. Typically, we have a sequence of square X-ray images of size $s \times s$ with $s = 64,128$, which we are interested in reconstructing from low-dose observations taken from a limited number of angles. These low-dose observations can be modeled by the measurement matrix $\mathbf{H}_t$ which describes line integrals through the object (i.e., Radon transform). 

\begin{figure}[h!]
\begin{minipage}{0.32\linewidth}
  \centerline{\includegraphics[width=0.96\textwidth, height=0.7\textwidth]{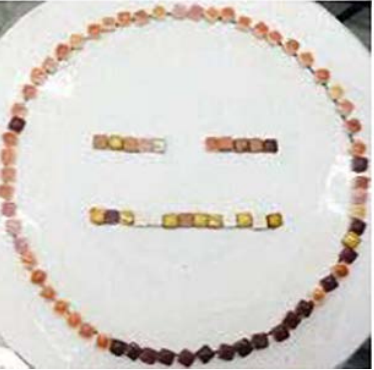}}
\end{minipage}
\begin{minipage}{0.32\linewidth}
  \centerline{\includegraphics[width=0.96\textwidth, height=0.7\textwidth]{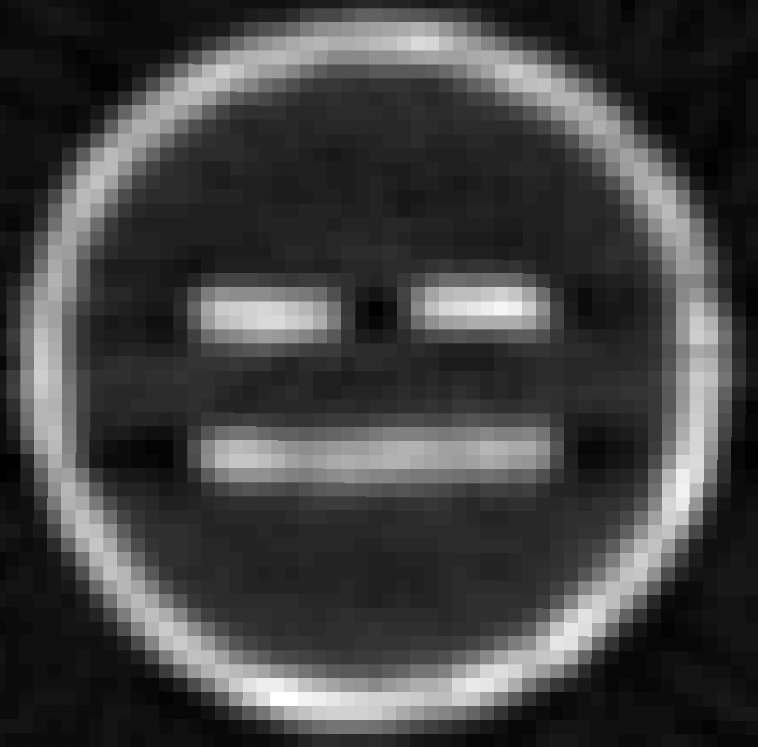}}
\end{minipage}
\begin{minipage}{0.32\linewidth}
  \centerline{\includegraphics[width=0.96\textwidth, height=0.7\textwidth]{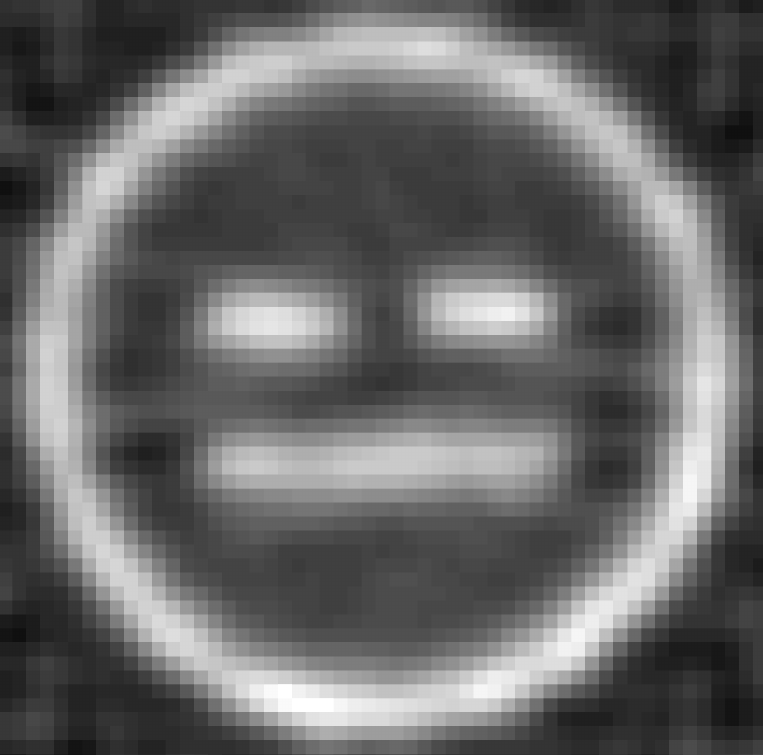}}
\end{minipage}
\centering 
\vfill
\vspace{5pt}
\begin{minipage}{0.32\linewidth}
  \centerline{\includegraphics[width=0.96\textwidth, height=0.7\textwidth]{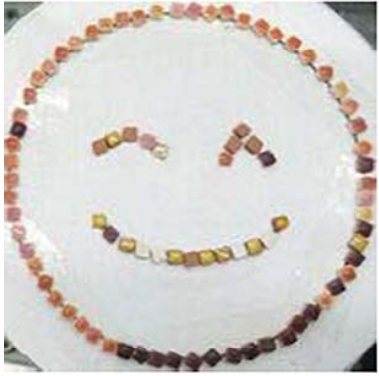}}
\end{minipage}
\begin{minipage}{0.32\linewidth}
  \centerline{\includegraphics[width=0.96\textwidth, height=0.7\textwidth]{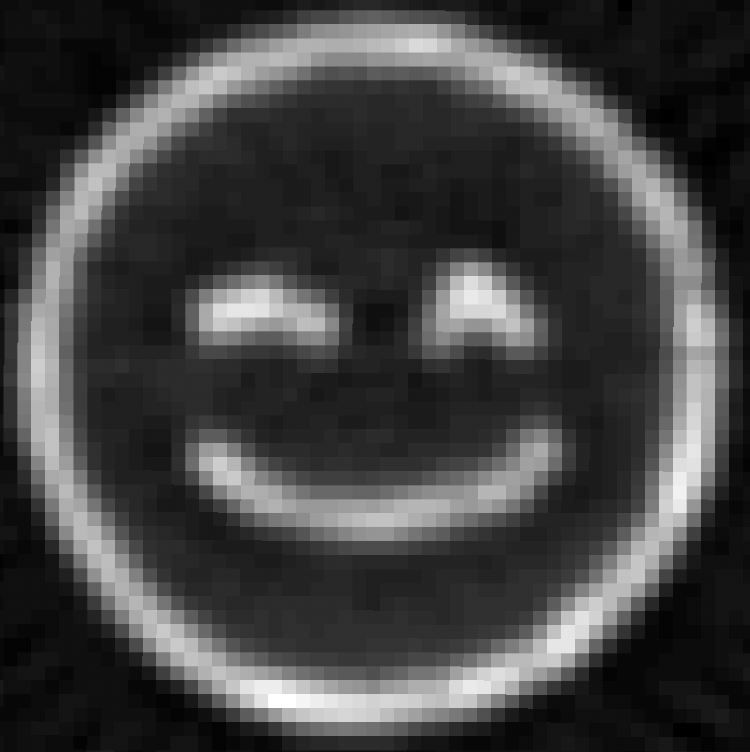}}
\end{minipage}
\begin{minipage}{0.32\linewidth}
  \centerline{\includegraphics[width=0.96\textwidth, height=0.7\textwidth]{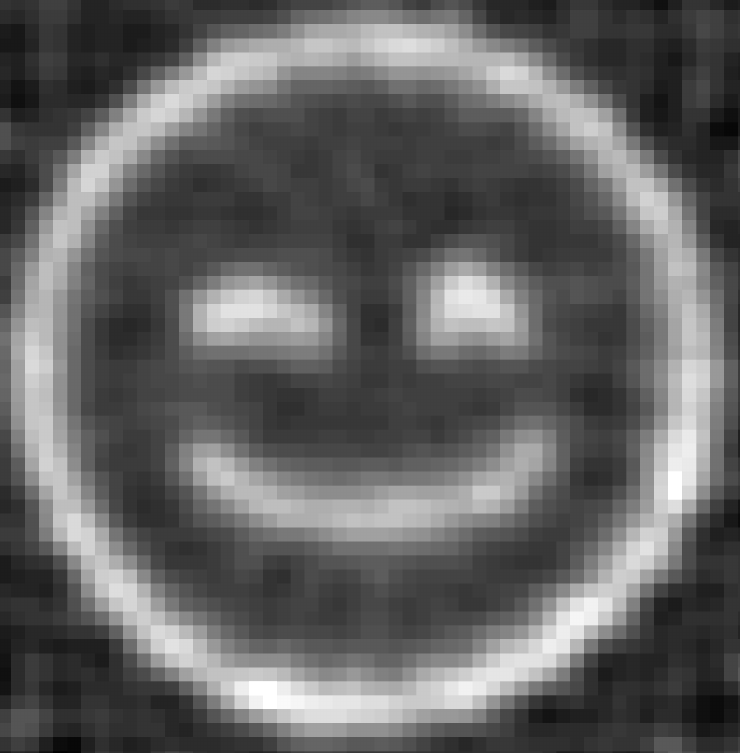}}
\end{minipage}
\centering
\vfill
\vspace{5pt}
\begin{minipage}{0.32\linewidth}
  \centerline{\includegraphics[width=0.96\textwidth, height=0.7\textwidth]{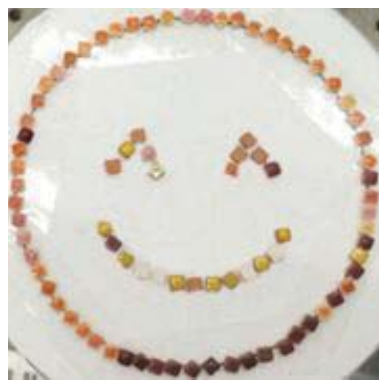}}
\end{minipage}
\begin{minipage}{0.32\linewidth}
  \centerline{\includegraphics[width=0.96\textwidth, height=0.7\textwidth]{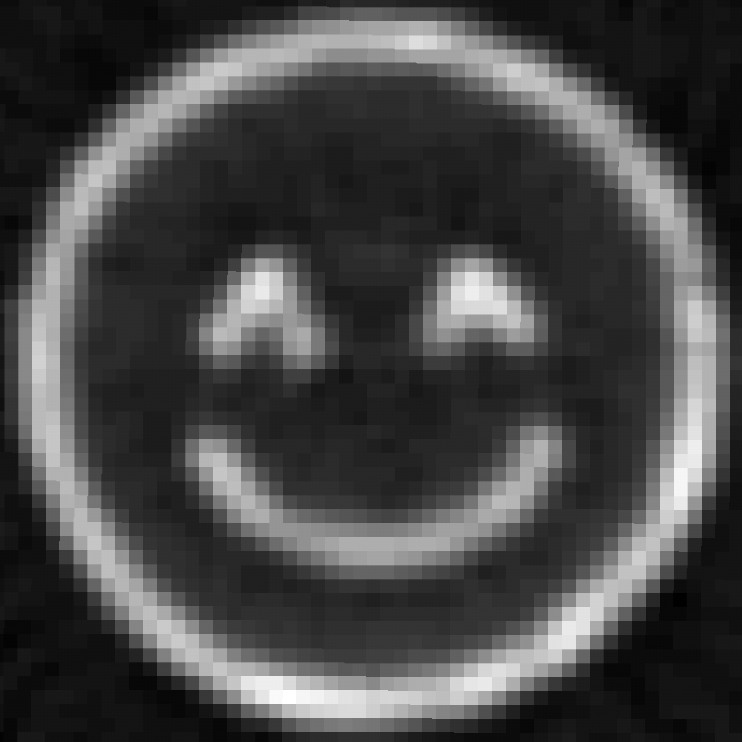}}
\end{minipage}
\begin{minipage}{0.32\linewidth}
  \centerline{\includegraphics[width=0.96\textwidth, height=0.7\textwidth]{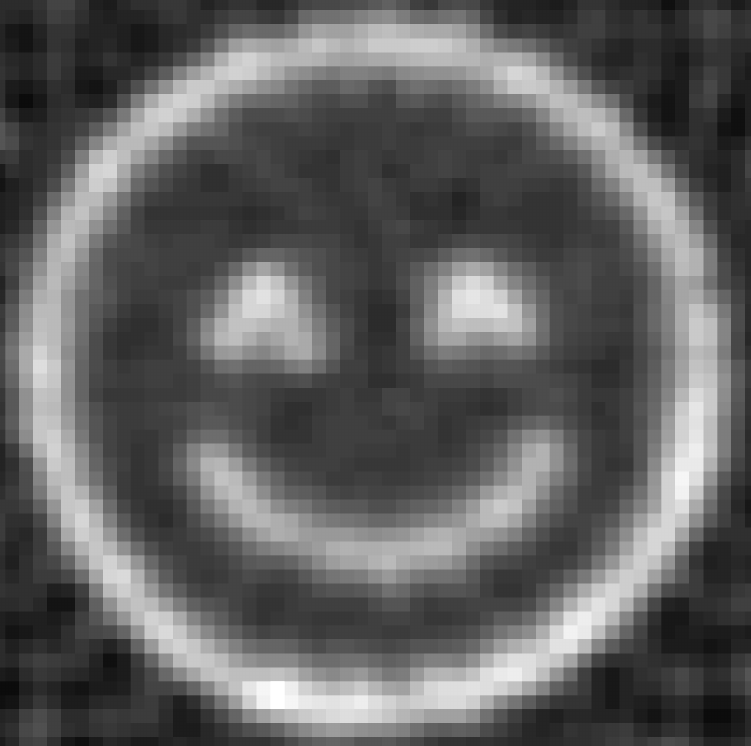}}
\end{minipage}
\centering
\vfill
\caption{Reconstruction results for the emoji motion dataset. Original pictures at time step $t=1,15,33$ are given in the first column, respectively. Reconstruction results from $60$ and $30$ projections are shown in the middle column and the third column.}
\label{fig:ct_Emoji}
\end{figure}

Fig.~\ref{fig:ct_Emoji} shows the CT reconstruction results for the emoji motion dataset, obtained by KS-ADMM. We set the parameters to $\lambda = 10$, $\rho = 1$, $k_{\max} = 20$, and $n_x = 4096$. The analysis operator $\mathbf{\Omega}_t$ consists of all the vertical and horizontal gradients (one step differences), which corresponds to so called TV regularization \cite{Hu2012TV}. The number of measurements that correspond to $60$ or $30$ projections are $n_y = 13020$ and $n_y =6510$, respectively. Although there is no ground truth to compare the qualitative results, we can observe the visual results from different numbers of projections. When the number of projections is $60$, the method provides good reconstruction results with $20$ iterations. We see that the 30-projection results suffer from the block artifacts as a consequence of the reduction in dose. 

Furthermore, we validate the effectiveness of the proposed method on the real inhalation (iBH-CT) and exhalation (eBH-CT) breath-hold CT images, which was acquired as part of the National Heart Lung Blood Institute COPDgene study \cite{Castillo2009CT}. The dataset consists of 10 expiratory phase images of the segmented lung voxels. In detail, the parameters are $\lambda = 1$, $\rho = 0.1$, $k_{\max} = 15$, $T = 10$, $n_x = 16384$, and the numbers of measurements are $n_y = 13020$ and $n_y =6510$, corresponding to $60$ and $20$ projections. The ground truth and the reconstruction results are shown in Fig.~\ref{fig:ct_medical}. By visually comparing the results, we observe that moving from $60$ to $20$ projection provides much more drastic change. For example, some additional artifacts exist, but the result with the setting $n_x = 16384$ and $n_y =6510$ is still very much acceptable (see the third column in Fig.~\ref{fig:ct_medical}). The results show that our methods still successfully preserve temporal information when the number of projections is $20$.

\begin{figure}[!htb]
	\begin{minipage}{0.32\linewidth}
		\centerline{\includegraphics[width=0.96\textwidth, height=0.7\textwidth]{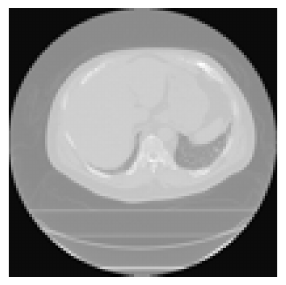}}
	\end{minipage}
	\begin{minipage}{0.32\linewidth}
		\centerline{\includegraphics[width=0.96\textwidth, height=0.7\textwidth]{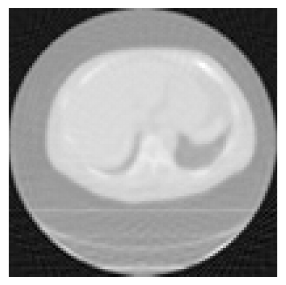}}
	\end{minipage}
	\begin{minipage}{0.32\linewidth}
		\centerline{\includegraphics[width=0.96\textwidth, height=0.7\textwidth]{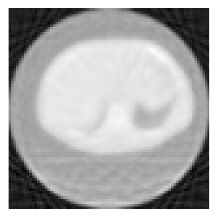}}
	\end{minipage}
	\centering 
	\vfill
	\vspace{5pt}
	\begin{minipage}{0.32\linewidth}
		\centerline{\includegraphics[width=0.96\textwidth, height=0.7\textwidth]{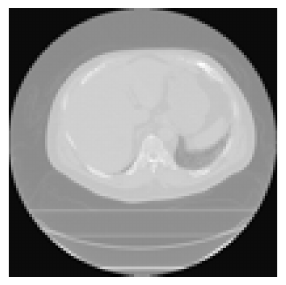}}
	\end{minipage}
	\begin{minipage}{0.32\linewidth}
		\centerline{\includegraphics[width=0.96\textwidth, height=0.7\textwidth]{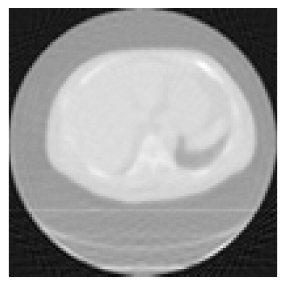}}
	\end{minipage}
	\begin{minipage}{0.32\linewidth}
		\centerline{\includegraphics[width=0.96\textwidth, height=0.7\textwidth]{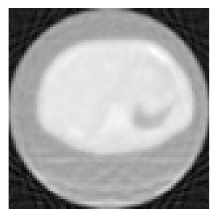}}
	\end{minipage}
	\centering
	\vfill
	\vspace{5pt}
	\begin{minipage}{0.32\linewidth}
		\centerline{\includegraphics[width=0.96\textwidth, height=0.7\textwidth]{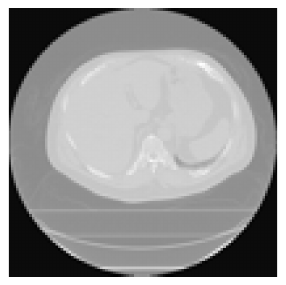}}
	\end{minipage}
	\begin{minipage}{0.32\linewidth}
		\centerline{\includegraphics[width=0.96\textwidth, height=0.7\textwidth]{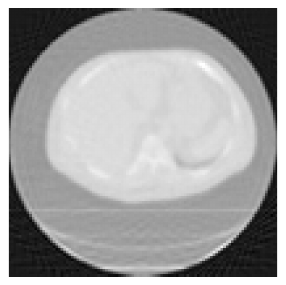}}
	\end{minipage}
	\begin{minipage}{0.32\linewidth}
		\centerline{\includegraphics[width=0.96\textwidth, height=0.7\textwidth]{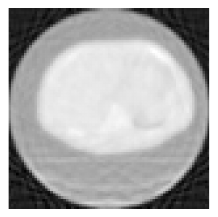}}
	\end{minipage}
	\vfill
	\caption{Reconstruction results for the lung dataset. Original pictures at time step $t=1,3,6$ are given in the first column. Reconstruction results from $60$ and $20$ projections are shown in the middle column and the third column.}
	\label{fig:ct_medical}
\end{figure}

In the two experiments, we used a stationary Kalman filter and smoother to implement the optimization. We pre-computed all the gains before the iteration, which significantly speeded up the computations in tomographic reconstruction. We report CPU time (seconds) in Table~\ref{tab:ct_time}. Table~\ref{tab:ct_time} shows that KS-ADMM achieves significantly lower CPU time than the batch ADMM although the visual quality of all the reconstructions is equal. For example, in emoji motion dataset, when $n_x = 16384$ , ADMM takes three time longer than our proposed method. In the lung dataset, when $n_x = 16384$ and $n_y = 6510$, \mbox{KS-ADMM} seems to be promising to provide computationally efficient reconstruction.

\begin{table}[htb!]
\caption{Average CPU time (seconds) for the outer iteration in the tomographic recostruction.}
\begin{center}
 \centering
 \small
\begin{tabular}{|c|c|c|c|c|c|}
\hline
Dataset  & $T$ & $n_x$  & $n_y$ & ADMM &  KF-ADMM  \\ 
\hline
\multirow{4}*{Emoji} & \multirow{4}*{$33$} &16384 & 13020 & 3657.58 & 771.49 \\ 
\cline{3-6}
~  &~  &16384 & 6510  & 1422.38 & 387.26 \\ 
\cline{3-6}
~   &~  &4096 & 13020 & 284.83 & 97.93 \\ 
\cline{3-6}
~   &~  &4096 & 6510 & 77.79 & 20.10 \\ 
\hline
\hline
\multirow{4}*{Lung} & \multirow{4}*{$10$} &16384 & 13020 & 3584.62 & 764.78 \\ 
\cline{3-6}
~  &~  &16384 & 6510  & 1378.41 & 367.13 \\ 
\cline{3-6}
~   &~  &10404 & 13020 & 1475.84 & 487.25 \\ 
\cline{3-6}
~   &~  &10404 & 6510 & 187.87 & 61.24 \\ 
\hline
\end{tabular}
\label{tab:ct_time}
\end{center}
\end{table}

\label{results:tomographic}
\section{Conclusion}
\label{sec:conclusion}

In this paper, we have presented two new classes of methods for solving state estimation problems. The estimation problem has been formulated as an (analysis) $L_1$-regularized optimization problem and the resulting problem has been solved by using the combinations of (iterated extended) Kalman smoother and variable splitting methods such as ADMM. The proposed approaches replace the batch solution for the state-update by using the smoother, which has a lower time-complexity than the batch solution. Furthermore, we have extended the proposed methods to a more general algorithmic framework, where the state-update is computed with the smoother. We have also established (local) convergence results for the novel KS-ADMM and IEKS-ADMM methods. In two different linear and nonlinear simulated cases, we have presented experimental results which show the efficiency of the smoother-based variable splitting optimization methods, especially when applied to large-scale or high-dimensional $L_1$-regularized state estimation problems. We also applied the methodology to a real-life tomographic reconstruction problem arising in X-ray-based computed tomography. 
Further work may explore a proper choice of the dual parameters using in KS / IEKS-based variable splitting methods, and discuss the convergence in the adaptive parameter settings.

\section*{Acknowledgements}
The authors are grateful for the help of Zenith Purisha in preparing the computed tomography experiment and Zheng Zhao for useful comments on the manuscript.

\appendices
\section{Proof of Lemma 1}
\label{pf:lemma:nonincreasing}
We first prove for \textbf{Case (a)}. By the first-order optimality condition of $\mathbf{x}$ subproblem, we have 
	\begin{equation}
	\label{eq:first_order_x}
	\begin{split}
	\begin{aligned}
	&\nabla \mathcal{L}(\mathbf{x}^{(k+1)},\mathbf{w}^{(k)};\bm{\eta}^{(k)}) = 0, 
	\end{aligned}
	\end{split} 
	\end{equation}
which implies that 
	\begin{equation}
	\label{eq:first_order_x}
	\begin{split}
	\begin{aligned}
\nabla \theta_1(\mathbf{x}^{(k+1)})  &=   \mathbf{\Omega}^\T  ( \bm{\eta}^{(k)}+ \rho (\mathbf{w}^{(k)} - \mathbf{\Omega}\mathbf{x}^{(k+1)} ))  \\
              &=   \mathbf{\Omega}^\T \bm{\eta}^{(k+1)}.
	\end{aligned}
	\end{split} 
	\end{equation}
It follows that 
\begin{equation}
\label{eq:bound_dual}
\begin{split}
\begin{aligned}
	  \|  \mathbf{\Omega}^\T \bm{\eta}^{(k+1)}  -  \mathbf{\Omega}^\T \bm{\eta}^{(k)} \|  
	  {\leq}   {L_{\theta_1} }  \| \mathbf{x}^{(k+1)} - \mathbf{x}^{(k)}\|. 
\end{aligned}
\end{split} 
\end{equation}
Then, if we assume that $\mathbf{\Omega}$ is full-row rank with $\mathbf{\Omega} \mathbf{\Omega}^\T \succeq \kappa_a^2 \mathbf{I}$, we have 
\begin{equation}
\label{eq:bound_dual_2}
\begin{split}
\begin{aligned}
	 \|  \mathbf{\Omega}^\T \bm{\eta}^{(k+1)}  -  \mathbf{\Omega}^\T \bm{\eta}^{(k)} \|^2  
	 \geq \kappa_a^2 \| \bm{\eta}^{(k+1)} - \bm{\eta}^{(k)}\|^2. 
\end{aligned}
\end{split} 
\end{equation}
By combining \eqref{eq:bound_dual} and \eqref{eq:bound_dual_2}, we get
\begin{equation}
\label{eq:bound_dual_3}
\begin{split}
\begin{aligned}
 \| \bm{\eta}^{(k+1)} - \bm{\eta}^{(k)}\|^2  
&  \leq \frac{L_{\theta_1}^2}{\kappa_a^2}  	 \| \mathbf{x}^{(k+1)}  -   \mathbf{x}^{(k)} \|^2. 
\end{aligned}
\end{split} 
\end{equation}
Thus, for the $\bm{\eta}$-subproblem, we can use the primal variable to bound as follows: 
\begin{equation}
\label{eq:lang_3_a}
	\begin{split}
	\begin{aligned}
	& \mathcal{L}(\mathbf{x}^{(k+1)},\mathbf{w}^{(k+1)};\bm{\eta}^{(k+1)}) - 
	\mathcal{L}(\mathbf{x}^{(k+1)},\mathbf{w}^{(k+1)};\bm{\eta}^{(k)})  \\
&  = \langle  	\bm{\eta}^{(k+1)}, \mathbf{w}^{(k+1)} - \mathbf{\Omega} \mathbf{x}^{(k+1)}\rangle -
    \langle  	\bm{\eta}^{(k)}, \mathbf{w}^{(k+1)} - \mathbf{\Omega} \mathbf{x}^{(k+1)}\rangle  \\
&=  \frac{1}{\rho} \| \bm{\eta}^{(k+1)} -  \bm{\eta}^{(k)}\|^2  \leq \frac{L_{\theta_1}^2}{ \rho \kappa_a^2}  \| \mathbf{x}^{(k+1)}  -   \mathbf{x}^{(k)} \| ^2. 
	  \end{aligned}
	\end{split} 
	\end{equation} 
    Since the $\mathbf{x}$-subproblem is $\mu_x$-strongly convex we have 
     \begin{equation}
     	\begin{split}
  \label{eq:strong_convex}
 &\mathcal{L}(\mathbf{x}^{(k+1)},\mathbf{w}^{(k)};\bm{\eta}^{(k)}) \\
& \leq \mathcal{L}(\mathbf{x}^{(k)},\mathbf{w}^{(k)};\bm{\eta}^{(k)}) 
 - \frac{\mu_x}{2} \| \mathbf{x}^{(k+1)}  - \mathbf{x}^{(k)} \|^2.
	\end{split} 
\end{equation} 
Similarly, since the $\mathbf{w}$-subproblem is convex, we have the following inequality:
   	\begin{equation}
	\label{eq:first_order_w}
	\begin{split}
	\begin{aligned}
	\mathcal{L}(\mathbf{x}^{(k+1)},\mathbf{w}^{(k+1)};\bm{\eta}^{(k)})  \leq  \mathcal{L}(\mathbf{x}^{(k+1)},\mathbf{w}^{(k)};\bm{\eta}^{(k)}) .
		\end{aligned}
	\end{split} 
	\end{equation}
Thus, by combining \eqref{eq:lang_3_a}, \eqref{eq:strong_convex}, and \eqref{eq:first_order_w}, we obtain:
	\begin{equation}\label{eq:q_case_a}
	\begin{split}
	\begin{aligned}
	&\mathcal{L}(\mathbf{x}^{(k+1)},\mathbf{w}^{(k+1)};\bm{\eta}^{(k+1)})  -\mathcal{L}(\mathbf{x}^{(k)},\mathbf{w}^{(k)};\bm{\eta}^{(k)}) \\
	&= \mathcal{L}(\mathbf{x}^{(k+1)},\mathbf{w}^{(k+1)};\bm{\eta}^{(k+1)}) - 
	\mathcal{L}(\mathbf{x}^{(k+1)},\mathbf{w}^{(k+1)};\bm{\eta}^{(k)})  \\
	&   \quad + \mathcal{L}(\mathbf{x}^{(k+1)},\mathbf{w}^{(k+1)};\bm{\eta}^{(k)})  -
	\mathcal{L}(\mathbf{x}^{(k+1)},\mathbf{w}^{(k)};\bm{\eta}^{(k)})  \\
	& \quad + \mathcal{L}(\mathbf{x}^{(k+1)},\mathbf{w}^{(k)};\bm{\eta}^{(k)})   -
	  \mathcal{L}(\mathbf{x}^{(k)},\mathbf{w}^{(k)};\bm{\eta}^{(k)})  \\
	&\leq \frac{L_{\theta_1}^2}{ \rho \kappa_a^2}  \| \mathbf{x}^{(k+1)}  -   \mathbf{x}^{(k)} \| ^2 - \frac{\mu_x}{2} \| \mathbf{x}^{(k+1)}  - \mathbf{x}^{(k)} \|^2\\ 
	& = \left (\frac{L_{\theta_1}^2}{ \rho \kappa_a^2}  - \frac{\mu_x}{2} \right) \| \mathbf{x}^{(k+1)}  -   \mathbf{x}^{(k)} \| ^2,
		\end{aligned}
	\end{split} 
	\end{equation}
which will be negative provided by $\rho > {2L_{\theta_1}^2}/{\kappa_a^2 \mu_x}$. Thus, when $\rho > \max\left( \frac{2 L_{\theta_1}^2}{\kappa_a^2 \mu_x},\rho_0\right)$, the result follows.

Next, we prove \textbf{Case (b)}.  In this case, we do not assume convexity of the $\mathbf{x}$-subproblem. For the $\bm{\eta}$-subproblem, we obtain 
\begin{equation}
\label{eq:lang_3}
	\begin{split}
	\begin{aligned}
	& \mathcal{L}(\mathbf{x}^{(k+1)},\mathbf{w}^{(k+1)};\bm{\eta}^{(k+1)}) - 
	\mathcal{L}(\mathbf{x}^{(k+1)},\mathbf{w}^{(k+1)};\bm{\eta}^{(k)})  \\
&=  \frac{1}{\rho} \| \bm{\eta}^{(k+1)} -  \bm{\eta}^{(k)}\|^2.
	  \end{aligned}
	\end{split} 
	\end{equation} 
Let $\mathbf{\Omega}$ be full-column rank with $\mathbf{\Omega} ^\T \mathbf{\Omega} \succeq \kappa_b^2 \mathbf{I}$, which gives
\begin{equation}\label{eq:bound_dual_b}
\begin{split}
\begin{aligned}
	 & \|  \mathbf{\Omega} \bm{x}^{(k+1)}  -  \mathbf{\Omega}\bm{x}^{(k)} \|^2  
	 \geq \kappa_b^2 \| \mathbf{x}^{(k+1)} - \mathbf{x}^{(k)}\|^2.
\end{aligned}
\end{split} 
\end{equation}
 For the $\mathbf{x}$-subproblem we get
 \begin{equation}  \label{eq:lang_1}
\begin{split}
&  \mathcal{L}(\mathbf{x}^{(k)},\mathbf{w}^{(k)};\bm{\eta}^{(k)}) - 
\mathcal{L}(\mathbf{x}^{(k+1)},\mathbf{w}^{(k)};\bm{\eta}^{(k)})  \\
& =
 \theta_1(\mathbf{x}^{(k)}) - \theta_1(\mathbf{x}^{(k+1)}) + 
   \langle \bm{\eta}^{(k)},  \mathbf{\Omega}\mathbf{x}^{(k+1)}- \mathbf{\Omega}\mathbf{x}^{(k)} \rangle \\
   &\quad +
    \langle \rho(\mathbf{w}^{(k)} - \mathbf{\Omega}\mathbf{x}^{(k+1)} ),  \mathbf{\Omega}\mathbf{x}^{(k+1)}- \mathbf{\Omega}\mathbf{x}^{(k)} \rangle 
    \\
    &\quad + \frac{\rho}{2} \| \mathbf{\Omega}\mathbf{x}^{(k+1)}  - \mathbf{\Omega}\mathbf{x}^{(k)}\|^2 \\ 
    & \overset{\eqref{eq:first_order_x}} {=}  
   \theta_1(\mathbf{x}^{(k)}) - \theta_1(\mathbf{x}^{(k+1)}) + \frac{\rho}{2} \| \mathbf{\Omega}\mathbf{x}^{(k+1)}- \mathbf{\Omega}\mathbf{x}^{(k)}\|^2 \\ 
   &\quad +  \langle - \nabla \theta_1(\mathbf{x}^{(k+1)}),  \mathbf{x}^{(k)}- \mathbf{x}^{(k+1)} \rangle \\
& {\geq}  - \frac{L_{\theta_1}}{2} \|\mathbf{x}^{(k)} - \mathbf{x}^{(k+1)} \|^2 + 
     \frac{\rho}{2} \| \mathbf{\Omega}\mathbf{x}^{(k+1)}- \mathbf{\Omega}\mathbf{x}^{(k)}\|^2 \\
& \overset{\eqref{eq:bound_dual_b}}{\geq} 
 \left ( \frac{\rho  \kappa_b^2  }{2}  - \frac{L_{\theta_1}}{2}  \right)
    \| \mathbf{x}^{(k+1)}- \mathbf{x}^{(k)}\|^2,
\end{split} 
\end{equation}
and by combining \eqref{eq:first_order_w}, \eqref{eq:lang_3}, and \eqref{eq:lang_1}, we get
	\begin{equation} \label{eq:q_case_b}
	\begin{split}
	\begin{aligned}
	& \mathcal{L}(\mathbf{x}^{(k+1)},\mathbf{w}^{(k+1)};\bm{\eta}^{(k+1)})  -\mathcal{L}(\mathbf{x}^{(k)},\mathbf{w}^{(k)};\bm{\eta}^{(k)}) \\
	&= \mathcal{L}(\mathbf{x}^{(k+1)},\mathbf{w}^{(k+1)};\bm{\eta}^{(k+1)}) - 
	\mathcal{L}(\mathbf{x}^{(k+1)},\mathbf{w}^{(k+1)};\bm{\eta}^{(k)})  \\
	&   \quad + \mathcal{L}(\mathbf{x}^{(k+1)},\mathbf{w}^{(k+1)};\bm{\eta}^{(k)})  -
	\mathcal{L}(\mathbf{x}^{(k+1)},\mathbf{w}^{(k)};\bm{\eta}^{(k)})  \\
	& \quad + \mathcal{L}(\mathbf{x}^{(k+1)},\mathbf{w}^{(k)};\bm{\eta}^{(k)})   -
	  \mathcal{L}(\mathbf{x}^{(k)},\mathbf{w}^{(k)};\bm{\eta}^{(k)})  \\
	&\leq  \frac{ L_{\theta_1} - \rho  \kappa_b^2}{2} 
    \| \mathbf{x}^{(k+1)}- \mathbf{x}^{(k)}\|^2  +  \frac{1}{\rho} \| \bm{\eta}^{(k+1)} -  \bm{\eta}^{(k)}\|^2 ,
		\end{aligned}
	\end{split} 
	\end{equation}
which will be nonnegative provided that $\rho > \frac{L_{\theta_1} }{  \kappa_b^2 }$.

\section{Proof of Lemma 2}
\label{pf:lemma_gn}
The error between the local iterate $\mathbf{x}^{(i+1)}$ in the update and the minimizer $\mathbf{x}^{\star}$ satisfies the following recursion:
\begin{equation}
\label{eq:taylorbased_equality}
\begin{aligned}
&  \| \mathbf{x}^{(i+1)} - \mathbf{x}^\star \|  \\
& = \left\| [\mathbf{J}^\T \mathbf{J}(\mathbf{x}^{(i )}) ] ^{-1}  \right\|  \\
&  \quad \times  \left\| \mathbf{J}^\T \mathbf{J}(\mathbf{x}^{(i )}) (\mathbf{x}^{(i )} - \mathbf{x}^\star) -  \left[\nabla f(\mathbf{x}^{(i)})-\nabla f(\mathbf{x}^\star) \right] \right\| \\
& \leq \left\| [\mathbf{J}^\T \mathbf{J}(\mathbf{x}^{(i )}) ] ^{-1}  \right\| \\
& \quad \times 
\int_0^1 \left\| {
\mathbf{J}^\T \mathbf{J}(\mathbf{x}^{(i )}) - 
 ( \mathbf{J}^\T \mathbf{J} (\mathbf{x}^\star + \alpha (\mathbf{x}^{(i)} - \mathbf{x}^\star ) ) } \right. \\
 & \quad \quad \left. { 
 + \mathbf{H} (\mathbf{x}^\star + \alpha (\mathbf{x}^{(i)} - \mathbf{x}^\star ) )  ) } \right\| \, \left\| \mathbf{x}^{(i)} - \mathbf{x}^\star \right\| \dif\alpha
 \\
& \leq \left\| [\mathbf{J}^\T \mathbf{J}(\mathbf{x}^{(i )}) ] ^{-1}  \right\| \\
 & \quad \times 
  \int_0^1 \left\| \nabla^2 f(\mathbf{x}^{(i)}) - \nabla^2 f(\mathbf{x}^\star + \alpha (\mathbf{x}^{(i)} - \mathbf{x}^\star ) )
   - \mathbf{H} (\mathbf{x}^{(i )}) \right\| \\
    & \quad \quad \times  \left\| \mathbf{x}^{(i)} - \mathbf{x}^\star \right\|   \dif \alpha \\
& \leq \frac{L_f}{2}  \left\| [\mathbf{J}^\T \mathbf{J}(\mathbf{x}^{(i )}) ] ^{-1}  \right\| 
 \left\| \mathbf{x}^{(i)} - \mathbf{x}^\star  \right\|^2  \\
  & \quad + 
  \left\|  [\mathbf{J}^\T \mathbf{J}(\mathbf{x}^{(i )}) ] ^{-1} \mathbf{H} (\mathbf{x}^{(i )}) \right\|
    \left\| \mathbf{x}^{(i)} - \mathbf{x}^\star \right\|.
 \end{aligned}
\end{equation}
Let $\mathbf{H}(\mathbf{x})$ be bounded by $\epsilon_h$, that is, $\|\mathbf{H}(\mathbf{x}) \| \le \epsilon_h $.
We conclude that when $\epsilon_h \to 0$, the convergence is quadratic. Now let $\mathbf{J}^\T \mathbf{J}(\mathbf{x}^{(i )})  \succeq \mu^2 \mathbf{I}$.
Then, linear convergence is obtained when the following condition is satisfied:
\begin{equation}
\begin{split}
&\left\|  [\mathbf{J}^\T \mathbf{J}(\mathbf{x}^{(i )}) ] ^{-1} \mathbf{H} (\mathbf{x}^{(i )}) \right\| \\
&\qquad \leq  \left\|  [\mathbf{J}^\T \mathbf{J}(\mathbf{x}^{(i )}) ] ^{-1} \right\| \, \| \mathbf{H} (\mathbf{x}^{(i )}) \| 
\leq  \frac{\epsilon_h}{\mu^2} < 1.
\end{split}
\end{equation}

\bibliographystyle{IEEEtran}
\bibliography{refs}

\begin{thebibliography}{10}
\providecommand{\url}[1]{#1}
\csname url@samestyle\endcsname
\providecommand{\newblock}{\relax}
\providecommand{\bibinfo}[2]{#2}
\providecommand{\BIBentrySTDinterwordspacing}{\spaceskip=0pt\relax}
\providecommand{\BIBentryALTinterwordstretchfactor}{4}
\providecommand{\BIBentryALTinterwordspacing}{\spaceskip=\fontdimen2\font plus
\BIBentryALTinterwordstretchfactor\fontdimen3\font minus
  \fontdimen4\font\relax}
\providecommand{\BIBforeignlanguage}[2]{{%
\expandafter\ifx\csname l@#1\endcsname\relax
\typeout{** WARNING: IEEEtran.bst: No hyphenation pattern has been}%
\typeout{** loaded for the language `#1'. Using the pattern for}%
\typeout{** the default language instead.}%
\else
\language=\csname l@#1\endcsname
\fi
#2}}
\providecommand{\BIBdecl}{\relax}
\BIBdecl

\bibitem{Bar-Shalom+Li+Kirubarajan:2001}
Y.~B. Shalom, X.~Li, and T.~Kirubarajan, \emph{Estimation with Applications to
  Tracking and Navigation}.\hskip 1em plus 0.5em minus 0.4em\relax Wiley, 2001.

\bibitem{simo2013Bayesian}
S.~S{\"a}rkk{\"a}, \emph{Bayesian Filtering and Smoothing}.\hskip 1em plus
  0.5em minus 0.4em\relax Cambridge, U.K.: Cambridge Univ. Press, Aug. 2013.

\bibitem{Optimal2017Mallada}
E.~Mallada, C.~Zhao, and S.~Low, ``Optimal load-side control for frequency
  regulation in smart grids,'' \emph{IEEE Trans. Automat. Control}, vol.~62,
  no.~12, pp. 6294--6309, Dec. 2017.

\bibitem{RTS1965}
H.~E. Rauch, F.~Tung, and C.~T. Striebel, ``Maximum likelihood estimates of
  linear dynamic system,'' \emph{AIAA J.}, vol.~3, no.~8, pp. 1445--1450, Aug.
  1965.

\bibitem{Bell1994smoother}
B.~Bell, ``The iterated {K}alman smoother as a {G}auss-{N}ewton method,''
  \emph{SIAM J. Optim.}, vol.~4, no.~3, pp. 626--636, Aug. 1994.

\bibitem{Tropp2010sparse}
J.~A. Tropp and S.~J. Wright, ``Computational methods for sparse solution of
  linear inverse problems,'' \emph{Proc. IEEE}, vol.~98, no.~6, pp. 948--958,
  Jun. 2010.

\bibitem{Carmi2010pseudo}
A.~Carmi, P.~Gurfil, and D.~Kanevsky, ``Methods for sparse signal recovery
  using kalman filtering pseudo-measuremennt norms and quasi-norms,''
  \emph{IEEE Trans. Signal Process.}, vol.~58, no.~4, pp. 2405--2409, Apr.
  2010.

\bibitem{Vaswani2008compressed}
N.~Vaswani, ``Kalman filtered compressed sensing,'' \emph{Proc. IEEE Int. Conf.
  Image Processing (ICIP)}, pp. 893--896, Oct. 2008.

\bibitem{Zachariah2012iterative}
D.~Zachariah, S.~Chatterjee, and M.~Jansson, ``Dynamic iterative pursuit,''
  \emph{IEEE Trans Signal Process}, vol.~60, no.~9, pp. 4967--4972, Sep. 2012.

\bibitem{Farahmand2011sparsity}
S.~Farahmand, G.~Giannakis, and D.~Angelosante, ``Doubly robust smoothing of
  dynamical processes via outlier sparsity constraints,'' \emph{IEEE Trans.
  Signal Process.}, vol.~59, no.~10, pp. 4529--4543, Oct. 2011.

\bibitem{Aravkin2011Laplace}
A.~Y. Aravkin, B.~M. Bell, J.~V. Burke, and G.~Pillonetto, ``An {L}1 laplace
  robust kalman smoother,'' \emph{IEEE Trans. Autom. Control}, vol.~56, no.~12,
  pp. 2898--2911, Dec. 2011.

\bibitem{Aravkin2017Generalized}
A.~Aravkin, J.~V. Burke, L.~Ljung, A.~Lozano, and G.~Pillonetto, ``Generalized
  {K}alman smoothing: Modeling and algorithms,'' \emph{Automatica}, vol.~86,
  pp. 63--86, Dec. 2017.

\bibitem{Simonetto2017Prediction}
A.~Simonetto and E.~Dall'Anese, ``Prediction-correction algorithms for
  time-varying constrained optimization,'' \emph{IEEE Trans. Signal Process.},
  vol.~65, no.~20, pp. 942--952, Oct. 2017.

\bibitem{Charles2011Sparsity}
A.~Charles, M.~Asif, J.~Romberg, and C.~Rozell, ``Sparsity penalties in
  dynamical system estimation,'' in \emph{Proc. 45th Annu. Conf. Inform. Sci.
  Syst. (CISS)}, no. 1--6, Mar. 2011.

\bibitem{Charles2016Dynamic}
A.~S. Charles, A.~Balavoine, and C.~J. Rozell, ``Dynamic filtering of
  time-varying sparse signals via {L}1 minimization,'' \emph{IEEE Trans. Signal
  Process.}, vol.~64, no.~21, pp. 5644--5656, Nov. 2016.

\bibitem{Elad2007analysis}
M.~Elad, P.~Milanfar, and R.~Rubinstein, ``Analysis versus synthesis in signal
  priors,'' \emph{Inv. Probl.}, vol.~23, no.~3, pp. 947--968, Sep. 2007.

\bibitem{Gao2017analysis}
R.~Gao, S.~A. Vorobyov, and H.~Zhao, ``Image fusion with cosparse analysis
  operator,'' \emph{IEEE Signal Process. Lett.}, vol.~24, no.~7, pp. 943--947,
  Jul. 2017.

\bibitem{Yaghoobi2013cosparse}
M.~Yaghoobi, S.~Nam, R.~Gribonval, and M.~E. Davies, ``Constrained overcomplete
  analysis operator learning for cosparse signal modelling,'' \emph{IEEE Trans.
  Signal Process.}, vol.~61, no.~9, pp. 2341--2355, May 2013.

\bibitem{Rubinstein2013cosparse}
R.~Rubinstein, T.~Peleg, and M.~Elad, ``Analysis {K-SVD}: {A} dictionary
  learning algorithm for the analysis sparse model,'' \emph{IEEE Trans. Signal
  Process.}, vol.~62, no.~3, pp. 661--677, Feb. 2013.

\bibitem{Turek2014analysis}
J.~S. Turek, I.~Yavneh, and M.~Elad, ``On {MAP} and {MMSE} estimators for the
  co-sparse analysis model,'' \emph{Digit. Signal Process.}, vol.~28, pp.
  57--74, May 2014.

\bibitem{Hu2012TV}
Y.~Hu and M.~Jacob, ``Higher degree total variation {(HDTV)} regularization for
  image recovery,'' \emph{IEEE Trans. Image Process}, vol.~21, no.~5, pp.
  2559--2571, May 2012.

\bibitem{Chalasani2014proximal}
R.~Chalasani and J.~C. Principe, ``Dynamic sparse coding with smoothing
  proximal gradient method,'' in \emph{in Proc. IEEE Int. Conf. Acoust.,
  Speech, Signal Process.}, May 2014, pp. 7188--7192.

\bibitem{Douglas1992}
J.~Eckstein and D.~Bertsekas, ``On the {D}ouglas-{R}achford splitting method
  and the proximal point algorithm for maximal monotone operators,''
  \emph{Math. Programming}, vol.~55, pp. 293--318, 1992.

\bibitem{Ryu2016}
E.~Ryu and S.~P. Boyd, ``A primer on monotone operator methods,'' \emph{Appl.
  Comput. Math.}, vol.~15, no.~1, pp. 3--43, 2016.

\bibitem{Peaceman1955}
D.~W. Peaceman and H.~H. Rachford, ``The numerical solution of parabolic and
  elliptic differential equations,'' \emph{J. Soc. Indust. Appl. Math.},
  vol.~3, no.~1, pp. 28--41, 1955.

\bibitem{He2014peaceman}
B.~S. He, H.~Liu, Z.~R. Wang, and X.~M. Yuan, ``A strictly contractive
  {P}eaceman–{R}achford splitting method for convex programming,'' \emph{SIAM
  J. Optim.}, vol.~24, no.~3, pp. 1011--1040, 2014.

\bibitem{Goldstein2009Split}
T.~Goldstein and S.~Osher, ``The split {B}regman method for {L}1-regularized
  problems,'' \emph{SIAM J. Imaging Sci.}, vol.~2, no.~2, pp. 323--343, Apr.
  2009.

\bibitem{Boyd2011admm}
S.~Boyd, N.~Parikh, E.~Chu, B.~Peleato, and J.~Eckstein, ``Distributed
  optimization and statistical learning via the alternating direction method of
  multipliers,'' \emph{Foundations and Trends in Machine Learning.}, vol.~3,
  no.~1, pp. 1--122, 2011.

\bibitem{Glowinski2014ADMM}
R.~Glowinski, ``On alternating direction methods of multipliers: A historical
  perspective,'' in \emph{Modeling, Simulation and Optimization for Science and
  Technology}.\hskip 1em plus 0.5em minus 0.4em\relax New York: Springer, 2014,
  pp. 59--82.

\bibitem{Chambolle2011fopd}
A.~Chambolle and T.~Pock, ``A first-order primal-dual algorithm for convex
  problems with applications to imaging,'' \emph{J. Math. Imaging. Vis.},
  vol.~40, no.~1, pp. 120--145, May 2011.

\bibitem{Ziniel2013Dynamic}
J.~Ziniel and P.~Schniter, ``Dynamic compressive sensing of time-varying
  signals via approximate message passing,'' \emph{IEEE Trans. Signal
  Process.}, vol.~61, no.~21, pp. 5270--5284, Jun. 2013.

\bibitem{Shen2006tight}
L.~Shen, M.~Papadakis, I.~A. Kakadiaris, I.~Konstantinidis, D.~Kouri, and
  D.~Hoffman, ``Image denoising using a tight frame,'' \emph{IEEE Trans. Image
  Process}, vol.~15, no.~5, pp. 1254--1263, May 2006.

\bibitem{Courant1943Variational}
R.~Courant, ``Variational methods for the solution of problems with equilibrium
  and vibration,'' \emph{Bull. Amer. Math. Soc.}, vol.~49, pp. 1--23, 1943.

\bibitem{Wang2008vs}
Y.~Wang, J.~Yang, W.~Yin, and Y.~Zhang, ``A new alternating minimization
  algorithm for total variation image reconstruction,'' \emph{SIAM J. Imag.
  Sci.}, vol.~1, no.~3, pp. 248--272, 2008.

\bibitem{Wright2006Numerical}
S.~J. Wright and J.~Nocedal, \emph{Numerical Optimization}.\hskip 1em plus
  0.5em minus 0.4em\relax Springer Verlag,, 2006.

\bibitem{Ouyang2013}
H.~Ouyang, N.~He, L.~Q. Tran, and A.~Gray, ``Stochastic alternating direction
  method of multipliers,'' \emph{Proc. Int. Conf. Mach. Learn.}, vol.~28, pp.
  80--88, Jun. 2013.

\bibitem{Ryu2016monotone}
E.~Ryu and S.~P. Boyd, ``A primer on monotone operator methods,'' \emph{Appl.
  Comput. Math.}, vol.~15, no.~1, pp. 3--43, 2016.

\bibitem{Esser2009Applications}
E.~Esser, ``Applications of {L}agrangian-based alternating direction methods
  and connections to {S}plit-{B}regman computat,'' Appl. Math., Univ.
  California, Los Angeles, techreport 09--31, 2009.

\bibitem{Parikh2013Proxi}
N.~Parikh and S.~Boyd, ``Proximal algorithms,'' \emph{Found. Trends Optim.},
  vol.~1, no.~3, pp. 123--231, 2013.

\bibitem{Anderson:1981}
B.~Anderson and J.~B. Moore, ``Detectability and stabilizability of
  time-varying discrete-time linear systems,'' \emph{SIAM Journal on Control
  and Optimization}, vol.~19, no.~1, pp. 20--32, 1981.

\bibitem{He2000parameter}
B.~S. He, H.~Yang, and S.~L. Wang, ``Alternating direction method with
  self-adaptive penalty parameters for monotone variational inequalities,''
  \emph{J. Optimiz. Theory App.}, vol. 106, no.~2, pp. 337--356, Aug. 2000.

\bibitem{Wright1999Optimization}
J.~Nocedal and S.~J., \emph{Numerical Optimization}.\hskip 1em plus 0.5em minus
  0.4em\relax Springer-Verlag, 1999.

\bibitem{Hong2016Convergence}
M.~Hong, M.~Razaviyayn, and Z.-Q. Luo, ``Convergence analysis of alternating
  direction method of multipliers for a family of nonconvex problems,''
  \emph{SIAM J. Optim.}, vol.~26, no.~1, pp. 337--364, Jan. 2016.

\bibitem{Deng2016convergence}
W.~Deng and W.~Yin, ``On the global and linear convergence of the generalized
  alternating direction method of multipliers,'' \emph{J Sci Comput.}, vol.~66,
  no.~3, pp. 889--916, Mar. 2016.

\bibitem{admm2019convergence}
Y.~Wang, W.~Yin, and J.~Zeng, ``Global convergence of {ADMM} in nonconvex
  nonsmooth optimization,'' \emph{J Sci Comput.}, vol.~78, no.~1, pp. 29--63,
  Jan. 2019.

\bibitem{boyd2004Convex}
S.~Boyd and L.~Vandenberghe, \emph{Convex Optimization}.\hskip 1em plus 0.5em
  minus 0.4em\relax Cambridge Univ. Press, 2004.

\bibitem{Bell1993ekf}
B.~M. Bell and F.~W. Cathey, ``The iterated {K}alman filter update as a
  {G}auss-{N}ewton method,'' \emph{IEEE Trans. Automat. Control}, vol.~38,
  no.~2, pp. 294--297, Feb. 1993.

\bibitem{simo2008RTS}
S.~S{\"a}rkk{\"a}, ``Unscented {R}auch--{T}ung--{S}triebel smoother,''
  \emph{IEEE Trans. Automat. Control}, vol.~53, no.~3, pp. 845--849, Apr. 2008.

\bibitem{Pfister2014ct}
L.~Pfister and Y.~Bresler, ``Tomographic reconstruction with adaptive
  sparsifying transforms,'' \emph{Proc. IEEE Int. Conf. Acoust., Speech Signal
  Process}, pp. 6914--6918, 2014.

\bibitem{Bubba2017ct}
T.~A. Bubba, M.~M{\"a}rz, Z.~Purisha, M.~Lassas, and S.~Siltanen,
  ``Shearlet-based regularization in sparse dynamic tomography,'' \emph{Proc.
  SPIE}, vol. 10394, p. 103940Y, Aug. 2017.

\bibitem{Meaney2018emoji}
A.~Meaney, Z.~Purisha, and S.~Siltanen, ``Tomographic {X}-ray data of 3{D}
  emoji,'' \emph{arXiv preprint arXiv:1802.09397}, 2018.

\bibitem{Castillo2009CT}
R.~Castillo, E.~Castillo, D.~Fuentes, M.~Ahmad, A.~M. Wood, M.~S. Ludwig, and
  T.~Guerrero, ``A reference dataset for deformable image registration spatial
  accuracy evaluation using the {COPD}gene study archive,'' \emph{Phys. Med.
  Biol.}, vol.~58, no.~9, pp. 2861--2877, Apr. 2009.

\end{thebibliography}

\end{document}